\newtheorem{remark}{Remark}
\newtheorem{axiom}{Axiom}
\title[Grading and ranking a large number of candidates]{Grading and ranking a large number of candidates}
\author{Rida Laraki and Estelle Varloot}
\begin{abstract}

It is common that a jury must grade a set of candidates in a cardinal scale such as $\{1,2,3,4,5\}$ or an ordinal scale such as $\{$Great, Good, Average, Bad$\}$. When the number of candidates is very large such as hotels (BOOKING), restaurants (GOOGLE), apartments (AIRBNB), drivers (UBER), or papers (EC), it is unreasonable to assume that each jury member will provide a separate grade for each candidate. Each jury member is more likely to abstain for some candidates, cast a blank vote, or be associated at random, or as a function of its expertise, with only a small subset of the candidates and is asked to grade each of those. 

Extending the classical theory, we study aggregation methods in which a voter will not be eligible to grade all the candidates, and the candidates are not eligible for the same sets of voters. Moreover, each candidate on which they are eligible, the voter will have the choice between: a blank vote, grade the candidate, or abstain. 

Assuming single-peaked preferences over the grades, we axiomatically characterise a broad class of strategy-proof grading mechanisms satisfying axioms such as unanimity, anonymity, neutrality, participation or consistency. Finally, when a strict ranking is necessary (to distinguish let say between two borderline papers in a conference), some tie-breaking rules, extending the leximin and majority judgment, are defined and are shown to be equivalent to some strategy-proof grading functions on a richer space of outcome.

Our paper will propose new rules, called phantom-proxy mechanisms, to aggregate the votes in the examples above or others, which differ from the usual average mark, that are easily manipulable. Moreover, the phantom-proxy are able to reduce the injustices caused by some candidates juries too generous or severe.

\end{abstract}
\begin{document}

\begin{titlepage}

\maketitle

\end{titlepage}

\section{Introduction}

Have you ever waken up a week before a political election day to find there are 78 candidates on the ballot but you don't know much about most of them? You may have an opinion about five of them, and think two are good and three are bad, but you just don't have the time and the relevant information to study them all. If the ballot asks you to strictly rank all the 78 candidates, it is probably much too complicated. If the ballot asks you to rank just the one you know, this is not a good idea unless a clear meaning is given that the two top are good and the remaining three are bad. If the ballot specifies that you only rank the candidates you think are good then, you should only rank two of the five, but then you are denied giving your opinion about the three you think are bad!

The reader might find this unlikely, but this was precisely the situation voters in Australia faced for the 2004 Senate elections in the state of New South Wales. The ballot asks the voters to rank strictly all the candidates, otherwise, the vote is considered invalid. To solve the issue, voters were allowed to “vote above the line” or “vote below the line.” where “Above” means they choose the ranking specified by one party and “Below” means the voter determines their own ranking. This leads most Australians (95\%) to vote above the line, which implies that the outcome is decided by a strategic game played by the parties. We propose a family of alternative solutions where each voter is asked to grade as many candidates as they want to. A method in this family has been adopted in 2020 by Paris for its participatory budget,\footnote{Voters gave a grade to as many projects as they want in a scale of four grades: « Coup de coeur/J’adore, J'aime bien / C'est intéressant,  Pourquoi pas,  Je ne suis pas convaincu ». In 2022, 82 million euros was allocated to 62 projects. About 150.000 voters participated (about 130.000 with paper ballots and 20.000 electronically). All projects received at least 1000 grades (some much more), which is statistically representative, and more informative compared to the previous system where several projects received less than a hundred votes.} where the sum of the grades received by a project are normalized to $100\%$ then majority judgment (MJ) ranks them \cite{BL2011}. 


A participatory democratic initiative was proposed by \href{https://laprimaire.org/election-presidentielle-2017/}{LaPrimaire.org}.\footnote{https://laprimaire.org/election-presidentielle-2017/} About 150.000 French voters participated in a process\footnote{https://laprimaire.org/deroulement/} to nominate a citizen as a candidate for the 2017 French presidential election. An initial slate of about 200 candidates was whittled down to the 12 who were supported by at least 500 voters. Then each voter was asked to evaluate (with a few days delay) five candidates out of the 12 on the scale Excellent, Very Good, Good, Passable, Insufficient. The assignment of five was done randomly\footnote{One of the objectives of the random selection process was to guarantee that all the candidates are evaluated by approximately the same number of voters and that this number is large enough for the results to be statistically representative, which was the case: each candidate received in average 4454 voters, with a minimum on 4372 and a maximum of 4513. See https://articles.laprimaire.org/résultats-du-1er-tour-de-laprimaire-org-c8fe612b64cb } and the twelve ranked by MJ after the $100\%$ normalization. The reason for not asking to vote on all the 12 was to incite them to invest time to make a careful comparative study.\footnote{Each candidate wrote a political program with several documents and videos. See for instance the program of the winning candidate Charlotte Marchandise:  https://laprimaire.org/qualifie/charlotte-marchandise-franquet}  

Motivated by the above applications, our paper studies mechanisms where voters have different, exogenously, rights to grade subsets of the set of candidates. The optimal process to allocate the rights is not studied in this paper and is delegated to a future work. In practice and depending on the application (large or small electorate), it can be done at random as in LaPrimaire.org, as a function of the expertise or of conflict of interest, or as a combination of the previous as in CS conferences. Our mechanisms will allow the voters to abstain or vote blank, as in Paris participatory budgeting.


Following \cite{M1980} and \cite{BL02007,BL2011,Balinski-Laraki} framework, our paper study the grading methods that associates a final grade to each candidate, given the choices of all the voters on their eligible candidates, where the choices a voter has on a candidate are: a blank vote, a grade, or to abstain. We also propose tie-breaking-rules, whenever a strict ranking is needed (borderline papers as in a CS conference).
Assuming single-peaked preferences over the final grades, we characterize a class of strategy-proof grading functions satisfying familiar axioms such as unanimity, anonymity, neutrality, participation, or (variable electorate) consistency,  see \cite{Arrow,BCELD,M988}. But, contrarily to classical theory, there are many ways to define some axioms such as neutrality or anonymity because voters don’t have the same rights and candidates are not eligible for the same voters. 

\textbf{Mains contributions and literature}. Technically, our paper uses and is an extension of \cite{M1980} where the inputs from voters are interpreted as elements of an ordered set of grades. \cite{M1980} interpreted in our model, characterized anonymous and non-anonymous strategy-proof methods when $n$ voters grade one candidate. Our innovations compared to his work are:
\begin{itemize}
    \item We deal with $n$ voters and $m$ candidates where a candidate is eligible for a subset of voters, and a voter may cast a blank vote or abstain for each candidate.
    \item We assume the set of input grades to be smaller than the set of outputs.
    \item We characterize all strategy-proof grading functions combined with other axioms not considered in \cite{M1980} and illustrate our results on the phantom-proxy class.
    \item We propose several formulations of anonymity and show that the definition has an impact on the characterizations. A similar conclusion holds for other axioms.
    \item We extend the grading methods to rank, by associating them to tie-breaking rules, in the spirit of majority judgment and show that the ranking function may be interpreted as a strategy-proof grading function with richer output space. 
    \end{itemize}

Moulin's paper inspired a large literature that obtained characterizations for other domains or proved impossibility results (see, among many others,\cite{BJ83}, \cite{NP2007}, \cite{Barbera} \cite{ICML2016} and \cite{EC2019}).  The later article for example introduces the class of phantom moving mechanisms which are anonymous, neutral and strategy-proof in the budget aggregation problem. Extensions that cover the case of private consumption or capacity constraints have been studied
by \cite{Moulin2017} and \cite{Aziz2019}. \cite{LV2022} extends Moulin's phantom mechanisms to expert aggregation problems where each voter submits a probability distribution (a prediction) over an ordered set.

Operationally and conceptually, our paper extends the majority judgment theory of \cite{BL02007,BL2011}. They introduced and studied a class of grading and ranking methods that avoids Arrow and Condorcet paradoxes, and are resistant to strategic manipulations. In their model, $n$ voters (equally treated) are requested to grade $m$ candidates (equally treated), and the output is a final grade for each candidate and a ranking of the candidate. Our paper extends their main results and methods to situations where voters don't have necessarily the same weights or rights to vote, candidates and voters are not necessarily treated equally, and voters can abstain or cast blank votes.

We also contribute to the literature on incomplete preferences in voting. see \cite{BRM2016} for a survey. For example, \cite{BS2020} approximate Borda and minmax rules in a context with a large electorate where each voter is asked to rank a random subset of $l\geq 2$ candidates, or to provide a ranking of her l most preferred candidates. 
\cite{KL2005} introduced the notions of possible and necessary winners (PW, and NW) for a voting function $f$ when we have access to partial information (a partial ranking for example) and computed the complexity of determining whether a candidate is PW or NW. \cite{BB2011} studied the minimax regret instead of PW or NW. Our problem is quite different: we search for a good rule to aggregate the grades when our partial information is caused by some candidates being ineligible for some voters, and some voters not having any opinion on the candidates they are eligible for and prefer to abstain or vote blank or they have an opinion but prefer to abstain strategically. 

\textbf{Structure of the paper}: Section 2 contains all the notations needed to follow the paper. Section 3 studies the incentive-compatible methods when a voter think of manipulating by changing his input grade, introduces the class of phantom-proxy mechanisms, and illustrates the general characterization of the proxy class. Section 4 studies the incentives for a voter implied by having, for each candidate, options other than grading: blank votes and abstention. We define and link the following properties: how to count Blank Votes (BV), Silent Ignored (SI), Silent Consent (SC), two forms of Participation (P and FP), Jury Determinism (JD), and Strong strategy-proofness. Section 5 characterizes the SP methods that satisfy additional properties such as Unanimity (U), three forms of Neutrality (N, SN and F), two of Anonymity (A and SA), and two of variable electorate Consistency (OC and IC). Section 6 extends each (OC,F) proxy grading function to a ranking function with almost no ties and shows that it can be interpreted as a strategy-proof grading function in a richer output space. Section 7 discusses some extensions and concludes. An appendix contains the missing proofs.

\section{Notations}
There are $3$ main outputs we will be interested in. The final (or aggregate) grade a candidate gets, the tie breaking rule to rank any two candidates that get the same grade, and the final ranking among all the candidates. As voters will have different rights, can abstain or vote blank, we will have to deal with too many subsets. To make it easy for the reader, we put all the notations here.
\begin{itemize}
    \item Voters will be described with small cap letters (e.g $i,j$) and the set of voters is $\mathcal{N}$. The set is assumed finite.
    \item Candidates will be described with capital letters (e.g $I,J$) and the set of candidates is $\mathcal{M}$. 
    \item For a voter $i$, let $\mathcal{C}_i \subseteq \mathcal{M}$ be the finite set of candidates on which voter $i$ is allowed to vote.
    \item $\mathcal{C'}_i \subseteq \mathcal{C}_i$ is the set of candidates voter $i$ provided a grade for. 
    \item For a candidate $J$, let $\mathcal{D}_J \subseteq \mathcal{N}$ be the set of voters that were asked to grade $J$. 
    \item $\mathcal{D''}_J(\textbf{v})$ is the set of voters that gave a grade to candidate $J$ in the voting profile $\textbf{v}$. When the context is clear we will use the shorthand $\mathcal{D''}_J$.
    \item Grades will use Greek letters (e.g $\alpha,\beta$). The set of grades that can be expressed by voters ( inputs) is $\mathcal{A}$. The set of grades that can be provided by the grading function (outputs) is $\mathcal{B}$. As it is the case in practice, we assume $\mathcal{A} \subseteq \mathcal{B}$ and that a total order exists on $\mathcal{B}$. 
    An example is $\mathcal{A}=\{1,2,3,4,5\}$ and $\mathcal{B}=[1,5]$. Without loss of generality we consider that $\{\inf\mathcal{A},\sup\mathcal{A}\} \subseteq \mathcal{B}$.
    \item The \textbf{ballot} of voter $i$ is $v_i$. 
    The notation $v_i(J)$ corresponds to the opinion voter $i$ expressed about candidate $J$. It is the \textbf{vote} of player $i$ for $J$. We also use the shorthand $\textbf{v}(J)$ to describe the multi-set (bag) of \textbf{grades} provided for $J$.
    \item We also need notations to describe situations that do not correspond to a voter supplying a grade to a candidate. As such when $i \not \in \mathcal{C}_i$ we denote his vote as $v_i(J)=\emptyset$ (the absence of a possible expressed opinion). For voters who to submit a blank vote for a given candidate we use the notation $v_i(J)=\otimes$. We will distinguish these from voters who chose to ignore the opportunity to submit a vote, these are referred to as absentees and will be denoted with $v_i(J)=\circ$. We denote $\mathcal{E} = \{\emptyset,\circ,\otimes\}$ all the possible inputs that are not grades. We denote $\mathcal{O} = \mathcal{A} \cup \mathcal{E}$ all the possible submitted votes and the absence of possible vote. 
    \item The voting profile containing all the votes for the election is $\textbf{v} \in \mathcal{O}^{\mathcal{M}\times\mathcal{N}}$. We describe as $\textbf{v}_{-i}$ the voting profile where voter $i$ was removed.
\item The notation $\textbf{v}_{-T}$ refers to the profile obtained from $\textbf{v}$ where all voters $i$ that are in $T\subseteq M$ have their allowed votes replaced by $\otimes$.
    \item For any $\alpha \in \mathcal{O}$. We will also use the notation $v_i[J:=\alpha]$ that denotes the ballot of $i$ where the vote that $i$ provided for candidate $J$ was replaced by $\alpha$ everything else equal. Similarly $\textbf{v}[v_i:=w_i]$ represent the situation voting profile where the ballot of $i$ was replaced by $w_i$ and $\textbf{v}[v_i(J):=\alpha]$ the voting profile where the vote of $i$ was replaced by $\alpha$ everything else equal.
    \item We are therefore looking for an \textbf{grading function} $\varphi : (\mathcal{O})^{\mathcal{M}\times\mathcal{N}} \rightarrow \mathcal{B}^\mathcal{M}$.


\end{itemize}



\section{Strategy-proofness in grading : SP}\label{section SP}
\subsection{The general minmax characterization with the phantom mappings}


We will assume that each voter's objective is to try to make the outcome for any candidate (determined by the inputs and aggregation rule) be as close as possible to her (true) grade for that candidate. Hence, strategy-proof --in grading-- can be described as follows.

\begin{definition}[Strategy-proof in Grading : SP]
A grading function $\varphi: \mathcal{O}^{\mathcal{M} \times \mathcal{N}} \rightarrow (\mathcal{B} \cup \{\emptyset\})^\mathcal{M}$ is strategy-proof (or is incentive compatible) in grading  if for any voter $i$, for any candidate $J \in \mathcal{C}_i$, and for any $v_i$ and $w_i$ such that $v_i(J) \in \mathcal{A}$ and $w_i(J) \in \mathcal{A}$ we have:
\[\varphi(\textbf{v})(J) > v_i(J) \Rightarrow  \varphi(\textbf{v}[v_i:=w_i])(J) \geq \varphi(\textbf{v})(J)\]
\[\varphi(\textbf{v})(J) < v_i(J) \Rightarrow  \varphi(\textbf{v}[v_i:=w_i])(J) \leq \varphi(\textbf{v})(J)\]

\end{definition}

In other words, a voter who graded a candidate shouldn't be able to change the outcome for that candidate closer to his input grade by lying. It is interesting that this definition implies that a voter does not question its impact on the candidates they are not allowed to vote for. We will come back to this question later when we will study stronger notions of SP.


Also, strategy-proofness as defined above does not prevent a vote by $i$ for a candidate $I$ from impacting the grade for on another candidate $J$ if $i$ is not eligible for $J$. However, $i$ cannot impact $J$ if allowed to vote for as the following useful lemma shows.

\begin{lemma}\label{SP lemma}
SP implies that if $i$ graded $I$ and $J$, its vote about $I$ does not impact the outcome for $J$. 
\end{lemma}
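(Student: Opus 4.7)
The plan is to prove this directly from the definition by a two-sided application of SP: once going from $v_i$ to the deviation $w_i = v_i[I := \alpha]$, and once going back from $w_i$ to $v_i$. Since $w_i(J) = v_i(J)$ by construction, the SP hypothesis is available in both directions for candidate $J$.

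More precisely, fix a voter $i$ with $v_i(I), v_i(J) \in \mathcal{A}$ and an arbitrary $\alpha \in \mathcal{A}$. Set $w_i := v_i[I := \alpha]$, so that $w_i(J) = v_i(J)$, and write $r := \varphi(\textbf{v})(J)$ and $r' := \varphi(\textbf{v}[v_i:=w_i])(J)$. The goal is $r = r'$. I would split on the comparison between $r$ and $v_i(J)$: if $r > v_i(J)$, then SP applied at $\textbf{v}$ with deviation $w_i$ yields $r' \geq r$, so in particular $r' > v_i(J) = w_i(J)$; then SP applied at $\textbf{v}[v_i:=w_i]$ with deviation $v_i$ yields $r \geq r'$, hence $r = r'$. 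The case $r < v_i(J)$ is symmetric.

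The slightly more delicate case is $r = v_i(J)$, where SP does not fire in the forward direction. Here I would argue by contradiction: if $r' > v_i(J) = w_i(J)$, applying SP at $\textbf{v}[v_i:=w_i]$ with deviation $v_i$ forces $r \geq r' > v_i(J) = r$, absurd; and analogously $r' < v_i(J)$ leads to $r \leq r' < v_i(J) = r$. Hence $r' = v_i(J) = r$ also in this case. Since $\alpha \in \mathcal{A}$ was arbitrary, the grade given by $i$ to $I$ has no impact on $\varphi(\cdot)(J)$.

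The main potential obstacle is the equality case $r = v_i(J)$, because the SP axiom as stated has strict inequalities on the antecedent; the fix is to invoke SP at the perturbed profile rather than at the original one, which is legitimate precisely because $w_i(J) = v_i(J) \in \mathcal{A}$. No further machinery is needed beyond this forward/backward use of SP.
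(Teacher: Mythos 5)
Your proof is correct and follows essentially the same route as the paper's: a forward application of SP at $\textbf{v}$ combined with a backward application at $\textbf{v}[v_i:=w_i]$, exploiting $w_i(J)=v_i(J)$ to sandwich the two outcomes. Your explicit treatment of the boundary case $\varphi(\textbf{v})(J)=v_i(J)$ by contradiction is a welcome clarification of a step the paper handles only implicitly (by listing the cases where either the original or the perturbed outcome differs from the peak), but it is not a different argument.
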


\begin{proof}
Let $i$ be allowed to vote for both $I$ and $J$. Let $w_i(J) = v_i(J)$ and $w_i(I) \neq v_i(I)$ and everything else equal.

Suppose that $\varphi(\textbf{v})(J) > v_i(J)$. 
\[\varphi(\textbf{v})(J) > v_i(J) \Rightarrow  \varphi(\textbf{v}[v_i:=w_i])(J) \geq \varphi(\textbf{v})(J).\]
Therefore $\varphi(\textbf{v}[v_i:=w_i])(J) > w_i(J)$. It follows that $\varphi(\textbf{v})(J) \geq \varphi(\textbf{v}[v_i:=w_i])(J) \geq \varphi(\textbf{v})(J)$, therefore:
\[\varphi(\textbf{v}[v_i:=w_i])(J) =\varphi(\textbf{v})(J).\]
Conversely for $\varphi(\textbf{w})(J) > w_i(J)$, $\varphi(\textbf{v})(J) < v_i(J)$ and $\varphi(\textbf{w})(J) < w_i(J)$. As such $\varphi(\textbf{v})(J) = \varphi(\textbf{v}[v_i:=w_i])(J)$. 
\end{proof}

We can characterize now all the SP grading functions. The proof is a direct consequence of \cite{M1980} and the above lemma.

\begin{theorem}[SP general characterization]\label{SP theo}
If a grading function $\varphi: \mathcal{O}^{\mathcal{M} \times \mathcal{N}} \rightarrow (\mathcal{B} \cup \{\emptyset\})^\mathcal{M}$ is strategy-proof (SP) then for any $J$ there are $ 2^{\#\mathcal{D}_J}$ functions $\omega_{J,S}^{T} : \mathcal{O}^{\mathcal{M} \times \mathcal{N}} \rightarrow \mathcal{B}$ such that for all $\emptyset \subseteq S \subseteq S'\subseteq T \subseteq \mathcal{D}_J$ we have  $\omega_{J,S}^{T} \leq \omega_{J,S'}^{T}$ and:
\[\forall \textbf{v}, \varphi(\textbf{v})(J)=
\max_{\emptyset \subseteq S \subseteq \mathcal{D''}_J}
    \min (\{v_i(J): i \in S\} \cup\{\omega_{J,S}^{\mathcal{D''}_J}
    (\textbf{v}_{-\mathcal{D''}_J})\})\]
and we will call the $\omega_{J,S}^T$ the "phantom-mappings".
\end{theorem}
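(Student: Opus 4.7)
The plan is to reduce the problem candidate-by-candidate and fiber-by-fiber to \cite{M1980}'s characterization of strategy-proof aggregation on a totally ordered single-peaked domain.

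Fix $J\in\mathcal{M}$. The first step is to sharpen Lemma~\ref{SP lemma}: inspection of its proof shows that the argument uses only $v_i(J)=w_i(J)\in\mathcal{A}$ and never requires $v_i(I)\in\mathcal{A}$. Hence, whenever $i\in\mathcal{D''}_J(\textbf{v})$, the value $\varphi(\textbf{v})(J)$ is independent of $v_i(K)$ for every $K\neq J$, regardless of whether $v_i(K)$ is itself a grade. Consequently $\varphi(\textbf{v})(J)$ depends only on the triple $(T,\,(v_i(J))_{i\in T},\,\textbf{v}_{-T})$, where $T:=\mathcal{D''}_J(\textbf{v})$.

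For each pair $(T,\textbf{v}_{-T})$ with $T\subseteq\mathcal{D}_J$ and with $\textbf{v}_{-T}$ containing no grade for $J$ outside $T$, define
\[f_{T,\textbf{v}_{-T}}:\mathcal{A}^T\to\mathcal{B},\qquad (a_i)_{i\in T}\mapsto\varphi(\textbf{v}')(J),\]
where $\textbf{v}'$ is any profile that agrees with $\textbf{v}_{-T}$ for voters outside $T$ and satisfies $v'_i(J)=a_i$ for $i\in T$ (the sharpened lemma makes the choice of $v'_i(K)$ for $K\neq J$ irrelevant, so $f_{T,\textbf{v}_{-T}}$ is well defined). The SP condition on $\varphi$ transfers verbatim into the usual single-peaked SP condition for $f_{T,\textbf{v}_{-T}}$, because a unilateral deviation by $i\in T$ on her $J$-grade corresponds exactly to changing the coordinate $a_i$. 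Applying \cite{M1980} on each such fiber produces phantoms $\alpha_S(T,\textbf{v}_{-T})\in\mathcal{B}$, indexed by $S\subseteq T$, monotone in $S$, together with the max-min representation. Setting $\omega_{J,S}^T(\textbf{v}_{-T}):=\alpha_S(T,\textbf{v}_{-T})$, and extending $\omega_{J,S}^T$ arbitrarily on inputs outside this admissible regime (where it is never invoked by the outer formula), yields the claimed identity.

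The main obstacle is confirming that Moulin's classical argument survives the mismatched setting $\mathcal{A}\subsetneq\mathcal{B}$ between input and output spaces. This turns out to be benign: in Moulin's construction each phantom $\alpha_S$ is obtained by evaluating $f$ at the extreme profile in which voters in $S$ report $\sup\mathcal{A}$ and the remaining voters in $T$ report $\inf\mathcal{A}$, so $\alpha_S\in\mathcal{B}$ automatically; the required monotonicity $\alpha_S\leq\alpha_{S'}$ and the max-min identity rely only on the total order of $\mathcal{B}$ together with $\{\inf\mathcal{A},\sup\mathcal{A}\}\subseteq\mathcal{B}$, both of which are assumed in Section~2. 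Gluing the per-fiber representations across all $(T,\textbf{v}_{-T})$ then produces the stated formula uniformly in $\textbf{v}$.
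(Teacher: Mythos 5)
Your proof follows essentially the same route as the paper's: fix $J$, freeze the ballots of the voters who did not grade $J$, apply Moulin's characterization on each such fiber, and then let the phantom constants vary as functions of the frozen part. You are in fact somewhat more careful than the paper's one-paragraph argument, since you explicitly justify the well-definedness of the fibers via the sharpened form of Lemma~\ref{SP lemma} and check that the mismatch $\mathcal{A}\subsetneq\mathcal{B}$ is harmless, both of which the paper leaves implicit.
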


\begin{proof}
Let us fix $J$. Let us fix the ballots of all voters that do not provide a grade for $J$. Then by the Moulin's theorem (proposition 3 in the paper \cite{M1980}), 
there exist $2^n$ constants $\alpha_S$ (called phantoms) such that:
\[\varphi(\textbf{v})(J)=\max_{\emptyset \subseteq S \subseteq \mathcal{D''}_J}\min (\{\alpha_S\} \cup \{v_i(J): i \in S\}).\]

It follows that when we no longer consider that the ballots of voter that did not provide a grade for $J$ are fixed, we replace the $\alpha_S$ by functions $\omega_{J,S}^T$. According to lemma \ref{SP lemma}, the $\omega_{J,S}^T$ functions only depend on the ballots of voters that did not grade $J$. As clearly any $\varphi$ of this form is (SP), we have characterised the group of methods.

\end{proof}

The proof looks easy to prove, but only because we are using the non-trivial result of \cite{M1980}. We will refine and simplify in the sequel the maxmin formula by adding axioms until we obtain something similar to the famous median formula of \cite{M1980} for strategy-proof rules in the anonymous case. But before that, we introduce the SP family of phantom-proxy mechanisms, that can be described using the order statistics, and which will be our leading example.

When $\mathcal{A}=\{0,1,2,3,4,5\}$ and $\mathcal{B}=[0,5]$, the restriction in the input space is likely due to a desire to keep the process simple for voters even if in theory the regulator wouldn't have minded giving the voters more freedom. It makes sense to consider that the regulator chose a mechanism in $\Psi :\mathcal{B} \rightarrow \mathcal{B}$ based on its good properties and then restricted the possible input votes to $\mathcal{A}$. It is interesting to note that (SP) mechanisms can be extended to a (SP) mechanism $\Psi :\{\mathcal{B} \cup \mathcal{E}\}^{\mathcal{M} \times \mathcal{N}} \rightarrow (\mathcal{B} \cup \{\emptyset\})^\mathcal{M}$. Any arbitrary extension of the $\omega_{J,S}^T$ phantom mappings will do.  However, we provide most of the characterizations for $\mathcal{A} \subset \mathcal{B}$, unless the characterization is more elegant with $\mathcal{A} = \mathcal{B}$.


\subsection{The phantom-proxy grading functions}
Let us now consider a situation where a group of voters (the jury) is expected to grade a large number of candidates. For example, the grading of the GCSE. 
The regulator may learn for that jury votes what are their criteria regarding the candidates they grade and therefore use that be predict which grade they would give to all the candidates. This can be used to reduce the bias caused by the diversities of the juries and that some of them are more generous than others. In this subsection, we introduce a class of methods that are based on this idea: \textbf{The phantom-proxy mechanisms}.

The concept for these mechanisms are simple. For a given candidate $J$, if a voter $i$ submitted a blank vote or was not allowed to vote ($v_i(J) \in \otimes,\emptyset$) then a proxy-vote based on what we know about the voter and the candidate will be designated to replace his vote. Absentee votes are either provided another proxy or removed from the process entirely. We therefore have for each candidate $J$ a multi-set of grades that each correspond to a different voter, either because they graded $J$ or because we have a proxy vote that represents them. We call this multi-set the \textbf{voting pool} for $J$. Once we have this pool, we can simply select one of its elements in a non-bias way. To do this we use the order functions.

\begin{definition}[Order functions]
The order function $\mu_k : \mathcal{B}^\mathcal{N} \rightarrow \mathcal{B}$ is the function that takes a multi-set (also known as a bag) as its input and returns the $k$-th smallest element of that multi-set.
\end{definition}

\begin{remark}\label{OF remark}
The order functions are the only single-peaked aggregation functions that always output one of their inputs and whose output doesn't change when the inputs are permuted, (\cite{BL2011}, chapter 11)  but this is obvious from \cite{M1980} median anonymous characterization).
\end{remark}

This characterization tell us that order functions are the only one that guarantee to select a vote in the available voting pool without any bias in regards to which voter is associated to which vote.

\begin{definition}[Phantom-proxy mechanisms]
A mechanism $\psi: \mathcal{O}^{\mathcal{M} \times \mathcal{N}} \rightarrow (\mathcal{B} \cup \{\emptyset\})^\mathcal{M}$ is phantom-proxy if for each voter $i$ and each candidate $J$ there is a function $f_{i,J}: \mathcal{O}^\mathcal{N} \rightarrow  (\mathcal{B} \cup \{\emptyset\})$ s.t.
\[v_i(J)\in \mathcal{A}\wedge (\forall J, v_i(J) \in \{\emptyset,\otimes\} \Rightarrow f_{i,J}(v_i)) =\emptyset\]
and a function $g_J: \mathbf{N} \rightarrow \mathbf{N}$ where for all $k$, $0 < g_J(k) \leq k$ such that:
\[\varphi(\textbf{v})(J)=\mu_{g_J(\#\mathcal{D''}_J \cup \mathcal{F}_J(\textbf{v}))}(\textbf{v}(J) \cup \mathcal{F}_J(\textbf{v}))\] 
Where $\mathcal{F}_J(\textbf{v}) = \{f_{i,J}(v_i)| f_{i,J}(v_i) \neq \emptyset\}$ is the multi-set containing the proxy votes for $J$.
\end{definition}

We use the notation $\mathcal{F}_J$ to represents the function that takes a voting profile $\textbf{v}$ and returns $\mathcal{F}_J(\textbf{v})$. 

With an ordinal scale, a well-known phantom-proxy grading function is the so-called majority grade in \cite{BL2011,BL02007}. This is a grading method where each candidate's grade is the smallest median value of the votes it received. It corresponds to the phantom-proxy mechanism where all the $f_{i,J}$ functions are constants equal to $\emptyset$ and the $g_J$ function is $g_J(k)=\lfloor k/2\rfloor$.

With a cardinal scale, an intuitive phantom-proxy grading function $\psi: \mathcal{O}^{\mathcal{M} \times \mathcal{N}} \rightarrow (\mathcal{B} \cup \{\emptyset\})^\mathcal{M}$ would be for all voters to be replaced by the average of the grades they gave and not to have a proxy if they never gave a grade: 
\[\forall v_i, \exists I, v_i(I)\in \alpha \Rightarrow f_{i,J}(v_i) = \dfrac{1}{\#\mathcal{C'}_i}\sum_{J \in \mathcal{D''}_J}v_i(J).\]
\[\forall v_i, \forall I, v_i(I)\in \mathcal{E} \Rightarrow f_{i,J}(v_i) = \emptyset.\]

For example, with $\mathcal{N}=\{x,y,z\}$ and $\mathcal{M}=\{I,J\}$ then for $v_x =(1,\emptyset)$, $v_y =(\emptyset,3)$, $v_y=(2,2)$ with $g_I=\min$ order function and $g_J=\max$ order function, we obtain $\varphi(\textbf{v})=(1,3)$.




\begin{proposition}\label{SP prop}
All phantom proxy-mechanisms $\psi$ are SP. 
\end{proposition}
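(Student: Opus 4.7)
My plan is to reduce strategy-proofness of the phantom-proxy mechanism to the classical single-voter strategy-proofness of the order functions $\mu_k$, invoked already in Remark~\ref{OF remark}. Fix a voter $i$, a candidate $J\in\mathcal{C}_i$, and two ballots $v_i,w_i$ with $v_i(J),w_i(J)\in\mathcal{A}$. I want to compare $\varphi(\textbf{v})(J)$ with $\varphi(\textbf{v}[v_i:=w_i])(J)$ and show that either they coincide, or the second lies on the side of the first opposite to $v_i(J)$.

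First, I would check that the two voting pools
\[\textbf{v}(J)\cup\mathcal{F}_J(\textbf{v}) \quad\text{and}\quad \textbf{v}[v_i:=w_i](J)\cup\mathcal{F}_J(\textbf{v}[v_i:=w_i])\]
differ in exactly one element. Because $v_i(J),w_i(J)\in\mathcal{A}$, the defining clause of the phantom-proxy mechanism forces $f_{i,J}(v_i)=f_{i,J}(w_i)=\emptyset$, so in both profiles voter $i$ contributes through her actual grade and not through a proxy. For every other voter $i'\neq i$, the proxy $f_{i',J}(v_{i'})$ depends only on $v_{i'}$, so it is unchanged, and of course $v_{i'}(J)$ is unchanged. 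Consequently the two pools have the same cardinality, so $g_J$ returns the same index $k$, and the pools agree up to replacing $v_i(J)$ by $w_i(J)$.

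Second, I would invoke the single-voter strategy-proofness of $\mu_k$. Writing $\pi$ for the common part of the pool, we have $\varphi(\textbf{v})(J)=\mu_k(\pi\cup\{v_i(J)\})$ and $\varphi(\textbf{v}[v_i:=w_i])(J)=\mu_k(\pi\cup\{w_i(J)\})$. Suppose $\mu_k(\pi\cup\{v_i(J)\})>v_i(J)$. If $w_i(J)\geq v_i(J)$, monotonicity of $\mu_k$ in a single coordinate gives $\mu_k(\pi\cup\{w_i(J)\})\geq \mu_k(\pi\cup\{v_i(J)\})$. If $w_i(J)<v_i(J)$, then $v_i(J)$ was already strictly below the $k$-th smallest element, so it is among the $k-1$ smallest, and lowering it further keeps it among the $k-1$ smallest; the $k$-th order statistic is still achieved by the same element of $\pi$, giving $\mu_k(\pi\cup\{w_i(J)\})=\mu_k(\pi\cup\{v_i(J)\})$. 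The symmetric argument handles the case $\mu_k(\pi\cup\{v_i(J)\})<v_i(J)$.

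The only delicate step is the first: ensuring that a deviation by $i$ perturbs the pool in exactly one entry. This rests on two properties built into the definition of phantom-proxy, namely that each proxy $f_{i',J}$ is a function of voter $i'$'s own ballot only, and that a voter who grades $J$ contributes through the grade rather than a proxy. Once these are in place the characterization reduces to the well-known SP property of order statistics, which can alternatively be obtained by noting that $\mu_k(\pi\cup\{\cdot\})$ fits the Moulin maxmin form of Theorem~\ref{SP theo} with phantoms determined by $\pi$.
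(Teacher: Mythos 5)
Your proof is correct and takes essentially the same route as the paper's: the paper likewise fixes the proxies (noting $\mathcal{F}_J(\textbf{v})=\mathcal{F}_J(\textbf{w})$ and the unchanged pool size when both grades lie in $\mathcal{A}$) and then reads off SP from the monotonicity of the order statistic under a single-entry change. You merely make explicit the two steps the paper leaves implicit, namely the one-element perturbation of the voting pool and the case analysis behind the inequality $\mu_k(\textbf{v}(J)\cup\mathcal{F}_J(\textbf{v}))\leq\mu_k(\textbf{w}(J)\cup\mathcal{F}_J(\textbf{v}))$.
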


The proof of the proposition is in the appendix (\ref{SP prop proof}). Since all phantom-proxy mechanisms are SP it follows that we can characterize them using the phantom-mapping functions. 

\begin{proposition}[Characterization of phantom-proxy]\label{SP charac prop}
For any $S$ and $T$ and $\textbf{v}$, let $p=g_J(\#T+ \#\mathcal{F}_J(\textbf{v}))$. Let $k= \#S - \#T +p$

\begin{itemize}
    \item If $k \leq 0$ then $\omega_{J,S}^T(\textbf{v}_{-T}) = \inf\mathcal{B}$.
    \item If $k > \#\mathcal{F}_J(\textbf{v})$ then $ \omega_{J,S}^T(\textbf{v}_{-T})  = \sup\mathcal{B}$
    \item Else $\omega_{J,S}^T(\textbf{v}_{-T}) = \mu_{k}(\mathcal{F}_J(\textbf{v}))$.
\end{itemize}
(Proof \ref{SP charc prop proof}.)
\end{proposition}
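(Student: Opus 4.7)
The plan is to verify that plugging the proposed formulas for $\omega_{J,S}^T$ into the max-min characterization of Theorem~\ref{SP theo} reproduces the order statistic $\mu_p(\textbf{v}(J) \cup \mathcal{F}_J(\textbf{v}))$ that defines the phantom-proxy mechanism. Fix a candidate $J$ and a profile $\textbf{v}$, and take $T = \mathcal{D''}_J$ (the genuine use case in the SP characterization; for other $T$ the claimed formula provides a consistent extension). Write $n = \#T$, $f = \#\mathcal{F}_J(\textbf{v})$, $N = n + f$, $p = g_J(N)$, $a = \mu_p(\textbf{v}(J) \cup \mathcal{F}_J(\textbf{v}))$, and sort the proxies as $\phi_1 \leq \cdots \leq \phi_f$. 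For $S \subseteq T$ let $s = \#S$ and $k = s + p - n$.

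For the upper bound $\min(\{v_i(J) : i \in S\} \cup \{\omega_{J,S}^T\}) \leq a$ I would case-split on $k$. When $k \leq 0$ the bound is immediate since $\omega_{J,S}^T = \inf \mathcal{B} \leq a$. When $k > f$, then $s \geq N - p + 1$, yet at most $N - p$ entries of the combined multiset strictly exceed $a$, so some $v_i(J)$ with $i \in S$ must be $\leq a$. When $0 < k \leq f$, $\omega_{J,S}^T = \phi_k$; if the min were strictly larger than $a$, the $s$ voter grades together with the $f - k + 1$ proxies $\phi_k, \ldots, \phi_f$ would all strictly exceed $a$, summing to $s + f - k + 1 = N - p + 1$ entries strictly above $a$, which contradicts the order-statistic count. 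For the matching lower bound I pick $S^* = \{i \in T : v_i(J) \geq a\}$ with $\tau = \#S^*$, and let $\pi$ be the number of proxies $\geq a$; then $\tau + \pi \geq N - p + 1$ by the defining property of the $p$-th order statistic. This counting inequality excludes the first regime (it would force $\pi \geq f + 1$), makes the second regime automatic via $\min_{i \in S^*} v_i(J) \geq a$ and $\omega_{S^*} = \sup \mathcal{B}$, and in the third regime translates, through the identity $f - k^* + 1 = N - p + 1 - \tau$, into $\phi_{k^*} \geq a$, covering also the edge case $S^* = \emptyset$ when $p > n$.

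The main obstacle is keeping the bookkeeping clean between subset sizes, phantom indices $k$, and counts of entries above or below $a$; once the identity $\tau + \pi \geq N - p + 1$ is in hand the three-case analysis is mechanical. I would also verify the monotonicity $\omega_{J,S}^T \leq \omega_{J,S'}^T$ required by Theorem~\ref{SP theo}, which follows at once from the formulas since $k$ is non-decreasing in $\#S$ and $\mu_k(\mathcal{F}_J(\textbf{v}))$ is non-decreasing in $k$, with the two boundary regimes mapping to the extreme constants $\inf \mathcal{B}$ and $\sup \mathcal{B}$.
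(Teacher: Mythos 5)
Your proposal is correct and follows essentially the same route as the paper: plug the proposed formulas into the max--min characterization of Theorem~\ref{SP theo} and verify, by exhibiting a witness subset attaining the value $a=\mu_p(\textbf{v}(J)\cup\mathcal{F}_J(\textbf{v}))$ and bounding all other subsets, that the result is exactly $\psi(\textbf{v})(J)$. Your counting argument for the upper bound is in fact more explicit than the paper's (which merely asserts that every other $S'$ yields a min $\leq\alpha$), and your verification of the monotonicity $\omega_{J,S}^T\leq\omega_{J,S'}^T$ is a step the paper's proof omits but that is needed for the formulas to qualify as phantom-mappings.
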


Intuitively, since the only values that can be selected aside from the inputs in a grading function are the phantom-mapping and in a phantom-proxy mechanism are the proxy votes we have that all phantom-mappings outcomes must be associated to one of the proxy votes or never get selected by the mechanism. Take an example: if $g_J$ is such that we must select the smallest element in the voting pool then if all grades in $\textbf{v}(J)$ are larger or equal to the smallest proxy vote then that proxy vote is selected. Therefore, we must have $\omega_{J,T}^T(\textbf{v}_{-T})=\min\mathcal{F}_J(\textbf{v})$. Otherwise, if there is at least one grade from $\textbf{v}(J)$ that is less than the smallest proxy vote then no proxy votes can be selected as such $\omega_{J,S}^T(\textbf{v}_{-T}) < \inf\mathcal{A}$. Since we need to be certain that $\omega_{J,S}^T \leq \omega_{J,T}^T$ we can select $\omega_{J,S}^T(\textbf{v}_{-T}) =\inf\mathcal{B}$.

\section{Incentives and Impact of Non-Graders}\label{section non graders}

There are numerous ways we can treat blank votes and absentees. This paper will make suggestions we believe make sense when they are to be treated differently. It is important to note that the regulator may choose 
to count them similarly as it is often done in practice where blank votes and absentees are ignored and so, have the same (no) impact on the outcome of the election. 


In political elections, blank vote symbolically represents protest votes. In a grading system however if you want to protest against a candidate, you can just give them a dreadful grade such as Terrible. As such, in a grading model, a blank vote will be interpreted as conscious decision to ask to be removed from the decision process. Think to referees who grade the papers of EC. The member might feel incompetent to express herself on a paper out of her field or might have a conflict of interest. In both cases, the referee might want to be excluded from grading that paper.

On the other hand, an absentee decided to ignore the chance to vote, and so in a sense entrusted the results to the rest of the voters. As such they can be considered to have voted for the outcome since they agree with what the other voters decided. Hence, the absentees will contribute to strengthen out trust in the election (because may be counted as if they voted for the outcome hence as if they follow the judgment of the colleagues who voted), and the blank vote will weaken that trust (because represents the desire to be removed from the election). In real life, examples where the distinction was made exist. In France the "Parti du vote blanc" (blank vote party) was created for those who wished for protest votes to count: if it receives most of the votes, one redo the election with new candidates. In other words, Blank is considered as a candidate who, when elected, all other candidates are rejected and the debate may continue with a new set of candidates.

\subsection{Blank Votes : BV}

Recall that blank votes are represented by $\otimes$ and absentees by $\circ$. In order to better describe the notion of blank vote, let us define the notion of completely ineligible voter. A voter is considered completely ineligible if there does not exist a candidate for which they may vote. Such a voter can be removed from the electorate without any impact on the outcome.

\begin{axiom}[Removing completely ineligible voters]
 A grading function $\varphi$ verifies the axiom if for any $i$ such that $\mathcal{C}_i = \emptyset$, we do not distinguish between $\mathcal{N}$ and $\mathcal{N}$ - \{i\}.
\end{axiom}

Many of the situations our model represents assume that proper grading of the candidates is time consuming. As such we can easily imagine that a lazy voter would wish to be considered completely ineligible.
Similarly, in certain situations a voter $i$ may wish to be considered ineligible to vote for a single candidate $J$. For example, if the voter is aware of a conflict of interest (as in peer review) it would make sense that they can submit a blank vote to inform the regulator that there has been a mistake and that they should not have been granted the right to vote for the candidate. The regulator should be able to adapt by determining what the outcome would had been if he did not granted $i$ the right to vote for $J$. It is immediate that a lazy voter can therefore request to be considered completely ineligible, in other words they can ask to be removed from the process. 

\begin{definition}[Blank votes : BV]
A function $\lambda: \mathcal{O}^{\mathcal{M} \times \mathcal{N}} \rightarrow {(\mathcal{B} \cup \{\emptyset\})}^\mathcal{M}$ respects blank votes (BV) if for all candidates $J$, for all voters $i$, for all $\textbf{v}$ where $v_i(J) =\otimes$, if $\textbf{w}=\textbf{v}[v_i(J):=\emptyset]$ then:
\[\lambda(\textbf{v})=\lambda(\textbf{w}).\]
\end{definition}

Interestingly enough, not being allowed to vote for a candidate is not the same as not being able to affect the candidate. As such unless a voter uses (BV) to make himself ineligible then depending on the situation he may still have an impact on the final outcome. In our current setting, it is perfectly possible for a voter that was not allowed to vote for a candidate to affect that candidate, thanks to the votes she provided to other candidates. This is somewhat counter-intuitive as some of the versions we can suggest for dealing with absentees will however completely remove their ability to affect the candidate.

\begin{theorem}[Blank vote characterization]\label{BV theo}
    A SP grading function $\varphi$ respects blank votes (BV) iff the phantom-mappings $\omega_{J,S}^T$ associated to $\varphi$ respect blank votes. (Proof \ref{BV theo proof}.)
\end{theorem}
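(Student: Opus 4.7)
The plan is to use the max-min representation of Theorem \ref{SP theo} in both directions, reading ``the phantoms respect BV'' as the analogous invariance property of each $\omega_{J,S}^{T}$ under flipping one coordinate of its input profile from $\otimes$ to $\emptyset$.

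($\Leftarrow$) Assume every $\omega_{J,S}^{T}$ respects BV. Fix $\textbf{v}$, a voter $i$, a candidate $J_0$ with $v_i(J_0) = \otimes$, and set $\textbf{w} = \textbf{v}[v_i(J_0) := \emptyset]$. For an arbitrary candidate $J$, I compare the two sides of the formula. The set $\mathcal{D''}_J$ is invariant (neither $\otimes$ nor $\emptyset$ is a grade), and for any $S \subseteq \mathcal{D''}_J$ the multiset $\{v_j(J) : j \in S\}$ is unchanged: if $J \neq J_0$ then $v_i(J)$ itself is untouched, and if $J = J_0$ then $i \notin \mathcal{D''}_J$ so $i \notin S$. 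It only remains to compare the phantom inputs $\textbf{v}_{-\mathcal{D''}_J}$ and $\textbf{w}_{-\mathcal{D''}_J}$. When $i \in \mathcal{D''}_J$, voter $i$'s whole ballot is overwritten by $\otimes$ on both sides and the two inputs coincide; when $i \notin \mathcal{D''}_J$ the two inputs differ only in the $\otimes \to \emptyset$ change of $v_i(J_0)$, which is absorbed by the hypothesis on $\omega_{J,S}^{T}$. Hence $\varphi(\textbf{v})(J) = \varphi(\textbf{w})(J)$ for every $J$.

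($\Rightarrow$) For the converse I recover each $\omega_{J,S}^{T}(\textbf{v}_{-T})$ as an evaluation of $\varphi$ and transport BV through this recovery. The formula in Theorem \ref{SP theo} only ever queries $\omega_{J,S}^{T}$ at profiles in which no voter grades $J$, so it suffices to verify BV on such inputs. Given such a $\textbf{v}_{-T}$, build $\textbf{u}$ by keeping the ballots of voters outside $T$ and setting, for $j \in T$, $u_j(J) = \sup\mathcal{A}$ if $j \in S$, $u_j(J) = \inf\mathcal{A}$ if $j \in T \setminus S$, and $u_j(J') = \otimes$ when $J' \in \mathcal{C}_j \setminus \{J\}$ (else $\emptyset$). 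Then $\mathcal{D''}_J^{\textbf{u}} = T$ and $\textbf{u}_{-T} = \textbf{v}_{-T}$. In the max-min expression for $\varphi(\textbf{u})(J)$, any $S' \not\subseteq S$ contains a voter with grade $\inf\mathcal{A}$, forcing the inner min to $\inf\mathcal{A}$; for $S' \subseteq S$ the inner min equals $\omega_{J,S'}^{T}(\textbf{v}_{-T})$. The monotonicity $\omega_{J,S'}^{T} \leq \omega_{J,S}^{T}$ from Theorem \ref{SP theo} then yields $\varphi(\textbf{u})(J) = \omega_{J,S}^{T}(\textbf{v}_{-T})$.

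Now fix any $i \notin T$ and $J_0$ with $v_i(J_0) = \otimes$ in $\textbf{v}_{-T}$, and let $\textbf{v}'_{-T}$ denote the profile obtained by flipping that entry to $\emptyset$. The correspondingly modified $\textbf{u}' = \textbf{u}[v_i(J_0) := \emptyset]$ satisfies $\textbf{u}'_{-T} = \textbf{v}'_{-T}$, the same recovery gives $\varphi(\textbf{u}')(J) = \omega_{J,S}^{T}(\textbf{v}'_{-T})$, and BV for $\varphi$ gives $\varphi(\textbf{u}')(J) = \varphi(\textbf{u})(J)$. Chaining the three equalities produces $\omega_{J,S}^{T}(\textbf{v}'_{-T}) = \omega_{J,S}^{T}(\textbf{v}_{-T})$, which is BV for the phantom. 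The main delicate point is the recovery step: verifying that the extreme $\sup\mathcal{A}/\inf\mathcal{A}$ grades by voters in $T$ do collapse the max-min exactly to the targeted phantom value. This rests on the monotonicity already built into Theorem \ref{SP theo} together with the mild without-loss-of-generality observation that each phantom can be taken to lie in $[\inf\mathcal{A}, \sup\mathcal{A}]$, since values outside this interval are clipped by the max-min against the admissible voter grades.
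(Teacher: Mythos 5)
Your proof is correct and takes essentially the same route as the paper: your backward direction is the verification the paper labels ``immediate,'' and your forward direction's recovery step (grading $\sup\mathcal{A}$ on $S$ and $\inf\mathcal{A}$ on $T\setminus S$ to force $\varphi(\textbf{u})(J)=\omega_{J,S}^{T}(\textbf{v}_{-T})$, after clipping phantoms to $[\inf\mathcal{A},\sup\mathcal{A}]$) is precisely the paper's Lemma~\ref{useful lemma 2} (together with Lemma~\ref{useful lemma 1}), which the paper's proof invokes rather than re-derives.
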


It is important to realize that the intuitive implication is one direction only. Being able to replace a blank vote $\otimes$ with an ineligibility notification $\emptyset$ does not mean we can replace a ineligibility notification with a blank vote. A voter can demand to be removed from the election but he cannot request permission to join. In other words the notion of blank vote however implies that the regulator has to prepare the mechanism in such a way that he considers the possibility of removing additional vote rights.

In our setting when a voter provides a (BV) to candidate $J$ it usually represents his desire to let the regulator act as if he was not given eligibility to vote for $J$. In terms of phantom-proxies this therefore can describes sufficient trust in the ability of the proxy vote to properly represent them. 

\begin{proposition}\label{BV prop}
A phantom-proxy mechanism $\psi$ respects blank votes (BV) iff the $f_{J,i}$ proxy functions associated to $\psi$ respect blank votes. (Proof \ref{BV prop proof}.)
\end{proposition}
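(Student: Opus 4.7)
The plan is to prove the biconditional in two directions. The forward direction "all $f_{i,J}$ respect BV implies $\psi$ respects BV" is a direct substitution; the converse requires engineering the rest of the electorate's ballots so that a failure of BV at a single proxy manifests as a failure of BV at the mechanism's output.

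For $(\Leftarrow)$, I fix a voter $i$, a candidate $K'$ with $v_i(K') = \otimes$, and set $\textbf{w} = \textbf{v}[v_i(K') := \emptyset]$. Neither $\otimes$ nor $\emptyset$ is a grade, so for every candidate $J$ the multiset $\textbf{v}(J)$ of submitted grades is unchanged, as is $\mathcal{D''}_J$. The only ballot that was modified is $v_i$, and the BV assumption on each $f_{i,J}$ gives $f_{i,J}(v_i) = f_{i,J}(w_i)$, hence $\mathcal{F}_J(\textbf{v}) = \mathcal{F}_J(\textbf{w})$ as multisets. Consequently the voting pool $\textbf{v}(J) \cup \mathcal{F}_J(\textbf{v})$ and its cardinality are identical under the two profiles, so $\mu_{g_J(\cdot)}$ returns the same grade and $\psi(\textbf{v})(J) = \psi(\textbf{w})(J)$ for every $J$.

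For $(\Rightarrow)$, I argue by contrapositive. Suppose some $f_{i,K}$ violates BV: there exist $v_i$ and $K'$ with $v_i(K') = \otimes$ such that $f_{i,K}(v_i) \neq f_{i,K}(w_i)$, where $w_i = v_i[K' := \emptyset]$. I then extend $v_i$ to a full profile by populating the remaining voters of $\mathcal{D}_K$ with grades in $\{\inf\mathcal{B}, \sup\mathcal{B}\}$, choosing how many of each so that, after sorting the $K$-pool, the rank selected by $g_K$ lands exactly on the slot occupied by $i$'s proxy. If both $f_{i,K}(v_i)$ and $f_{i,K}(w_i)$ lie in $\mathcal{B}$, the two pools have the same size and the same non-proxy entries, so $\mu_{g_K(\cdot)}$ returns the two distinct proxy values. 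If exactly one of $f_{i,K}(v_i)$ and $f_{i,K}(w_i)$ is $\emptyset$, the pool sizes differ by $1$ and I tune the numbers of flanking $\inf\mathcal{B}$ and $\sup\mathcal{B}$ votes so that the corresponding selected order statistics on the two sides are distinct. In either case $\psi(\textbf{v})(K) \neq \psi(\textbf{w})(K)$, contradicting BV of $\psi$.

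The main obstacle is the case in $(\Rightarrow)$ where one proxy is $\emptyset$ and the other is not: the shift in pool size changes the rank $g_K$ selects, so the flanker counts must be chosen jointly across the two profiles. A cleaner alternative is to invoke Theorem \ref{BV theo} to reduce BV of $\psi$ to BV of every phantom-mapping $\omega_{J,S}^T$, and then apply the closed form of Proposition \ref{SP charac prop}: since $\omega_{J,S}^T$ is (up to saturation at $\inf\mathcal{B}$ or $\sup\mathcal{B}$) the $k$-th order statistic of $\mathcal{F}_J(\textbf{v})$, requiring BV of all the $\omega_{J,S}^T$ across $S$ forces the multiset $\mathcal{F}_J(\textbf{v})$ itself to be BV-invariant, and by isolating a single voter's contribution, each $f_{i,J}$ must also be BV-invariant.
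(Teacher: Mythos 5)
Your $(\Leftarrow)$ direction is clean and correct: swapping $\otimes$ for $\emptyset$ changes no grade multiset $\textbf{v}(J)$ and no $\mathcal{D''}_J$, and BV of each $f_{i,J}$ freezes every $\mathcal{F}_J$, so every voting pool and hence every output is unchanged. For $(\Rightarrow)$, your main construction diverges from the paper: the paper's proof is literally the "cleaner alternative" you name at the end, namely compose Theorem~\ref{BV theo} (BV of $\psi$ is equivalent to BV of the phantom-mappings) with Proposition~\ref{SP charac prop} (the phantom-mappings of a phantom-proxy are, up to saturation, order statistics of $\mathcal{F}_J(\textbf{v})$), so that BV of all the $\omega_{J,S}^T$ pins down the multiset $\mathcal{F}_J$ and hence each $f_{i,J}$. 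Your direct flanker argument is more elementary and self-contained, which is a genuine plus since the paper's one-liner leans on two prior results; the cost is that you must engineer profiles by hand.

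That engineering is where the gap sits. As written, the flankers are "grades in $\{\inf\mathcal{B},\sup\mathcal{B}\}$", but voters may only submit grades from $\mathcal{A}\subseteq\mathcal{B}$, and a proxy value can lie outside $[\inf\mathcal{A},\sup\mathcal{A}]$; combined with a rigid $g_K$ (say $g_K(k)=k$ for all $k$ while $f_{i,K}(v_i)<\inf\mathcal{A}$), no choice of flanker counts makes the selected rank land on the proxy slot, and you already concede that the $\emptyset$-versus-$\mathcal{B}$ case needs a joint tuning you do not carry out. Both problems vanish if, instead of flanking, you silence the rest of the electorate: give every voter $j\neq i$ a ballot with no grades at all, so that $f_{j,K}(v_j)=\emptyset$ by the definition of a phantom-proxy and the $K$-pool is exactly $\{f_{i,K}(v_i)\}$ under $\textbf{v}$ and $\{f_{i,K}(w_i)\}$ under $\textbf{w}$. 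Then $\psi(\cdot)(K)$ equals the lone proxy (any order statistic of a singleton is that element), or $\emptyset$ when the pool is empty, and the two outputs differ whenever the two proxies do. With that one repair your direct route goes through in all cases.
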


An example of a phantom-proxy mechanism that does not verify (BV) would be if for a given $i$ and $J$ we had $f_{i,J} = med(\{v_i(J) : J \in \mathcal{C'}_i\})$ when $v_i(J) =\otimes$ and $f_{i,J}=\emptyset$ when $v_i(J)=\emptyset$.

\subsection{Ignoring the absentees : SI}
One of our options for dealing with absentee voters is simply to ignore them completely. In other words, if a voter $i$ abstained for $J$ the regulator proceeds for $J$ as if the voter $i$ never existed from the perspective of $J$. This is quite different from (BV) where the outcome for $J$ can still take in account the way $i$ voted elsewhere. In particular when considering phantom-proxy mechanism since the voter did not provide a blank vote, we cannot trust that he is willing to allow a proxy to represent him and as such it make sense not to. Therefore since we do not know what to do with the absentee voter we might as well just remove him from the process.

\begin{definition}[Who is silent is ignored : SI]
A grading function $\varphi$ verifies the "who is silent is ignored" rule if for any candidate $J$, any voter $i$ and any $\alpha \in \mathcal{A}$ we have:
\[\forall \textbf{v} \in \mathcal{O}^{\mathcal{M}\times \mathcal{N}}, v_i(J)=\circ \Rightarrow  \varphi(\textbf{v})(J) = \varphi(\textbf{v}[\forall J; v_i(J):=\emptyset])(J).\]
\end{definition}

\begin{theorem}[Ignoring absentees characterization] \label{SI theo}
    A SP grading function $\varphi$ verifies the "who is silent is ignored" (SI) property iff we can select the phantom mappings $\omega_{J,S}^T$ such that they verify (SI). (Proof \ref{SI theo proof}.)
\end{theorem}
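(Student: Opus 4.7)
The plan is to push the SI condition through the max-min formula from Theorem \ref{SP theo}, in both directions. First I would fix the right analogue of SI for a phantom mapping: $\omega_{J,S}^T$ satisfies SI if, for every voter $i$ with $i\notin T$ and $u_i(J)=\circ$, replacing $v_i$ everywhere by $\emptyset$ in the argument leaves $\omega_{J,S}^T$ unchanged. Note that whenever $v_i(J)=\circ$ we automatically have $i\notin \mathcal{D''}_J$, so this condition on $\omega$ only speaks about voters outside $T$, which is exactly where $\omega$ depends on the profile.

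For the easy direction (SI of phantoms $\Rightarrow$ SI of $\varphi$): fix $\textbf{v},J,i$ with $v_i(J)=\circ$ and let $\textbf{v}'=\textbf{v}[\forall K, v_i(K):=\emptyset]$. Then $T:=\mathcal{D''}_J(\textbf{v})=\mathcal{D''}_J(\textbf{v}')$, the grades $\{v_j(J):j\in S\}$ are unaltered for every $S\subseteq T$, and $\textbf{v}_{-T}$ and $\textbf{v}'_{-T}$ differ only in the coordinates of $i\notin T$. By SI of the phantoms, $\omega_{J,S}^T(\textbf{v}_{-T})=\omega_{J,S}^T(\textbf{v}'_{-T})$ for every $S$, and the max-min formula gives $\varphi(\textbf{v})(J)=\varphi(\textbf{v}')(J)$.

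For the forward direction I use a canonical extraction. Given an SP grading function $\varphi$, for each $T\subseteq\mathcal{D}_J$ and each $S\subseteq T$ define
\[\omega_{J,S}^T(\textbf{u}):=\varphi(\hat{\textbf{u}}^{S,T})(J),\]
where $\hat{\textbf{u}}^{S,T}$ is obtained from $\textbf{u}$ by setting the $J$-coordinate of every voter $i\in S$ to $\sup\mathcal{A}$, the $J$-coordinate of every voter $i\in T\setminus S$ to $\inf\mathcal{A}$, and leaving the rest unchanged. Exactly as in the proof of Theorem \ref{SP theo}, a case analysis on $S'\subseteq T$ shows this choice satisfies the required monotonicity $\omega_{J,S}^T\le \omega_{J,S'}^T$ for $S\subseteq S'$ and reproduces $\varphi$ through the max-min formula. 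Now assume $\varphi$ satisfies SI and take $\textbf{u}$ with $u_i(J)=\circ$. Since $v_i(J)=\circ$ forces $i\notin\mathcal{D''}_J\supseteq T$, the reassignment defining $\hat{\textbf{u}}^{S,T}$ never touches $v_i$, so $\hat{u}^{S,T}_i(J)=\circ$ still. Applying SI of $\varphi$ at the voter $i$ and the candidate $J$ on the profile $\hat{\textbf{u}}^{S,T}$ yields
\[\omega_{J,S}^T(\textbf{u})=\varphi(\hat{\textbf{u}}^{S,T})(J)=\varphi(\hat{\textbf{u}}^{S,T}[\forall K, v_i(K):=\emptyset])(J)=\omega_{J,S}^T(\textbf{u}[\forall K, u_i(K):=\emptyset]),\]
which is precisely SI for $\omega_{J,S}^T$.

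The main obstacle is bookkeeping: one has to check that the reassignment performed inside $T$ to extract the phantom never conflicts with the SI hypothesis, which only speaks about voters outside $T$, and that the ``$-T$'' operation commutes with the reassignment (both true because $i\notin T$ whenever $v_i(J)=\circ$). Everything else is an application of Moulin's median formula as already used in the proof of Theorem \ref{SP theo}.
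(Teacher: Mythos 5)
Your proof is correct and follows essentially the same route as the paper: realize each phantom value $\omega_{J,S}^T$ as an actual outcome of $\varphi$ by completing the profile on $T$ with extreme grades (this is precisely the completion underlying Lemma~\ref{useful lemma 2}, which the paper's one-line argument invokes), transfer SI from $\varphi$ to these realized values using that the abstaining voter $i$ lies outside $T$ so the two profile operations commute, and obtain the converse directly from the max-min formula. Your write-up simply makes explicit the canonical extraction and the bookkeeping that the paper leaves implicit.
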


\begin{proposition}\label{SI prop}
A phantom-proxy mechanism $\psi$ verifies the "who is silent is ignored" (SI) property iff for all $i,J$ we have:
\[\forall v_i, v_i(J) = \circ \Rightarrow f_{i,J}(v_i)= \emptyset.\] (Proof \ref{SI prop proof}.)
\end{proposition}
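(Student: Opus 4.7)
The proposition is an iff. For both directions it suffices to inspect the multiset $\textbf{v}(J)\cup\mathcal{F}_J(\textbf{v})$ fed into $g_J$, since by the phantom-proxy formula $\psi(\cdot)(J)$ depends on its argument only through this pool. For the easy direction, assume $f_{i,J}(v_i)=\emptyset$ whenever $v_i(J)=\circ$, fix $\textbf{v}$, $i$, $J$ with $v_i(J)=\circ$ and let $\textbf{v}'$ be the profile obtained by setting $v_i(J'):=\emptyset$ for every $J'$. Voter $i$ contributes no grade to $\textbf{v}(J)$ in either profile since $\circ,\emptyset\notin\mathcal{A}$; voter $i$'s proxy is $\emptyset$ in $\textbf{v}$ by hypothesis and in $\textbf{v}'$ by the all-$\emptyset$ clause of the phantom-proxy definition; all other voters have the same ballots, so the same grades and proxies. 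The pool is unchanged and SI follows.

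For the converse I would argue by contrapositive: given some $v_i$ with $v_i(J)=\circ$ but $f_{i,J}(v_i)=\beta\in\mathcal{B}$, construct a profile violating SI. Place voter $i$ with this ballot, set every other voter to the all-$\emptyset$ ballot (so they contribute nothing to either the grades for $J$ or to $\mathcal{F}_J$), and add one auxiliary grader $k\in\mathcal{D}_J$ submitting a grade $\gamma\in\mathcal{A}$ so that the pool remains non-empty after the SI transformation. The pool for $J$ is then $\{\gamma,\beta\}$ in $\textbf{v}$ and $\{\gamma\}$ in $\textbf{v}'$, giving outputs $\mu_{g_J(2)}(\{\gamma,\beta\})$ and $\gamma$. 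I pick $\gamma$ on the side of $\beta$ opposite to the index selected by $g_J(2)$: $\gamma>\beta$ if $g_J(2)=1$ and $\gamma<\beta$ if $g_J(2)=2$, so that $\beta$ is the selected order statistic in $\textbf{v}$ and the two outputs disagree.

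The only subtle point is the edge case in which the required $\gamma$ does not exist in $\mathcal{A}$, i.e.\ when $\beta$ lies outside $[\inf\mathcal{A},\sup\mathcal{A}]$ (for example $g_J(2)=1$ with $\beta\geq\sup\mathcal{A}$, and the symmetric case). I would then enlarge the auxiliary electorate with several voters submitting a common grade, picking a pool size $n$ so that $g_J(n+1)$ hits the position of $\beta$ in the sorted pool while $g_J(n)$ on the $\beta$-free pool still returns the common grade; in the limiting case where voter $i$ is the sole contributor, the pool $\{\beta\}$ against an empty pool yields outputs $\beta$ and $\emptyset$, which already violates SI. Since $0<g_J(k)\leq k$ leaves only finitely many choices for $g_J$ at each pool size, this case analysis is short and always produces a violating profile, which is the one real obstacle in the argument.
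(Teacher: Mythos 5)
Your argument is correct in substance, but it takes a genuinely different route from the paper. The paper dispatches this proposition in one line by combining Theorem \ref{SI theo} (an SP function is SI iff its phantom-mappings can be chosen SI) with Proposition \ref{SP charac prop} (the phantom-mappings of a phantom-proxy are order statistics of $\mathcal{F}_J(\textbf{v})$, clamped at $\inf\mathcal{B}$ and $\sup\mathcal{B}$): the phantom-mappings are SI exactly when $\mathcal{F}_J$ is unchanged by the $\circ\mapsto\emptyset$ replacement, which forces $f_{i,J}(v_i)=\emptyset$ for absentees. You instead work directly with the voting pool $\textbf{v}(J)\cup\mathcal{F}_J(\textbf{v})$ and, for the converse, build an explicit violating profile; this is more elementary and self-contained, at the price of the edge-case analysis you correctly flag (a proxy $\beta$ outside $[\inf\mathcal{A},\sup\mathcal{A}]$ may never be the selected order statistic in any non-empty pool, e.g.\ $g_J\equiv 1$ with $\beta\geq\sup\mathcal{A}$, so the singleton-pool-versus-empty-pool fallback is genuinely needed, and it does close the argument since an empty pool outputs $\emptyset\neq\beta$). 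Two small repairs: (i) you cannot set the other voters in $\mathcal{D}_J$ to the all-$\emptyset$ ballot, since $\emptyset$ encodes ineligibility and those voters are eligible for $J$; set their eligible votes to $\otimes$ instead, which still forces their proxies to $\emptyset$ by the all-$\{\emptyset,\otimes\}$ clause of the phantom-proxy definition and leaves $f_{i,J}(v_i)=\beta$ untouched because each proxy depends only on its own voter's ballot; (ii) when enlarging the pool you should say explicitly that if no $n$ with $g_J(n+1)=n+1$ (resp.\ $g_J(n+1)=1$) exists, you fall back to the singleton pool, since otherwise "picking a pool size so that $g_J(n+1)$ hits the position of $\beta$" is not always possible.
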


\subsection{Silent consent rule : SC}
Our basic notion for an absentee is based on the expression "who stays silent consents" and depicts the fact that a voter that abstained might as well have voted the outcome. In other words, we should be able to act as if the outcome for $J$ when $i$ abstains for $J$ is the same as if $i$ voted the said outcome. While the intuition for the "silent consent rule" is relatively straightforward, it also implies that voter $i$ can vote for the outcome he would have obtained and this is not necessarily the case. As such let us start with the awkward definition in which we can only consent when the outcome was in $\mathcal{A}$ and its characterization before moving on to the more intuitive approach where we extend the set of possible grades from $\mathcal{A}$ to $\mathcal{B}$ so that our absentee voter can vote any possible outcome.

\begin{definition}[Who is silent consents rule : SC]
A grading function $\varphi$ verifies the "who is silent consents" rule if for any candidate $J$, any voter $i$ and any $\alpha \in \mathcal{A}$ we have:
\[\forall \textbf{v} \in \mathcal{O}^{\mathcal{M}\times \mathcal{N}}, v_i(J)=\circ \wedge \varphi(\textbf{v})(J) = \alpha \Rightarrow  \varphi(\textbf{v}[v_i(J):=\alpha])(J) = \varphi(\textbf{v})(J).\]
\end{definition}

\begin{lemma}[SC characterization]\label{SC theo lemma}
An SP $\varphi: \mathcal{O}^{\mathcal{M} \times \mathcal{N}} \rightarrow \mathcal{B}^\mathcal{M}$ function verifies the SC property if for all candidates $J$, all sets $\emptyset \subseteq S \subseteq T \subseteq \mathcal{D}_J$, all candidates $i \not \in T$ and all profiles $\textbf{v}$, the phantom-mappings verify that for every $\alpha \in \mathcal{A}$ and $\forall \textbf{v}$:
\[\omega_{J,S}^T(\textbf{v}_{-T}) \leq \alpha \leq \omega_{J,T}^T(\textbf{v}_{-T})\Rightarrow \omega_{J,S}^{T\cup \{i\}}(\textbf{v}_{-T \cup \{i\}})  \leq \alpha\]
\[\omega_{J,\emptyset}^T(\textbf{v}_{-T})\leq \alpha \leq \omega_{J,S}^T(\textbf{v}_{-T})   \Rightarrow \alpha \leq \omega_{J,S\cup \{i\}}^{T\cup \{i\}}(\textbf{v}_{-T \cup \{i\}}) \] (Proof \ref{SC theo lemma proof}.)
\end{lemma}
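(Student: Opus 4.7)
My approach is to expand the max-min characterization of Theorem~\ref{SP theo} on both $\textbf{v}$ and $\textbf{v}'=\textbf{v}[v_i(J):=\alpha]$, using that the two stated conditions are precisely about the two phantom-mapping evaluations that appear in those two formulas. Fix $v_i(J)=\circ$ and $\alpha=\varphi(\textbf{v})(J)\in\mathcal{A}$, write $T=\mathcal{D''}_J(\textbf{v})$ and $T'=\mathcal{D''}_J(\textbf{v}')=T\cup\{i\}$, and note that $\textbf{v}'_{-T'}=\textbf{v}_{-T\cup\{i\}}$: both profiles blank out exactly the ballots of $T\cup\{i\}$, so the right-hand sides of Conditions~1 and~2 really are the phantom mappings that act inside the new max-min. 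Moreover, the antecedents $\omega_{J,\emptyset}^T(\textbf{v}_{-T})\le\alpha\le\omega_{J,T}^T(\textbf{v}_{-T})$ hold automatically for any $\alpha=\varphi(\textbf{v})(J)$: the lower bound is obtained at $S=\emptyset$ in the outer max; the upper bound comes from the phantom monotonicity $\omega_{J,S}^T\le\omega_{J,T}^T$, which makes every inner min at most $\omega_{J,T}^T(\textbf{v}_{-T})$.

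\textbf{Upper bound $\varphi(\textbf{v}')(J)\le\alpha$.} Let $S'\subseteq T'$. If $i\in S'$, then $v'_i(J)=\alpha$ lies in the new min set, so the min is $\le\alpha$. Otherwise $S'\subseteq T$, and I split on $\omega_{J,S'}^T(\textbf{v}_{-T})$ versus $\alpha$: when $\omega_{J,S'}^T(\textbf{v}_{-T})\le\alpha$, Condition~1 yields $\omega_{J,S'}^{T'}(\textbf{v}'_{-T'})\le\alpha$ and the new min is $\le\alpha$; when $\omega_{J,S'}^T(\textbf{v}_{-T})>\alpha$, the original inner min at $S'$ being $\le\alpha$ (it is dominated by the outer max) forces some $v_j(J)\le\alpha$ with $j\in S'$, and this same grade lies in the new min set, again bounding it by $\alpha$.

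\textbf{Lower bound and conclusion.} Pick $S^*\subseteq T$ that attains the original max, so $\alpha=\min(\{v_j(J):j\in S^*\}\cup\{\omega_{J,S^*}^T(\textbf{v}_{-T})\})$; in particular $\omega_{J,S^*}^T(\textbf{v}_{-T})\ge\alpha$ and $v_j(J)\ge\alpha$ for every $j\in S^*$. Applying Condition~2 at $S=S^*$ then gives $\omega_{J,S^*\cup\{i\}}^{T'}(\textbf{v}'_{-T'})\ge\alpha$, so evaluating the new max-min at the index $S^*\cup\{i\}$ produces an inner min whose every element ($v_j(J)$ for $j\in S^*$, the value $\alpha=v'_i(J)$, and the phantom) is $\ge\alpha$, yielding $\varphi(\textbf{v}')(J)\ge\alpha$. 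Combined with the upper bound, SC holds. The only real subtlety is the bookkeeping that identifies $\textbf{v}'_{-T'}$ with $\textbf{v}_{-T\cup\{i\}}$ and checks that the antecedents of Conditions~1 and~2 are free of charge; once this is in place, everything else is a direct unfolding of the maxmin formula.
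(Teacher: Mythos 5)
Your proof is correct and follows essentially the same route as the paper's sufficiency argument: unfold the max-min formula of Theorem~\ref{SP theo} on $\textbf{v}$ and on $\textbf{v}[v_i(J):=\alpha]$, observe that the antecedents of the two conditions hold automatically at the relevant sets, and use Condition~1 to bound every term of the new maximum by $\alpha$ and Condition~2 to exhibit one term equal to $\alpha$ (your term-by-term case split on whether $i\in S'$ is, if anything, slightly more careful than the paper's). The paper's appendix additionally proves the converse (necessity of the two conditions), but since the lemma is stated as an ``if,'' your one-directional argument establishes exactly what is claimed.
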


As observed above, an SP mechanism can always be extended so that $\mathcal{A}= \mathcal{B}$. In that case we have.

\begin{theorem}[Silent consent characterization]\label{SC theo}
When $\mathcal{A}= \mathcal{B}$, an SP grading function $\varphi$ verifies the silent consent rule (SC) if for all candidates $J$, all sets $\emptyset \subset S \subset T \subseteq \mathcal{D}_J$, all candidates $i \not \in T$ and all profiles $\textbf{v}$, the phantom-mappings verify:
\[\omega_{J,S}^{T\cup \{i\}}(\textbf{v}_{-T\cup \{i\}}) \leq \omega_{J,S}^{T}((\textbf{v}[v_i(J):=\circ]_{-T})\leq \omega_{J,S \cup \{i\}}^{T\cup \{i\}}(\textbf{v}_{-T\cup \{i\}}).\]
(Proof \ref{SC theo proof}.)
\end{theorem}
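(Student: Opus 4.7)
The plan is to leverage Lemma \ref{SC theo lemma}, which already encodes SC as two conditional inequalities on the phantom-mappings holding for every $\alpha\in\mathcal{A}$. When $\mathcal{A}=\mathcal{B}$, we can substitute phantom-mapping values for $\alpha$; the two implications then collapse into the cleaner double inequality of the theorem. The main bookkeeping subtlety is distinguishing $\textbf{v}_{-T}$, $\textbf{v}[v_i(J):=\circ]_{-T}$ and $\textbf{v}_{-T\cup\{i\}}$: in the SC regime ($v_i(J)=\circ$) the substitution $v_i(J):=\circ$ is a no-op, but it is what allows the theorem's inequality to be stated as a universal statement over all profiles.

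For the ``only if'' direction, assume SC. Apply Lemma \ref{SC theo lemma} to the profile $\tilde{\textbf{v}}=\textbf{v}[v_i(J):=\circ]$ with the substitution $\alpha=\omega_{J,S}^T(\tilde{\textbf{v}}_{-T})$. The monotonicity $\omega_{J,\emptyset}^T \leq \omega_{J,S}^T \leq \omega_{J,T}^T$ from Theorem \ref{SP theo} ensures both antecedents of the lemma are satisfied, and the consequents, combined with the identity $\tilde{\textbf{v}}_{-T\cup\{i\}}=\textbf{v}_{-T\cup\{i\}}$ (removing $i$ overrides the prior edit of $v_i(J)$), give exactly the double inequality of the theorem.

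For the ``if'' direction, assume the double inequality and fix a profile $\textbf{v}$ with $v_i(J)=\circ$. Set $\alpha=\varphi(\textbf{v})(J)\in\mathcal{B}=\mathcal{A}$ and $T=\mathcal{D''}_J(\textbf{v})$; since $v_i(J)=\circ$ we have $\textbf{v}[v_i(J):=\circ]=\textbf{v}$, so the hypothesis simplifies to
\[
\omega_{J,S}^{T\cup\{i\}}(\textbf{v}_{-T\cup\{i\}})\leq \omega_{J,S}^{T}(\textbf{v}_{-T})\leq \omega_{J,S\cup\{i\}}^{T\cup\{i\}}(\textbf{v}_{-T\cup\{i\}})
\]
for every $S\subseteq T$. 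We verify SC by evaluating the max--min formula of Theorem \ref{SP theo} on $\textbf{v}'=\textbf{v}[v_i(J):=\alpha]$, for which $\mathcal{D''}_J(\textbf{v}')=T\cup\{i\}$. To bound $\varphi(\textbf{v}')(J)\leq\alpha$: any $S'\subseteq T\cup\{i\}$ either contains $i$ (so $\alpha$ itself enters the min-set) or lies in $T$, in which case the LHS of the double inequality lets us replace the new phantom by the original one, whose min with $\{v_k(J):k\in S'\}$ was already $\leq\alpha$ by the original maximum. For $\varphi(\textbf{v}')(J)\geq\alpha$, pick a maximizer $S^\star$ of the original max--min and use $S'=S^\star\cup\{i\}$: the RHS of the double inequality pushes the new phantom up to at least the original one (hence $\geq\alpha$), the newly added vote $\alpha$ is trivially $\geq\alpha$, and the other graded votes in $S^\star$ already exceeded $\alpha$ (being part of the original optimal min). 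Both bounds combine to $\varphi(\textbf{v}')(J)=\alpha$, which is SC.
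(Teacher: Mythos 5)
Your proof is correct and follows essentially the same route as the paper: the paper's own proof simply states that the theorem is an immediate consequence of Lemma \ref{SC theo lemma}, and your argument supplies the details of that reduction, with the key observation being that $\mathcal{A}=\mathcal{B}$ lets you substitute $\alpha=\omega_{J,S}^{T}(\cdot)$ into the lemma's implications to collapse them into the double inequality. Your converse direction re-verifies SC directly from the max--min formula, which works but is slightly more laborious than needed, since the double inequality already implies both conditional implications of Lemma \ref{SC theo lemma} by monotonicity of the phantom-mappings in $S$.
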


\begin{corollary}
    When $\mathcal{A}=\mathcal{B}$, An SP grading function $\varphi$ verifies SI and SC iff its associated phantom-mappings $\omega_{J,S}^T$ verify:
    \[\omega_{J,S}^{T\cup \{i\}}(\textbf{v}_{-T}) \leq \omega_{J,S}^{T}((\textbf{v}_{-T})_{-i}) = \omega_{J,S}^{T}((\textbf{v}_{-T})_{-i},v_i[J:=\circ])\leq \omega_{J,S \cup \{i\}}^{T\cup \{i\}}(\textbf{v}_{-T}).\]
\end{corollary}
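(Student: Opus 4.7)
The corollary is essentially a direct gluing of Theorems~\ref{SI theo} and~\ref{SC theo}, so the plan is to combine the two phantom‑mapping characterisations and verify that the various shorthand reductions of the profile agree under SP+SI+SC.

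For the forward direction ($\Rightarrow$), assume $\varphi$ is SP, SI, and SC. Because we are in the case $\mathcal{A}=\mathcal{B}$, Theorem~\ref{SC theo} applies and yields
\[
\omega_{J,S}^{T\cup\{i\}}(\textbf{v}_{-T\cup\{i\}}) \;\leq\; \omega_{J,S}^{T}(\textbf{v}[v_i(J):=\circ]_{-T}) \;\leq\; \omega_{J,S\cup\{i\}}^{T\cup\{i\}}(\textbf{v}_{-T\cup\{i\}}).
\]
In the middle term we have $v_i(J)=\circ$, so Theorem~\ref{SI theo} tells us that the phantom‑mappings can be chosen so that the value does not depend on voter $i$'s ballot at all. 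I would use this twice: once to identify the middle term with $\omega_{J,S}^{T}((\textbf{v}_{-T})_{-i})$ (removing voter $i$ entirely), and once in the other direction to identify it with $\omega_{J,S}^{T}((\textbf{v}_{-T})_{-i},v_i[J:=\circ])$ (adding voter $i$ back with an abstention for $J$). This produces exactly the chain of the corollary.

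For the converse ($\Leftarrow$), assume the stated chain holds for every $S,T,i,\textbf{v}$. The middle equality says that inserting or deleting a voter whose ballot has $v_i(J)=\circ$ leaves $\omega_{J,S}^{T}$ unchanged, which is precisely the SI condition on phantom mappings required by Theorem~\ref{SI theo}; hence $\varphi$ is SI. Using this equality to rewrite the middle term as $\omega_{J,S}^{T}(\textbf{v}[v_i(J):=\circ]_{-T})$, the two outer inequalities become exactly those of Theorem~\ref{SC theo}, so $\varphi$ is SC.

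The main (and really the only) obstacle is purely bookkeeping: four different reductions of $\textbf{v}$ appear across the two theorems ($\textbf{v}_{-T}$, $\textbf{v}_{-T\cup\{i\}}$, $(\textbf{v}_{-T})_{-i}$, and $\textbf{v}[v_i(J):=\circ]_{-T}$), and one must check carefully that SI provides the identifications needed to pass between them. Once that dictionary is set up, the proof is just a chaining of the two previous theorems.
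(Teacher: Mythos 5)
Your proposal is correct and matches what the paper intends: the corollary carries no separate proof in the paper precisely because it is the immediate combination of Theorem~\ref{SI theo} (which supplies the middle equality, letting a voter with $v_i(J)=\circ$ be deleted or re-inserted without changing the phantom-mapping value) and Theorem~\ref{SC theo} (which supplies the two outer inequalities). Your bookkeeping between the profile reductions $\textbf{v}_{-T\cup\{i\}}$, $(\textbf{v}_{-T})_{-i}$, and $\textbf{v}[v_i(J):=\circ]_{-T}$ is exactly the right dictionary, since $\omega_{J,S}^{T\cup\{i\}}$ does not depend on voter $i$'s ballot.
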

 
 \begin{proposition}\label{SC prop}
 A phantom-proxy mechanism $\psi$ verifies the SC rule iff for all $p \in \mathcal{N}$ we have $g_J(p+1) \in \{g_J(p), g_J(p)+1\}$. (Proof \ref{SC prop proof}.)
 \end{proposition}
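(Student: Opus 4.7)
The plan is to track how the multiset pool $\textbf{v}(J)\cup\mathcal{F}_J(\textbf{v})$ changes when a voter $i$ with $v_i(J)=\circ$ switches to $v_i(J)=\alpha=\varphi(\textbf{v})(J)$. Writing $\textbf{w}=\textbf{v}[v_i(J):=\alpha]$, the fact that $w_i(J)\in\mathcal{A}$ forces $f_{i,J}(w_i)=\emptyset$ by the defining constraint of a phantom-proxy mechanism, while the proxies of all voters $j\neq i$ are unchanged. So the new pool either gains an extra $\alpha$ (if the old proxy $f_{i,J}(v_i)$ was already $\emptyset$), or has the old proxy value $\beta=f_{i,J}(v_i)$ replaced by $\alpha$. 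Since $\varphi$ returns the $g_J(p)$-th order statistic of the size-$p$ pool, SC reduces to a statement about how this order statistic reacts to these two pool modifications.

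For the forward direction, assume $g_J(p+1)\in\{g_J(p),g_J(p)+1\}$ for all $p$, let $n$ be the pool size under $\textbf{v}$, and let $[a,b]$ be the maximal block of positions of $\alpha$ in the sorted pool, so $a\le g_J(n)\le b$. In the replacement case, a three-way case analysis on $\beta<\alpha$, $\beta=\alpha$, $\beta>\alpha$ shows that deleting $\beta$ and inserting $\alpha$ keeps the $g_J(n)$-th entry equal to $\alpha$: in each sub-case one verifies that the (possibly shifted) $\alpha$-block still contains position $g_J(n)$. In the insertion case, the pool size grows to $n+1$ and the $\alpha$-block becomes $[a,b+1]$; the hypothesis gives $g_J(n+1)\in\{g_J(n),g_J(n)+1\}\subseteq[a,b+1]$, so the new output is again $\alpha$.

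For the reverse direction, assume $g_J(p+1)\notin\{g_J(p),g_J(p)+1\}$ for some $p$ and build an explicit counterexample. Take an isolated voter $i$ whose ballot is $\emptyset$ on every candidate other than $J$ and $\circ$ on $J$, so that the phantom-proxy definition forces $f_{i,J}(v_i)=\emptyset$ (no grade data to form a proxy). Arrange the remaining graders and proxies so the resulting pool has $p$ pairwise distinct values with $\mu_{g_J(p)}=\alpha$ occurring only at position $g_J(p)$. Once $i$ switches to $v_i(J)=\alpha$, the new size-$(p+1)$ pool contains $\alpha$ at exactly positions $g_J(p)$ and $g_J(p)+1$, so $\mu_{g_J(p+1)}$ returns a value different from $\alpha$, contradicting SC.

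The main technical step is the replacement lemma used in case~(ii) of the forward direction, whose bookkeeping of the shifting $\alpha$-block under simultaneous deletion and insertion of elements requires careful attention to the three sign cases of $\beta-\alpha$. A subsidiary delicate point is justifying $f_{i,J}(v_i)=\emptyset$ for the isolated counterexample voter, which appeals to the minimal constraint in the phantom-proxy definition that a ballot with no expressed grade anywhere carries no proxy.
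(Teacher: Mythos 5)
Your proof is correct, but it follows a genuinely different route from the paper's. The paper derives the result by composing two earlier characterizations: it invokes Lemma~\ref{SC theo lemma} (SC expressed as inequalities between the phantom-mappings $\omega_{J,S}^{T}$ and $\omega_{J,S}^{T\cup\{i\}}$) together with Proposition~\ref{SP charac prop} (which identifies $\omega_{J,S}^{T}$ with $\mu_{\#S-\#T+g_J(\cdot)}(\mathcal{F}_J(\textbf{v}))$), and then reads off the constraint $g_J(p)\leq g_J(p+1)\leq g_J(p)+1$ from the requirement $\omega_{J,S}^{T\cup\{i\}}\leq\omega_{J,S}^{T}\leq\omega_{J,S\cup\{i\}}^{T\cup\{i\}}$. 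You instead argue directly on the voting pool: you split into the ``replacement'' case (the absentee had a proxy $\beta$, which is swapped for $\alpha$, leaving the pool size fixed) and the ``insertion'' case (no proxy, pool grows by one), and track the $\alpha$-block of positions under each modification. Your approach is more elementary and self-contained --- in particular it makes explicit that the replacement case holds unconditionally, a fact the paper dispatches with the rather opaque remark that ``SP provides us that we are SC'' --- while the paper's approach buys uniformity with the rest of Section~4, where every proxy-level statement is obtained by specializing a phantom-mapping-level theorem. Your explicit counterexample construction for the converse (a pool of $p$ distinct values with $\alpha$ isolated at position $g_J(p)$) is also more concrete than the paper's, which extracts necessity from the inequalities abstractly.

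One point to flag: both your proof and the paper's silently assume that the insertion case is realizable, i.e.\ that some profile exists in which an absentee for $J$ carries no proxy and the pool has the required size and configuration. Your justification that the isolated voter (with $\emptyset$ everywhere except $\circ$ on $J$) has $f_{i,J}(v_i)=\emptyset$ reads slightly more out of the phantom-proxy definition than its literal text supports, since the stated condition quantifies over ballots with all entries in $\{\emptyset,\otimes\}$ and a $\circ$ entry falls outside that set; it does, however, match the evident intent (no expressed grade means no data for a proxy), and the paper's own proof makes the same implicit realizability assumption, so this is a shared informality rather than a defect specific to your argument.
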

 
\subsection{Participation : P and FP}
Since our objective is to get voters to grade as much candidates as possible, we need be sure that they never benefit from an absentee vote. That is to say, if a voter has an ideal grade for a candidate, then he should not be able to get closer to that grade by abstaining for that candidate. This is the Participation property (also known as the no no-show paradox). We define two notions of Participation, one stronger than the other.



\subsubsection{Participation : P}

\begin{definition}[Participation : P]
A grading function $\varphi$ verifies the Participation property (P) if for any candidate $J$, any voter $i$ with $v_i(J) \in \mathcal{A}$:
\[\varphi(\textbf{v})(J) > v_i(J) \Rightarrow  \varphi(\textbf{v}[v_i(J):=\circ])(J) \geq \varphi(\textbf{v})(J)\]
\[\varphi(\textbf{v})(J) < v_i(J) \Rightarrow  \varphi(\textbf{v}[v_i(J):=\circ])(J) \leq \varphi(\textbf{v})(J)\]

\end{definition}

\begin{remark}\label{P remark}
Even if we are not SP, participation (P) implies that we verify the "silent consent" rule (SC). (Proof \ref{P remark proof}.)
\end{remark}

\begin{theorem}[Participation characterization] \label{P theo}
An SP grading function $\varphi$ verifies participation (P) if for candidates $J$ all sets $\emptyset \subset S \subset T \subseteq \mathcal{D}_J$, all candidates $i \not \in T$, all profiles $\textbf{v}$ the phantom-mappings verify:
\[\omega_{J,S}^{T\cup \{i\}}(\textbf{v}_{-T\cup \{i\}}) \leq \omega_{J,S}^{T}((\textbf{v}[v_i(J):=\circ])_{-T})\leq \omega_{J,S \cup \{i\}}^{T\cup \{i\}}(\textbf{v}_{-T\cup \{i\}}).\]
(Proof \ref{P theo proof}.)
\end{theorem}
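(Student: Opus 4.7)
The plan is to prove the iff-characterization by leveraging the SP max-min formula from Theorem~\ref{SP theo}. First I would fix $J$, $S$, $T$, $i$ with $\emptyset \subset S \subset T \subseteq \mathcal{D}_J$ and $i \notin T$; for a profile $\textbf{v}$ in which the graders of $J$ are exactly $T \cup \{i\}$, I would write $\textbf{w} = \textbf{v}[v_i(J):=\circ]$ (so the graders of $J$ in $\textbf{w}$ become $T$) and abbreviate $a = \omega_{J,S}^{T \cup \{i\}}(\textbf{v}_{-T\cup\{i\}})$, $b = \omega_{J,S\cup\{i\}}^{T\cup\{i\}}(\textbf{v}_{-T\cup\{i\}})$, and $c = \omega_{J,S}^T(\textbf{w}_{-T})$. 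By the monotonicity of the phantoms in their lower index, $a \leq b$; the aim is to establish the two-sided bound $a \leq c \leq b$.

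For the forward direction I would extract each inequality via a canonical witness profile: voters in $S$ vote $\sup\mathcal{B}$, voters in $T\setminus S$ vote $\inf\mathcal{B}$, $v_i(J) = \gamma$ is a parameter, and the non-graders' ballots are prescribed freely (appealing to the paper's WLOG extension $\mathcal{A} = \mathcal{B}$ from Section~\ref{section SP}). Substituting into the SP max-min formula and exploiting phantom monotonicity in $S'$, every $S' \subseteq T \cup \{i\}$ meeting $T \setminus S$ contributes at most $\inf\mathcal{B}$, so the formula collapses to $\varphi(\textbf{v})(J) = \max(a, \min(\gamma, b))$. Taking $\gamma = \inf\mathcal{B}$ gives $\varphi(\textbf{v})(J) = a$; whenever $a > \gamma$, property (P) applied to the transition $\gamma \to \circ$ yields $\varphi(\textbf{w})(J) = c \geq a$ (the boundary case $a = \inf\mathcal{B}$ is automatic). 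Symmetrically, $\gamma = \sup\mathcal{B}$ gives $\varphi(\textbf{v})(J) = b$, and (P) then forces $c \leq b$.

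For the backward direction I would assume both inequalities and pick an arbitrary profile $\textbf{v}$ with $v_i(J) = \alpha \in \mathcal{A}$; set $\eta = \varphi(\textbf{v})(J)$. When $\eta > \alpha$, pick any maximizer $S^*$ in the max-min expression for $\varphi(\textbf{v})(J)$; since its min equals $\eta > \alpha$, voter $i$ cannot belong to $S^*$, hence $S^* \subseteq T$. Plugging $S^*$ into the max-min expression for $\varphi(\textbf{w})(J)$ and invoking the left inequality $\omega_{J,S^*}^T(\textbf{w}_{-T}) \geq \omega_{J,S^*}^{T\cup\{i\}}(\textbf{v}_{-T\cup\{i\}})$, the corresponding min is $\geq \eta$, giving $\varphi(\textbf{w})(J) \geq \eta$. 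The case $\eta < \alpha$ is dual: for each $S'' \subseteq T$, pair it with $S'' \cup \{i\}$ in $\varphi(\textbf{v})(J)$'s formula; its min is $\leq \eta < \alpha$, so the $\alpha$ contribution is redundant, and by the right inequality the analogous min for $S''$ in $\varphi(\textbf{w})(J)$'s formula is also $\leq \eta$; taking the max over $S'' \subseteq T$ gives $\varphi(\textbf{w})(J) \leq \eta$.

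The hard part will be the forward construction, since choosing extreme inputs $\inf\mathcal{B}$, $\sup\mathcal{B}$ may violate $\gamma \in \mathcal{A}$; this is handled by the WLOG extension noted in the paper (mirroring how Theorem~\ref{SC theo} is stated under $\mathcal{A} = \mathcal{B}$). A minor but crucial subtlety is the step $i \notin S^*$ in the backward direction: it relies precisely on the strict inequality $\alpha < \eta$ supplied by (P)'s premise, which is exactly why the same inequality system characterises the weaker (SC) (as per Remark~\ref{P remark}) and the stronger (P) simultaneously once SP is assumed.
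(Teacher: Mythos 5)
Your proposal is correct and follows essentially the same route as the paper's proof: the forward direction realizes the phantom values $a$, $b$, $c$ as actual outcomes via witness profiles (the paper does this with Lemma~\ref{useful lemma 2} and arbitrary normalized profiles, you with explicit extreme-grade ballots) and then invokes (P), while the backward direction manipulates the max--min formula using the fact that the maximizing set excludes (resp.\ can be taken to include) $i$ when the outcome is above (resp.\ below) $v_i(J)$. The boundary caveat you flag (extreme inputs versus $\mathcal{A}$) is the same one the paper itself handles only parenthetically, so it does not constitute a gap relative to the paper's own argument.
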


\begin{theorem}[Relationship between P and SC]
When $\mathcal{A} = \mathcal{B}$, in the (SP) context, the participation and "silence consent" properties (SC) are equivalent.
\end{theorem}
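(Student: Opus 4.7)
The plan is to split the equivalence into its two directions. One direction, namely (P)$\,\Rightarrow\,$(SC), is already handed to us by Remark \ref{P remark}, which moreover needs neither SP nor $\mathcal{A}=\mathcal{B}$, so I would simply invoke it. The whole content of the theorem therefore lies in the converse: under (SP) with $\mathcal{A}=\mathcal{B}$, (SC) implies (P).

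For the converse I would give a short direct argument rather than pass through the phantom-mapping machinery. Fix $\textbf{v}$, a voter $i$, and a candidate $J$ with $v_i(J)\in\mathcal{A}$, and set $\beta:=\varphi(\textbf{v}[v_i(J):=\circ])(J)$. The hypothesis $\mathcal{A}=\mathcal{B}$ is used precisely so that $\beta$ is a legal input grade. I would first apply (SC) to the profile $\textbf{v}[v_i(J):=\circ]$, in which $i$ is absent on $J$ and the outcome is $\beta$, to conclude that $\varphi(\textbf{v}[v_i(J):=\beta])(J)=\beta$. I would then apply (SP) to voter $i$ at profile $\textbf{v}$, deviating to the ballot $v_i[J:=\beta]$: if $\varphi(\textbf{v})(J)>v_i(J)$, SP forces $\beta\geq\varphi(\textbf{v})(J)$, and if $\varphi(\textbf{v})(J)<v_i(J)$, it forces $\beta\leq\varphi(\textbf{v})(J)$. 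These are exactly the two inequalities defining (P).

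As a sanity check, one could alternatively notice that Theorems \ref{SC theo} and \ref{P theo} impose literally the same inequality on the phantom-mappings $\omega_{J,S}^T$; that yields the equivalence in one line, but hides the structural reason.

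The only subtle step is the invocation of (SC): as stated, it runs from ``absent plus outcome $\alpha$'' to ``vote $\alpha$ plus outcome $\alpha$'', whereas what we want (the outcome when $i$ abstains) goes in the opposite direction. The trick is to start from the absent profile and \emph{define} $\beta$ to be its outcome, so that SC can be used in its native direction. After that, (P) drops out by a single application of (SP). The one genuine obstacle is making sure $\beta$ is admissible as an input to (SP), and that is exactly what $\mathcal{A}=\mathcal{B}$ secures; without this hypothesis $\varphi(\textbf{v}[v_i(J):=\beta])$ need not even be defined, and the argument collapses.
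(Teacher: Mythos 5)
Your proposal is correct. The paper itself attaches no explicit proof to this theorem; it is evidently meant as an immediate corollary of Theorems \ref{SC theo} and \ref{P theo}, whose conditions on the phantom-mappings $\omega_{J,S}^T$ are word-for-word identical when $\mathcal{A}=\mathcal{B}$ --- the route you relegate to a ``sanity check.'' Your primary argument is genuinely different and, I would say, preferable: the direction (P)$\Rightarrow$(SC) is exactly Remark \ref{P remark} (which indeed needs neither SP nor $\mathcal{A}=\mathcal{B}$), and for (SC)$\Rightarrow$(P) you argue directly from the definitions. Setting $\beta:=\varphi(\textbf{v}[v_i(J):=\circ])(J)$, applying (SC) at the abstention profile to get $\varphi(\textbf{v}[v_i(J):=\beta])(J)=\beta$, and then invoking (SP) at $\textbf{v}$ with the deviation $w_i=v_i[J:=\beta]$ yields precisely the two inequalities of (P); the composition $\textbf{v}[v_i(J):=\beta]=\bigl(\textbf{v}[v_i(J):=\circ]\bigr)[v_i(J):=\beta]$ that makes the two steps chain together is handled correctly, and you correctly isolate $\mathcal{A}=\mathcal{B}$ as the hypothesis guaranteeing that $\beta$ is an admissible input so that both (SC) and (SP) are applicable with $\alpha=\beta$. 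What your route buys is self-containedness and robustness: it does not depend on the characterization theorems (whose ``if'' statements the paper is somewhat loose about, though their proofs do establish both directions), and it exposes the structural reason the hypothesis $\mathcal{A}=\mathcal{B}$ is needed. What the paper's implicit route buys is brevity, at the cost of hiding that reason inside the phantom-mapping machinery.
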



\begin{proposition}\label{P prop}
A phantom-proxy mechanism verifies participation (P) iff it verifies the silent consent rule (SC). (Proof \ref{FP prop proof}.)
\end{proposition}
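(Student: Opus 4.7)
The proof is an equivalence, with one direction essentially free and the other requiring a direct case analysis on phantom-proxy voting pools. First, the forward direction (P implies SC): Remark \ref{P remark} already shows that P implies SC for \emph{any} grading function, so combining it with the phantom-proxy characterization of SC in Proposition \ref{SC prop} immediately yields the required condition $g_J(p+1)\in\{g_J(p),g_J(p)+1\}$ on the order selectors.

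The content lies in the converse. Assuming this SC condition on $g_J$, I would fix a voter $i$ with $v_i(J)=\alpha\in\mathcal{A}$, a profile $\textbf{v}$ with $\varphi(\textbf{v})(J)>\alpha$, and set $\textbf{v}'=\textbf{v}[v_i(J):=\circ]$. The crucial observation is that $f_{i',J}(v_{i'})$ for any $i'\neq i$ depends only on $v_{i'}$, so the voting pools for $\textbf{v}$ and $\textbf{v}'$ differ only through voter $i$'s own contribution. This splits the argument into two cases, according to whether the proxy $f_{i,J}(v'_i)$ fires once $i$ has abstained on $J$.

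In Case A, where $f_{i,J}(v'_i)=\emptyset$, the pool simply loses the entry $\alpha$. Sort the old pool as $b_1\leq\cdots\leq b_p$, let $k=g_J(p)$, so $b_k=\varphi(\textbf{v})(J)>\alpha$; the removed $\alpha$ then sits at some index $j<k$, and the $(k-1)$-st order statistic of the new size-$(p-1)$ pool equals $b_k$ while the $k$-th equals $b_{k+1}\geq b_k$. The SC condition $g_J(p-1)\in\{k-1,k\}$ forces the new output to be one of these two values, both $\geq\varphi(\textbf{v})(J)$. In Case B, $f_{i,J}(v'_i)=\beta\in\mathcal{B}$ fires, the pool size is preserved, and we still select the $k$-th order statistic; since we removed an element $\alpha<b_k$ and added some $\beta$, the count of pool entries strictly below $b_k$ cannot increase, so the new $k$-th order statistic remains $\geq b_k$. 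Note that the SC hypothesis on $g_J$ is not even needed in Case B. The dual direction ($\varphi(\textbf{v})(J)<\alpha$) is handled by flipping all inequalities throughout.

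I expect the main obstacle to be bookkeeping around ties in the pool, and in particular verifying the "removed element has index $<k$" step in Case A even when several pool entries equal $\alpha$ (this is fine, because $b_k>\alpha$ forces any entry of value $\alpha$ to sit at index $\leq k-1$). Everything else is an elementary application of how removing or swapping a single entry affects order statistics, together with a single invocation of the SC condition on $g_J$ in Case A.
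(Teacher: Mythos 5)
Your proof is correct, but it takes a genuinely different route from the paper's. The paper's proof of this proposition simply defers to the argument in \ref{SC prop proof}: it (implicitly) re-runs the phantom-mapping computation with the participation characterization in place of the silent-consent one, showing that both properties reduce to the same numerical condition $g_J(p+1)\in\{g_J(p),g_J(p)+1\}$ and are therefore equivalent. You instead get the direction (P)$\Rightarrow$(SC) for free from Remark \ref{P remark} (which holds for arbitrary grading functions, so the phantom-proxy structure is not even needed there), and for the converse you bypass the phantom-mapping machinery entirely, arguing directly on the order statistics of the voting pool: abstention by $i$ either deletes $i$'s grade from the pool (where the $g_J$ condition pins the new selector to index $k-1$ or $k$) or swaps it for a proxy value (where a simple count of entries below the old output suffices, with no use of the $g_J$ condition). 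Your tie-handling remark and the implicit observation that $k\geq 2$ whenever the output exceeds a pool entry are both sound. What your route buys is a self-contained, elementary verification of (SC)$\Rightarrow$(P) that does not depend on the correctness or the precise statement of the phantom-mapping characterizations of (P) and (SC); what the paper's route buys is brevity and uniformity, since the identical computation also covers Proposition \ref{FP prop}. One cosmetic point: in your forward direction the invocation of Proposition \ref{SC prop} is superfluous --- Remark \ref{P remark} alone gives (P)$\Rightarrow$(SC).
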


\subsubsection{Full Participation : FP}\ \\
In full participation (FP), voters cannot benefit from blank votes $\otimes$ either. 

\begin{definition}[Full Participation : FP]
A grading function $\varphi$ verifies participation (FP) if for any candidate $J$, any voter $i$ with $v_i(J) \in \mathcal{A}$ and any $\epsilon \in \{\circ,\otimes\}$:
\[\varphi(\textbf{v})(J) \geq v_i(J) \Rightarrow  \varphi(\textbf{v}[v_i(J):=\epsilon])(J) \geq \varphi(\textbf{v})(J)\]
\[\varphi(\textbf{v})(J) \leq v_i(J) \Rightarrow  \varphi(\textbf{v}[v_i(J):=\epsilon])(J) \leq \varphi(\textbf{v})(J)\]

\end{definition}

\begin{theorem}[Participation characterization]
An (SP) grading function $\varphi$ verifies participation (FP) if for all candidates $J$, all sets $S \subseteq T \subseteq \mathcal{D}_J$, and all voting profiles $\textbf{v}$, we have $\omega_{J,S}^{T}(\textbf{v}_{-T}) \leq \omega_{J,S}^{T \cup \{i\}}((\textbf{v}[v_i(J):=\otimes])_{-T})\leq \omega_{J,S}^{T}(\textbf{v}_{-T})$.
\end{theorem}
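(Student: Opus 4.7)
The plan is to mirror the proof of the P characterization (Theorem~\ref{P theo}), strengthening it to cover the blank-vote direction. Since FP is strictly stronger than P (take $\epsilon=\circ$), Theorem~\ref{P theo} already gives the required phantom inequalities for the $\circ$ case; the fresh work is entirely on $\epsilon=\otimes$. Theorem~\ref{SP theo} lets me replace $\varphi$ by its max-min representation with phantom mappings $\omega_{J,S}^{T}$, so the argument becomes a comparison of two max-min expressions when voter $i$'s vote flips between a grade $\alpha\in\mathcal{A}$ and $\otimes$. The structural point is that when $v_i(J)\in\mathcal{A}$, voter $i$ lies in $\mathcal{D''}_J$, so the ambient superscript is $T\cup\{i\}$; after the swap to $\otimes$ it drops to $T$. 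Thus FP is exactly a constraint on how $\omega$ behaves when $i$ is added to, or removed from, its superscript.

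For the \emph{if} direction I would fix $i$, $J$, $\textbf{v}$ with $v_i(J)=\alpha\in\mathcal{A}$, and let $\textbf{v}'=\textbf{v}[v_i(J):=\otimes]$. Writing $\varphi(\textbf{v})(J)$ as a max over $S'\subseteq \mathcal{D''}_J(\textbf{v})$ and $\varphi(\textbf{v}')(J)$ as a max over $S\subseteq \mathcal{D''}_J(\textbf{v})\setminus\{i\}$, I would split the outer max in the first expression into subsets containing $i$ and subsets not containing $i$. Subsets not containing $i$ produce the same inner $\min$ of input grades, while the phantom term changes between superscripts $T$ and $T\cup\{i\}$; the assumed equality collapses this difference. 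Subsets containing $i$ can be paired, via a case analysis on whether $\varphi(\textbf{v})(J)$ is larger or smaller than $\alpha$, with terms already appearing in the max for $\varphi(\textbf{v}')(J)$, using the monotonicity of the $\omega_{J,S}^{T}$ in $S$ guaranteed by Theorem~\ref{SP theo}. Combining these comparisons delivers both FP inequalities.

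For the \emph{only if} direction I would build an extremal profile in the spirit of the proofs of Theorems~\ref{P theo} and~\ref{SC theo}: set the grades of voters in $S$ to $\inf\mathcal{B}$ and those in $T\setminus S$ to $\sup\mathcal{B}$, leaving voter $i$ outside $T$ with a free vote. This collapses the max-min formula so that $\varphi(\textbf{v})(J)$ equals the chosen phantom value, and the analogous collapse holds for $\textbf{v}[v_i(J):=\otimes]$. Applying FP in both of its directions, each time by choosing $v_i(J)$ to lie on the appropriate side of the current outcome, pins $\omega_{J,S}^{T\cup\{i\}}$ evaluated at the modified profile and $\omega_{J,S}^{T}(\textbf{v}_{-T})$ to one another, which is exactly the equality sandwich claimed in the theorem.

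The main obstacle is the bookkeeping when both the outer set of the max and the superscript of $\omega$ change as voter $i$ moves between $\mathcal{D''}_J$ and its complement. One has to keep the monotonicity $\omega_{J,S}^{T}\leq \omega_{J,S'}^{T}$ consistent with the new invariance being proved, and check that the equality-on-both-sides formulation does not over-constrain the mapping in any other coordinate: in particular the $\otimes$-invariance has to be decoupled from the $\circ$-invariance inherited from P, so that the two axioms combine without redundancy. The machinery is largely present in the proofs of Theorems~\ref{P theo} and~\ref{SC theo}, so once the extremal profile is chosen correctly the remaining verification is a routine adaptation.
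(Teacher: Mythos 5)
A preliminary remark: the paper states this theorem without any accompanying proof (it is the one characterization in Section~\ref{section non graders} that carries no appendix pointer), so the only material against which your attempt can be measured is the pair of analogous arguments for Theorem~\ref{P theo} and Lemma~\ref{SC theo lemma}. Your overall architecture --- inherit the $\circ$ case from Theorem~\ref{P theo}, treat the $\otimes$ case by the same max-min decomposition of Theorem~\ref{SP theo} and the same extremal-profile technique --- is the natural one and is consistent with how the paper proves the sibling results.

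There are, however, two concrete problems. The first is mechanical: your extremal profile is oriented backwards. For the max-min representation to output $\omega_{J,S}^{T}$ you must give the voters of $S$ a grade \emph{above} the phantom value and the voters of $T\setminus S$ a grade \emph{below} it (this is exactly the construction of Lemma~\ref{useful lemma 2}); with $S$ at the bottom and $T\setminus S$ at the top the formula selects $\omega_{J,T\setminus S}^{T}$ instead. Moreover the votes must be drawn from $\mathcal{A}$, not $\mathcal{B}$, which forces you through the boundary cases handled by Lemma~\ref{useful lemma 1}. The second problem is the serious one: in the necessity direction you propose to apply FP ``in both of its directions by choosing $v_i(J)$ on the appropriate side of the current outcome.'' Running that computation, putting $v_i(J)$ below the outcome makes the selected coalition exclude $i$ and yields $\omega_{J,S}^{T}((\textbf{v}[v_i(J):=\otimes])_{-T}) \geq \omega_{J,S}^{T\cup\{i\}}(\textbf{v}_{-T\cup\{i\}})$, while putting it above makes the selected coalition $S\cup\{i\}$ and yields $\omega_{J,S}^{T}((\textbf{v}[v_i(J):=\otimes])_{-T}) \leq \omega_{J,S\cup\{i\}}^{T\cup\{i\}}(\textbf{v}_{-T\cup\{i\}})$. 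These are bounds by two \emph{different} phantoms --- precisely the two-sided sandwich of Theorem~\ref{P theo} transposed to $\otimes$ --- and they do not combine into the single-phantom equality that the theorem asserts. The extra strength of FP over P lies entirely in its weak inequalities: the equality can only be extracted by taking $v_i(J)$ \emph{equal} to the realized phantom value, so that both clauses of FP fire simultaneously and force $\varphi(\textbf{v}[v_i(J):=\otimes])(J)=\varphi(\textbf{v})(J)$; and that move is available only when the phantom value lies in $\mathcal{A}$, the same restriction that appears in Lemma~\ref{SC theo lemma}. Your plan never invokes the equality case, so as written it establishes a strictly weaker conclusion than the one claimed, and the ``decoupling'' of the $\otimes$-invariance from the $\circ$-invariance that you flag as a worry is in fact where the missing idea lives.
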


\begin{proposition}\label{FP prop}
Any phantom-proxy mechanism that verifies (P) verifies (FP). (Proof \ref{FP prop proof}.)
\end{proposition}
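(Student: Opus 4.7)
The plan is to convert the hypothesis (P) into a structural smoothness condition on $g_J$ and then read off (FP) from a direct order-statistic analysis of the voting pool. By Proposition \ref{P prop}, a phantom-proxy mechanism $\psi$ verifies (P) iff it verifies (SC); by Proposition \ref{SC prop}, (SC) is equivalent to the condition $g_J(p+1)\in\{g_J(p),g_J(p)+1\}$ for every candidate $J$ and every $p$. So after this preliminary reduction I may assume the selector $g_J$ has increments in $\{0,1\}$, and the remaining work is purely combinatorial.

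Fix $J$, a voter $i$ with $v_i(J)=\alpha\in\mathcal{A}$, and $\epsilon\in\{\circ,\otimes\}$. Let $P=\textbf{v}(J)\cup\mathcal{F}_J(\textbf{v})$, $p=|P|$, and $\beta=\mu_{g_J(p)}(P)=\varphi(\textbf{v})(J)$; define $\textbf{w}=\textbf{v}[v_i(J):=\epsilon]$ together with $P'$, $p'$, and $\beta'$ analogously. Because $v_i(J)\in\mathcal{A}$ forces $f_{i,J}(v_i)=\emptyset$, the transition from $\textbf{v}$ to $\textbf{w}$ modifies the pool in only two possible ways: Case A, where no proxy is added (so $P'=P\setminus\{\alpha\}$ and $p'=p-1$); and Case B, where a single proxy $\gamma=f_{i,J}(v_i[J:=\epsilon])\in\mathcal{B}$ is inserted (so $P'=(P\setminus\{\alpha\})\cup\{\gamma\}$ and $p'=p$).

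In Case A, sorting $P$ and using $g_J(p-1)\in\{g_J(p)-1,g_J(p)\}$, a short computation gives: $\alpha\leq\beta$ implies $\beta'\geq\beta$, and $\alpha\geq\beta$ implies $\beta'\leq\beta$. In Case B the pool has the same size, so the quantile index $g_J(p)$ is unchanged; I would apply the Case A argument to $P\setminus\{\alpha\}$ first and then reinsert $\gamma$, using coordinate-wise monotonicity of the order statistic $\mu_{g_J(p)}$ to show that the direction of inequality is preserved regardless of where $\gamma$ lies in $\mathcal{B}$. The whole argument is agnostic to the choice $\epsilon\in\{\circ,\otimes\}$, since $\epsilon$ only enters through the value of $\gamma$.

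The main technical obstacle is the boundary case $\alpha=\beta$, where (P) gives no information but (FP) demands $\beta'=\beta$ exactly. I would pin this down by applying both inequalities of the (FP) hypothesis ($\beta\geq v_i(J)$ and $\beta\leq v_i(J)$) simultaneously and squeezing $\beta'$ between the two symmetric Case A/B estimates, together with the observation that when $\alpha=\beta$ the element $\alpha$ lies in the pool at rank $g_J(p)$ so that removing it shifts the relevant order statistic in a controlled way that the smoothness of $g_J$ exactly compensates. This is where the proof is most delicate and where the smoothness characterization obtained in the first step does its real work.
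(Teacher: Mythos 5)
Your reduction of (P) to the increment condition $g_J(p+1)\in\{g_J(p),g_J(p)+1\}$ via Propositions \ref{P prop} and \ref{SC prop} is exactly the route the paper intends (its own proof is a one-line pointer to the (SC) computation in \ref{SC prop proof}), and your Case A/Case B order-statistic analysis does establish both implications of (FP) whenever $v_i(J)$ differs \emph{strictly} from the outcome: if $\alpha<\beta=\mu_{g_J(p)}(P)$ the removed copy of $\alpha$ sits at a rank $r<g_J(p)$, so $\mu_{g_J(p)}(P\setminus\{\alpha\})=\mu_{g_J(p)+1}(P)\geq\beta$ and $\mu_{g_J(p)-1}(P\setminus\{\alpha\})=\mu_{g_J(p)}(P)=\beta$, and inserting a proxy $\gamma$ into a pool restored to its original size only helps, since $\mu_{g_J(p)}\bigl(Q\cup\{\gamma\}\bigr)\geq\mu_{g_J(p)-1}(Q)$. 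Up to that point your direct argument on the voting pool is cleaner than the paper's appeal to the phantom-mapping characterization.

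The gap is the boundary case $\alpha=\beta$, which you correctly flag as delicate but then claim to close by asserting that the smoothness of $g_J$ ``exactly compensates.'' It does not, and no argument can close it: with (FP) as written (non-strict hypotheses), the claim fails at the boundary. Take $\mathcal{A}=\mathcal{B}$, no proxies ($f_{i,J}\equiv\emptyset$), and $g_J(k)=k$ (the max order function), which satisfies $g_J(p+1)=g_J(p)+1$ and hence (P). With two voters grading $1$ and $2$ the outcome is $2=v_2(J)$; when voter $2$ abstains the pool becomes $\{1\}$ and the outcome drops to $1$, violating $\varphi(\textbf{v})(J)\geq v_2(J)\Rightarrow\varphi(\textbf{w})(J)\geq\varphi(\textbf{v})(J)$. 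Symmetrically, $g_J(k)=1$ violates the other implication. The failure mechanism is visible in your own computation: when the removed grade occupies rank $g_J(p)$ exactly, the branch $g_J(p-1)=g_J(p)-1$ can push the outcome strictly down while the branch $g_J(p-1)=g_J(p)$ can push it strictly up, so $\beta'=\beta$ cannot be forced. The proposition is salvageable only if (FP) is read with strict hypotheses $\varphi(\textbf{v})(J)>v_i(J)$ and $\varphi(\textbf{v})(J)<v_i(J)$, matching the definition of (P); under that reading your Cases A and B already constitute a complete proof and the ``delicate'' step disappears. As written, your final paragraph asserts something false and the proof cannot be completed.
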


\subsection{Jury Determinism : JD}
We may also wish for the aggregated grade for a candidate not to be affected by the other candidates, but only depend on the votes of his assigned jury. 

\begin{definition}[Jury Determinism : JD]
A grading function $\varphi$ is jury determined (JD) iff for all \textbf{v} and \textbf{w}:
\[
 \varphi(w_1[J:=v_i(J) : \forall i \in \mathcal{N}])(J)
    = \varphi(\textbf{v})(J)\]
\end{definition}



\begin{lemma}\label{JD lemma}
An (SP) grading function $\varphi$ is jury determined (JD) iff all phantom-mappings $\omega_{J,S}^{T}$ only depend on the identity of absentees for $J$ and $\mathcal{D}_J$. (Proof \ref{JD lemma proof}.)
\end{lemma}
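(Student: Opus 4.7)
The plan is to use the max-min representation of Theorem~\ref{SP theo} and extract each phantom mapping individually by plugging extreme grades into the formula, working in the natural extension of $\varphi$ to $\mathcal{A} = \mathcal{B}$.

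\emph{Easy direction.} Suppose the $\omega_{J,S}^T$ depend only on $\mathcal{D}_J$ and the absentee set $\{i \in \mathcal{D}_J : v_i(J) = \circ\}$. In the formula
$$\varphi(\textbf{v})(J) = \max_{S \subseteq \mathcal{D''}_J} \min\bigl(\{v_i(J) : i \in S\} \cup \{\omega_{J,S}^{\mathcal{D''}_J}(\textbf{v}_{-\mathcal{D''}_J})\}\bigr),$$
the index set $\mathcal{D''}_J$, the grades $v_i(J)$ for $i \in \mathcal{D''}_J$, the eligibility set $\mathcal{D}_J$, and the absentee set are all determined by $\textbf{v}(J)$. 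So $\varphi(\textbf{v})(J)$ depends only on $\textbf{v}(J)$, which is (JD).

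\emph{Hard direction.} Work with the SP extension of $\varphi$ to $\mathcal{A} = \mathcal{B}$ noted just before the phantom-proxy subsection; the original phantom mappings are the restrictions of the extended ones, so any invariance proved on the extension descends to the original. Fix $J$, $T \subseteq \mathcal{D}_J$, and $S_0 \subseteq T$, and take two profiles $\textbf{v}^{(1)}, \textbf{v}^{(2)}$ with $\mathcal{D''}_J(\textbf{v}^{(k)}) = T$ and $\textbf{v}^{(1)}(J) = \textbf{v}^{(2)}(J)$ (which in particular forces identical absentee sets). Modify each by setting $v_i(J) := \sup\mathcal{B}$ for $i \in S_0$ and $v_i(J) := \inf\mathcal{B}$ for $i \in T \setminus S_0$; this does not touch any vote outside $J$, so the argument $\textbf{v}^{(k)}_{-T}$ passed to every phantom mapping is unchanged, and the two modified profiles now share the same votes on $J$. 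A direct case analysis of the max-min formula shows: for $S \subseteq T$ meeting $T \setminus S_0$ the inner min is $\inf\mathcal{B}$; for $\emptyset \neq S \subseteq S_0$ it equals $\omega_{J,S}^T(\textbf{v}^{(k)}_{-T})$; and for $S = \emptyset$ it equals $\omega_{J,\emptyset}^T(\textbf{v}^{(k)}_{-T})$. Using the monotonicity $\omega_{J,S}^T \leq \omega_{J,S'}^T$ for $S \subseteq S'$, the outer max is then attained at $S = S_0$ and equals $\omega_{J,S_0}^T(\textbf{v}^{(k)}_{-T})$. Applying (JD) to the two modified profiles yields $\omega_{J,S_0}^T(\textbf{v}^{(1)}_{-T}) = \omega_{J,S_0}^T(\textbf{v}^{(2)}_{-T})$. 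Ranging $S_0$ over $2^T$ and the $\textbf{v}^{(k)}$ over all pairs of profiles that agree on $\textbf{v}(J)$ delivers the claimed invariance.

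\emph{Main obstacle.} The extraction step requires the extreme grades $\sup\mathcal{B}$ and $\inf\mathcal{B}$ to be admissible inputs to the grading function. When $\mathcal{A} \subsetneq \mathcal{B}$, a naive extraction using $\sup\mathcal{A}$ and $\inf\mathcal{A}$ would saturate whenever a genuine phantom value lies outside $[\inf\mathcal{A}, \sup\mathcal{A}]$, and one would no longer isolate $\omega_{J,S_0}^T$ cleanly from the other phantoms. Invoking the canonical extension to $\mathcal{A} = \mathcal{B}$ is what makes the isolation exact; once this is in place, the case analysis together with monotonicity of the $\omega_{J,S}^T$ in $S$ finishes the argument.
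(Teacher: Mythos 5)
Your easy direction is fine, and your extraction idea for the hard direction (pin the graders of $J$ to extreme values so that the max--min formula collapses to a single phantom, then invoke (JD) on two profiles sharing the $J$-column) is the right one --- indeed it is far more explicit than the paper's own proof, which merely asserts in two lines that (JD) forces $\omega_{J,S}^T$ to depend only on the $J$-column. However, there is a genuine circularity in the way you handle the obstacle you yourself identify. You move to the extension $\Psi$ with $\mathcal{A}=\mathcal{B}$ and then ``apply (JD) to the two modified profiles.'' But (JD) is a hypothesis on $\varphi$ over its original domain $\mathcal{O}^{\mathcal{M}\times\mathcal{N}}$; your modified profiles carry grades $\sup\mathcal{B},\inf\mathcal{B}\notin\mathcal{A}$, so they live only in the domain of $\Psi$, and nothing guarantees that an ``arbitrary extension of the phantom mappings'' yields a (JD) mechanism. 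Worse, at exactly these modified profiles the value of $\Psi$ is $\omega_{J,S_0}^T(\textbf{v}^{(k)}_{-T})$ evaluated at arguments lying in the \emph{original} domain, so the statement ``$\Psi$ is (JD) at these profiles'' is literally equivalent to the invariance $\omega_{J,S_0}^T(\textbf{v}^{(1)}_{-T})=\omega_{J,S_0}^T(\textbf{v}^{(2)}_{-T})$ you are trying to prove. The extension buys you nothing here; it only relocates the claim.

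The repair is to stay inside the original domain. First normalize the phantoms via Lemma~\ref{useful lemma 1}: either all $\omega_{J,\cdot}^T(\textbf{v}_{-T})$ coincide (the saturated cases, where the outcome is that common value for every completion and (JD) applied directly to $\varphi$ gives the invariance), or all of them lie in $[\inf\mathcal{A},\sup\mathcal{A}]$. In the latter case run your construction with $\sup\mathcal{A}$ and $\inf\mathcal{A}$ in place of $\sup\mathcal{B}$ and $\inf\mathcal{B}$: for $S$ meeting $T\setminus S_0$ the inner min is at most $\inf\mathcal{A}\leq\omega_{J,\emptyset}^T(\textbf{v}_{-T})$, for $\emptyset\neq S\subseteq S_0$ it equals $\min(\sup\mathcal{A},\omega_{J,S}^T(\textbf{v}_{-T}))=\omega_{J,S}^T(\textbf{v}_{-T})$, and monotonicity again puts the max at $S_0$. (This is essentially what Lemma~\ref{useful lemma 2} packages.) Now the two modified profiles are genuine elements of $\mathcal{O}^{\mathcal{M}\times\mathcal{N}}$, (JD) for $\varphi$ applies, and the rest of your argument goes through verbatim. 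With that substitution your proof is correct and, modulo the caveat that the conclusion holds for a suitable normalized choice of the phantom mappings, it substantiates the claim the paper only sketches.
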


It is unsurprising to find that JD and phantom-proxy mechanisms do not provide a large class of mechanisms. The purpose of JD is to reduce the impact of non-eligible voters on the outcome whereas the class of phantom-proxy mechanism is intended as a means to provide grades to represent the non-eligible voters. Intuitively, the 2 conflict, as the following propositions shows. They show how limited phantom-proxy mechanism become in a JD setting where there is almost no information available for the proxy to learn how to imitate their voter.

\begin{proposition}\label{JD prop}
A phantom-proxy mechanism $\psi$ verifies jury determinism (JD) iff for all $f_{i,J}$ functions, there is a function $\tilde{f_{i,J}} : \mathcal{E} \rightarrow \mathcal{B}$ such that:
\[\forall v_i, f_{i,J}(v_i) = \tilde{f_{i,J}}(v_i(J)).\] (Proof \ref{JD prop proof}.)
\end{proposition}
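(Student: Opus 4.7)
The plan is to prove the two directions separately, with the forward (necessity) direction being the substantive one.

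For sufficiency, the key observation is that if each proxy value $f_{i,J}(v_i)$ depends only on $v_i(J)$, then the entire proxy multiset $\mathcal{F}_J(\textbf{v})$ is a function only of the $J$-column of $\textbf{v}$. Since the grade multiset $\textbf{v}(J)$ also depends only on the $J$-column by definition, the phantom-proxy output $\mu_{g_J(\cdot)}(\textbf{v}(J) \cup \mathcal{F}_J(\textbf{v}))$ depends only on the $J$-column, which is precisely the JD property.

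For necessity, I would fix a voter $i$ and candidate $J$ and argue by contradiction. The case $v_i(J) \in \mathcal{A}$ is already forced by the phantom-proxy definition (the proxy value is $\emptyset$), so I only need to treat $v_i(J) \in \mathcal{E}$. Assume there exist ballots $v_i, w_i$ with $v_i(J) = w_i(J) \in \mathcal{E}$ but $f_{i,J}(v_i) \neq f_{i,J}(w_i)$. I build the twin profiles $\textbf{v}$ and $\textbf{v}' = \textbf{v}[v_i := w_i]$ as follows: voter $i$ uses $v_i$ (resp.\ $w_i$), and every other voter $j$ uses a trivial ballot with $v_j(J') = \otimes$ whenever $J' \in \mathcal{C}_j$ and $v_j(J') = \emptyset$ otherwise. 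By the structural assumption in the phantom-proxy definition, any such trivial ballot satisfies $f_{j,J}(v_j) = \emptyset$, so the proxy pool for $J$ contains at most the single value coming from $i$. Moreover the $J$-columns of $\textbf{v}$ and $\textbf{v}'$ are identical by construction.

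The main step is a short case analysis. If both $f_{i,J}(v_i)$ and $f_{i,J}(w_i)$ lie in $\mathcal{B}$, then $\mathcal{F}_J(\textbf{v}) = \{f_{i,J}(v_i)\}$ and $\mathcal{F}_J(\textbf{v}') = \{f_{i,J}(w_i)\}$ are singletons and, since $\textbf{v}(J) = \textbf{v}'(J) = \emptyset$ and $g_J(1) = 1$, the outputs are these two distinct singletons. If on the contrary one value is $\emptyset$, say $f_{i,J}(w_i) = \emptyset$, then $\textbf{v}$ has a nonempty pool producing an output in $\mathcal{B}$ while $\textbf{v}'$ has an empty pool and must output $\emptyset$ (the only element of $\mathcal{B} \cup \{\emptyset\}$ consistent with an empty aggregation). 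In either case the $J$-outputs of $\textbf{v}$ and $\textbf{v}'$ differ despite the $J$-columns agreeing, contradicting JD. Hence $f_{i,J}(v_i)$ depends only on $v_i(J)$, and we define $\tilde{f}_{i,J}(\alpha) := f_{i,J}(v_i)$ for any $v_i$ with $v_i(J) = \alpha$.

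The only delicate point I anticipate is the convention used when the proxy pool is empty: strictly speaking the formula $\mu_{g_J(\cdot)}$ is undefined on an empty multiset, and I need to invoke the fact that the codomain of $\varphi$ is $\mathcal{B} \cup \{\emptyset\}$ to force the output to be $\emptyset$ (hence distinct from a value in $\mathcal{B}$). If the author prefers to avoid this edge case entirely, one can simply add one auxiliary voter grading $J$ with a grade $\gamma \in \mathcal{A}$ chosen according to the value of $g_J(2)$ (taking $\gamma = \sup \mathcal{A}$ if $g_J(2) = 1$ and $\gamma = \inf \mathcal{A}$ if $g_J(2) = 2$) so that both profiles produce well-defined but distinct $\mathcal{B}$-values; the rest of the argument is unchanged.
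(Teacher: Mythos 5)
Your proof is correct, but it takes a more elementary and self-contained route than the paper. The paper disposes of this proposition in one line by invoking its earlier machinery: Lemma~\ref{JD lemma} reduces (JD) to the requirement that the phantom-mappings $\omega_{J,S}^{T}$ depend only on the $J$-column, and Proposition~\ref{SP charac prop} identifies those phantom-mappings with order statistics of the proxy pool $\mathcal{F}_J(\textbf{v})$, so the condition transfers to the $f_{i,J}$. You instead argue directly from the definitions: sufficiency by observing that the voting pool $\textbf{v}(J)\cup\mathcal{F}_J(\textbf{v})$ is a function of the $J$-column once each $f_{i,J}$ factors through $v_i(J)$, and necessity by constructing two profiles that agree on the $J$-column (all voters other than $i$ casting purely $\otimes/\emptyset$ ballots, which the structural clause of the phantom-proxy definition forces to contribute no proxy) yet yield different outputs for $J$. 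Your approach buys transparency and independence from the maxmin characterization --- it makes explicit the distinguishing profiles that the paper's ``immediate'' conceals, and it correctly isolates the two cases (both proxies in $\mathcal{B}$ versus one equal to $\emptyset$) including the empty-pool edge case, which you handle sensibly either by the $\mathcal{B}\cup\{\emptyset\}$ codomain convention or by adding an auxiliary grader. The paper's approach buys brevity and uniformity with the surrounding propositions, all of which are derived from the same phantom-mapping characterization. One cosmetic remark: the statement's codomain $\tilde{f}_{i,J}:\mathcal{E}\rightarrow\mathcal{B}$ should really be $\mathcal{E}\rightarrow\mathcal{B}\cup\{\emptyset\}$ (and extended trivially to $\mathcal{A}$ by $\emptyset$), a point your treatment of the $v_i(J)\in\mathcal{A}$ case implicitly gets right.
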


\subsection{Strong strategy-proofness : Strong SP}

This is SP except that we now consider that you might have an opinion for a candidate you were not asked to vote for. If this happens then you should not be able to impact its outcome in a way that is make their grade closer to your opinion.

\begin{definition}[Strong strategy-proofness : Strong SP]
A grading function $\varphi$ is strongly strategy-proof if for any voter $i$, for any candidate $J$ and $\alpha \in \mathcal{A}$ and any $\textbf{v}$ and $w_i$ such that if $v_i(J) \in \mathcal{A}$ then $v_i(J)=\alpha$ and $\mathcal{C}_i(\textbf{w})= \mathcal{C}_i(\textbf{v})$:
\[\varphi(\textbf{v})(J) > \alpha \Rightarrow  \varphi(\textbf{v}[v_i:=w_i])(J) \geq \varphi(\textbf{v})(J)\]
\[\varphi(\textbf{v})(J) < \alpha \Rightarrow  \varphi(\textbf{v}[v_i:=w_i])(J) \leq \varphi(\textbf{v})(J).\]
\end{definition}

Strong SP states that even if an expert did not give a grade, they cannot manipulate the outcome, even if they were not offered the option to grade the candidate. This almost implies jury determinism.

\begin{lemma}
    Strong SP is equivalent to (SP,FP,JD).
\end{lemma}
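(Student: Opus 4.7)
The plan is to establish both implications separately. The forward direction will instantiate Strong SP in three targeted ways to recover each of SP, FP, and JD; the backward direction will use JD to reduce to single-column changes and then perform a case analysis on the types ($\mathcal{A}$ vs.\ $\mathcal{E}$) of $v_i(J)$ and $w_i(J)$, invoking SP or FP (sometimes through an intermediate profile) in each case.

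For Strong SP $\Rightarrow$ (SP, FP, JD), SP is immediate: instantiating Strong SP with $v_i(J)\in\mathcal{A}$ forces $\alpha=v_i(J)$ and gives exactly the SP inequalities. For FP, instantiate Strong SP with $v_i(J)=\alpha\in\mathcal{A}$ and $w_i=v_i[J:=\epsilon]$ for $\epsilon\in\{\circ,\otimes\}$; the strict-inequality cases of FP are then direct, and the equality case $\varphi(\textbf{v})(J)=\alpha$ is handled by applying Strong SP in the reverse direction (starting from the deviated profile, where $w_i(J)\in\mathcal{E}$ makes $\alpha$ unconstrained, and switching back to the grade) to rule out any strict drop. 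For JD, it suffices by chaining single-voter ballot changes to show invariance of $\varphi(\cdot)(J)$ under a modification of one voter's ballot that preserves the $J$-vote and eligibility. If $v_i(J)\in\mathcal{A}$, apply Strong SP with $\alpha=v_i(J)$ in both directions to force equality; if $v_i(J)\in\mathcal{E}$, Strong SP applies for every $\alpha\in\mathcal{A}$, and squeezing with $\alpha$-values straddling $\varphi(\textbf{v})(J)$ collapses the two grades to the same value.

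For (SP, FP, JD) $\Rightarrow$ Strong SP, first use JD to replace $\textbf{w}=\textbf{v}[v_i:=w_i]$ by $\tilde{\textbf{w}}=\textbf{v}[v_i(J):=w_i(J)]$ without affecting the $J$-grade, reducing the problem to a single-column change. Then proceed by case analysis. (a) If $v_i(J),w_i(J)\in\mathcal{A}$, then $v_i(J)=\alpha$ and SP yields the conclusion. (b) If $v_i(J)=\alpha\in\mathcal{A}$ and $w_i(J)\in\{\circ,\otimes\}$, FP yields it directly since $\varphi(\textbf{v})(J)>\alpha$ implies $\varphi(\textbf{v})(J)\geq v_i(J)$. (c) If $v_i(J)\in\{\circ,\otimes\}$, introduce the intermediate profile $\textbf{v}^*=\textbf{v}[v_i(J):=\alpha]$. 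Applying FP to $\textbf{v}^*\to\textbf{v}$ under the hypothesis $\varphi(\textbf{v})(J)>\alpha$ rules out $\varphi(\textbf{v}^*)(J)\leq\alpha$, leaving $\varphi(\textbf{v}^*)(J)>\alpha$ and $\varphi(\textbf{v})(J)\geq\varphi(\textbf{v}^*)(J)$. Then SP (if $w_i(J)\in\mathcal{A}$) or FP (if $w_i(J)\in\{\circ,\otimes\}$) applied to $\textbf{v}^*\to\tilde{\textbf{w}}$ gives $\varphi(\tilde{\textbf{w}})(J)\geq\varphi(\textbf{v}^*)(J)$. Combining this with the reverse-direction FP applied at $\tilde{\textbf{w}}$ pins down the equality $\varphi(\textbf{v})(J)=\varphi(\textbf{v}^*)(J)$ within the relevant monotone segment, closing the inequality chain to $\varphi(\tilde{\textbf{w}})(J)\geq\varphi(\textbf{v})(J)$. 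The symmetric case $\varphi(\textbf{v})(J)<\alpha$ is handled by reversing all inequalities.

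The principal obstacle is case (c) of the backward direction. FP only directly controls deviations from a grade to an element of $\mathcal{E}$, so the $\epsilon$-to-grade direction has to be recovered indirectly through the intermediate grade-$\alpha$ profile, and one must exploit FP both in its forward form (at $\textbf{v}^*$) and in its reverse form (applied at the deviated profile) to force equalities rather than just the one-sided inequalities. A secondary subtlety appears in the forward direction's JD argument when $\varphi(\textbf{v})(J)$ lies at the boundary of the image of $\mathcal{A}$ in $\mathcal{B}$: the squeeze by straddling $\alpha$-values only directly yields equality when $\mathcal{A}$ has elements on both sides, and the boundary cases are handled by combining the single-sided Strong SP inequality with the SP phantom-mapping formula of Theorem~\ref{SP theo}.
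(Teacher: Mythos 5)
The paper states this lemma without providing a proof, so there is nothing to compare your argument against; judged on its own terms, your forward direction (Strong SP $\Rightarrow$ SP, FP, JD) is sound -- the instantiations you describe, including the two-sided application of Strong SP to handle the equality case of FP and the squeeze over straddling $\alpha$'s for JD, all go through (for finite $\mathcal{A}$, where $\inf\mathcal{A},\sup\mathcal{A}\in\mathcal{A}$). The problem is case (c) of your backward direction, exactly the step you flag as the principal obstacle. From $\textbf{v}^*=\textbf{v}[v_i(J):=\alpha]$ you correctly obtain $\varphi(\textbf{v})(J)\geq\varphi(\textbf{v}^*)(J)$ and $\varphi(\tilde{\textbf{w}})(J)\geq\varphi(\textbf{v}^*)(J)$, but these two one-sided bounds do not compare $\varphi(\textbf{v})(J)$ with $\varphi(\tilde{\textbf{w}})(J)$. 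The ``reverse-direction FP applied at $\tilde{\textbf{w}}$'' that you invoke to force $\varphi(\textbf{v})(J)=\varphi(\textbf{v}^*)(J)$ is not available: when $w_i(J)\in\{\circ,\otimes\}$, FP cannot be applied at $\tilde{\textbf{w}}$ at all (its hypothesis requires the deviating voter to hold a grade there), and when $w_i(J)\in\mathcal{A}$ the FP inequality at $\tilde{\textbf{w}}$ is oriented by the sign of $\varphi(\tilde{\textbf{w}})(J)-w_i(J)$, which is not the direction you need and says nothing about $\varphi(\textbf{v}^*)(J)$.

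The gap is not repairable with the axioms as literally stated, because the implication (SP,FP,JD)$\Rightarrow$Strong SP is false under those definitions. Take one candidate $J$, two eligible voters, $\mathcal{A}=\{0,1\}$, $\mathcal{B}=[0,1]$, and define $\varphi(\textbf{v})(J)$ by $med(v_1,v_2,0.2,0.5,0.8)$ when both grade, $med(v,0.3,0.6)$ when one grades and the other is $\circ$, $med(v,0.4,0.7)$ when the other is $\otimes$, and $0.5$ when neither grades. This is SP (Moulin medians with phantoms depending only on non-graders), JD (one candidate), and FP (every required inequality reduces to $0.2\leq 0.3,0.4\leq 0.5$, $0.5\leq 0.6,0.7\leq 0.8$, and $0.3,0.4\leq 0.5\leq 0.6,0.7$, all of which hold). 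Yet at $\textbf{v}=(0,\circ)$ we have $\varphi(\textbf{v})(J)=0.3<1=\alpha$, while switching voter $2$ to $\otimes$ gives $0.4>0.3$ and switching to the grade $1$ gives $0.5>0.3$, both violating Strong SP. The obstruction is structural: FP constrains only deviations from a grade to an element of $\{\circ,\otimes\}$, so it places $\varphi$ at the two non-graded profiles inside a common interval but never relates them to each other, nor does it bound the $\epsilon$-to-grade move in the direction Strong SP requires. Any correct proof of the lemma needs a strengthened participation axiom (covering $\circ\leftrightarrow\otimes$ and $\epsilon$-to-grade deviations), and you should flag this rather than assert the equivalence as stated.
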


\section{Additional properties and their characterisations}\label{section prop}
In the following sections we describe different properties (or axioms) that are considered desirable.

\subsection{Unanimity : U}

Unanimity usually corresponds to the situation where all voters agree on something. It is quite natural that in such an instance we want the outcome to also agree. In our setting, we desire a stronger notion of unanimity. When all voters that graded a candidate agree about the grade, then the outcome for that candidate should be that grade.

\begin{definition}[Unanimity : U]
A grading function $\varphi$ is unanimous iff:
\[\forall \textbf{v}, \forall \alpha \in \mathcal{A}, \forall i, v_i(J) \in \{\alpha\} \cup \mathcal{E} \wedge \exists i, v_i(J) = \alpha \Rightarrow \varphi(\textbf{v})(J) = \alpha. \]
\end{definition}


\begin{theorem}\label{U theo}
A SP grading function $\varphi$ is unanimous iff the phantom-mappings satisfy:
\[\forall J, \emptyset \subset T \subseteq \mathcal{D''}_J, \omega_{J,\emptyset}^T \leq \inf \mathcal{A} \wedge \omega_{J,T}^T\geq\sup\mathcal{A}.\]
(Proof \ref{U theo proof}.)
\end{theorem}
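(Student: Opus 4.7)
The plan is to reduce both directions to the explicit max-min formula given by Theorem \ref{SP theo}. Since $\varphi$ is SP, for any profile $\textbf{v}$ and any candidate $J$, writing $T = \mathcal{D''}_J(\textbf{v})$ we have
\[\varphi(\textbf{v})(J) = \max_{S \subseteq T} \min\bigl(\{v_i(J): i \in S\} \cup \{\omega_{J,S}^{T}(\textbf{v}_{-T})\}\bigr).\]
The key observation is that if the profile is unanimous on $J$ with common grade $\alpha\in \mathcal{A}$, then every term $\{v_i(J):i\in S\}$ collapses to $\{\alpha\}$ (or is empty when $S=\emptyset$), so the formula simplifies dramatically, and moreover by monotonicity of $S \mapsto \omega_{J,S}^T$ (also from Theorem \ref{SP theo}) the outer max is attained either at $S=\emptyset$ or at $S=T$.

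For the ``only if'' direction, I would fix any non-empty $T \subseteq \mathcal{D}_J$ and construct a profile $\textbf{v}$ in which $\mathcal{D''}_J(\textbf{v}) = T$ and every voter in $T$ grades $J$ with a common $\alpha \in \mathcal{A}$ (ballots outside $T$ for $J$ and all ballots for other candidates may be set arbitrarily, though one fixes them to define $\textbf{v}_{-T}$). By the simplification above,
\[\varphi(\textbf{v})(J) \;=\; \max\bigl(\omega_{J,\emptyset}^{T}(\textbf{v}_{-T}),\ \min(\alpha,\omega_{J,T}^{T}(\textbf{v}_{-T}))\bigr).\]
Unanimity forces this to equal $\alpha$. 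Letting $\alpha$ range over $\mathcal{A}$ gives $\omega_{J,\emptyset}^{T}(\textbf{v}_{-T}) \leq \alpha$ for every $\alpha\in\mathcal{A}$ and $\omega_{J,T}^{T}(\textbf{v}_{-T}) \geq \alpha$ for every $\alpha\in\mathcal{A}$, i.e.\ $\omega_{J,\emptyset}^{T} \leq \inf\mathcal{A}$ and $\omega_{J,T}^{T} \geq \sup\mathcal{A}$.

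For the ``if'' direction, assume the phantom conditions and take any profile $\textbf{v}$ satisfying the hypothesis of unanimity on $J$, so $T := \mathcal{D''}_J(\textbf{v}) \neq \emptyset$ and $v_i(J) = \alpha$ for every $i \in T$. Using the same simplification,
\[\varphi(\textbf{v})(J) = \max\bigl(\omega_{J,\emptyset}^{T}(\textbf{v}_{-T}),\ \min(\alpha,\omega_{J,T}^{T}(\textbf{v}_{-T}))\bigr).\]
The hypothesis $\omega_{J,T}^{T} \geq \sup\mathcal{A} \geq \alpha$ collapses the inner $\min$ to $\alpha$, and $\omega_{J,\emptyset}^{T} \leq \inf\mathcal{A}\leq \alpha$ collapses the outer $\max$ to $\alpha$, giving $\varphi(\textbf{v})(J)=\alpha$ as required.

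The main subtlety I would pay attention to is the reduction of $\max_{S \subseteq T}$ to the two extremes $S=\emptyset$ and $S=T$: this requires the monotonicity $\omega_{J,S}^{T} \leq \omega_{J,S'}^{T}$ for $S \subseteq S'$, since for $\emptyset \subsetneq S \subseteq T$ the inner min equals $\min(\alpha,\omega_{J,S}^T)$ and this is nondecreasing in $S$, so the max over non-empty $S$ is reached at $S=T$. A second point to handle carefully is that the statement quantifies over all non-empty $T \subseteq \mathcal{D''}_J$, which must be read as ranging over all non-empty $T\subseteq \mathcal{D}_J$ since $\mathcal{D''}_J(\textbf{v})$ varies with the profile; this is why in the ``only if'' direction we must be able to build, for each such $T$ and each prescribed $\textbf{v}_{-T}$, a unanimous witness profile with exactly $\mathcal{D''}_J = T$, which is immediate by letting voters outside $T$ abstain or vote blank on $J$.
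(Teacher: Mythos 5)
Your proposal is correct and follows essentially the same route as the paper's proof: construct a profile in which all graders of $J$ give a common $\alpha\in\mathcal{A}$, plug it into the max--min characterization of Theorem \ref{SP theo}, and read off $\omega_{J,\emptyset}^{T}\leq\inf\mathcal{A}$ and $\omega_{J,T}^{T}\geq\sup\mathcal{A}$. The only differences are cosmetic: the paper argues the ``only if'' direction by contradiction and leaves the (trivial) converse implicit, whereas you argue directly and spell out both directions, including the reduction of $\max_{S\subseteq T}$ to the extremes $S=\emptyset$ and $S=T$ via the monotonicity of $S\mapsto\omega_{J,S}^{T}$.
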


Interestingly enough, we can see that if $\varphi$ is unanimous then so long as $J$ received at least one grade the outcome for $J$ is always in between $\inf \mathcal{A}$ and $\sup\mathcal{A}$ included. 

It is known that in single-peaked settings, Unanimity is equivalent of Pareto Optimality \cite{Weymark:2011}.
It is also true in our setting (see appendix \ref{Pareto}). This means that (SP,U) are equivalent to: for a fixed candidate $J$, no grade $\beta \in \mathcal{B}$ could be closer to the grade a voter provided without being further away from the grade another voter provided. 

Unanimity is not a natural property to obtain when considering phantom-proxy mechanisms. In order to have a unanimous phantom-proxy mechanism we would need to be sure that the number of proxy votes is never greater than the number of grades provided. In the instance where only one grade was provided we therefore cannot have any proxy votes or that the $g_J$ function always selects the right outcome.

\begin{proposition}
A phantom-proxy $\psi$ mechanism is unanimous (U) iff $\mathcal{F}_J=\emptyset$
\end{proposition}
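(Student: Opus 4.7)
The $\Leftarrow$ direction is immediate: when every $f_{i,J}$ returns $\emptyset$, the mechanism reduces to $\varphi(\textbf{v})(J) = \mu_{g_J(\#\mathcal{D''}_J)}(\textbf{v}(J))$, and whenever the hypothesis of (U) holds $\textbf{v}(J)$ is a constant multi-set of copies of $\alpha$, so any order statistic returns $\alpha$.

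For the $\Rightarrow$ direction I would argue by contrapositive. Assume some $f_{i_0,J}$ takes a non-empty value $\beta \in \mathcal{B}$ on some ballot $v_{i_0}^{\ast}$; by the constraint built into the definition of a phantom-proxy mechanism we then have $v_{i_0}^{\ast}(J) \in \{\emptyset,\circ,\otimes\}$, so $i_0$ does not grade $J$. I would engineer a profile $\textbf{v}$ in which voter $i_0$ submits $v_{i_0}^{\ast}$, a single other voter $k \in \mathcal{D}_J \setminus \{i_0\}$ grades $J$ with some $\alpha \in \mathcal{A}$, and every remaining voter submits a ballot composed entirely of $\emptyset$'s and $\otimes$'s; the same constraint then forces those voters to contribute no proxy. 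This yields $\mathcal{D''}_J(\textbf{v}) = \{k\}$, $\mathcal{F}_J(\textbf{v}) = \{\beta\}$, and
\[
\varphi(\textbf{v})(J) = \mu_{g_J(2)}(\{\alpha,\beta\}),
\]
which equals $\min(\alpha,\beta)$ or $\max(\alpha,\beta)$ according to whether $g_J(2) = 1$ or $g_J(2) = 2$.

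The contradiction then comes from varying $\alpha$ over $\mathcal{A}$. Unanimity demands $\mu_{g_J(2)}(\{\alpha,\beta\}) = \alpha$ for every $\alpha \in \mathcal{A}$: if $g_J(2) = 1$ this forces $\beta \geq \sup \mathcal{A}$, and if $g_J(2) = 2$ it forces $\beta \leq \inf \mathcal{A}$. As long as $\beta$ lies strictly between $\inf \mathcal{A}$ and $\sup \mathcal{A}$, picking $\alpha$ on each side of $\beta$ contradicts both options simultaneously, and we are done.

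The delicate point, and the main obstacle I anticipate, is the boundary situation where $\beta$ coincides with $\sup \mathcal{A}$ or $\inf \mathcal{A}$: a one-grader/one-proxy profile no longer yields an outright contradiction. To close this gap I would enlarge the construction, either by taking $m = 2$ graders of the same value $\alpha$ together with one proxy, which forces $g_J(3) \in \{1,2\}$ or $\{2,3\}$ depending on the side of $\beta$, or by adding a second voter $i_1$ with a ballot producing a different proxy value, so that the position $g_J(m+p)$ becomes simultaneously pinned to incompatible positions as $m+p$ varies. No consistent $g_J$ remains, which completes the contrapositive.
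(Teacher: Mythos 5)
Your $\Leftarrow$ direction is fine, and your one-grader-plus-one-proxy construction for $\Rightarrow$ is the natural line of attack; for what it is worth, the paper states this proposition with no proof at all, offering only the informal remark that when a single grade is in the pool there can be no proxy ``or the $g_J$ function always selects the right outcome'' --- a caveat that already hints at the problem. The boundary situation you flag is not a removable technicality: it is precisely where the claimed equivalence fails, so no patch can close the gap. Concretely, take $\mathcal{A}=\{1,\dots,5\}$, $\mathcal{B}=[1,5]$, two candidates $I,J$, let $f_{i_0,J}(v_{i_0})=5=\sup\mathcal{A}$ whenever $v_{i_0}(J)=\otimes$ and $v_{i_0}(I)\in\mathcal{A}$ (so the constraint built into the definition of a phantom-proxy mechanism is respected), all other proxy functions constant equal to $\emptyset$, and $g_J\equiv 1$ (always the minimum of the pool). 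On any profile satisfying the premise of (U) for $J$, the pool is a nonempty multiset of copies of some $\alpha\in\mathcal{A}$ together possibly with the proxy $5\geq\alpha$, and its minimum is $\alpha$; the mechanism is unanimous even though $\mathcal{F}_J(\textbf{v})=\{5\}\neq\emptyset$ on some profiles. This is consistent with Theorem \ref{U theo} combined with Proposition \ref{SP charac prop}: unanimity only forces the reachable order statistics of $\mathcal{F}_J(\textbf{v})$ out of the open interval $(\inf\mathcal{A},\sup\mathcal{A})$ (here $\omega_{J,T}^T=\mu_1(\mathcal{F}_J(\textbf{v}))=5\geq\sup\mathcal{A}$ and $\omega_{J,\emptyset}^T=\inf\mathcal{B}$), not $\mathcal{F}_J$ to be empty.

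Neither of your proposed repairs works. With $m$ graders all at $\alpha$ and a single proxy $\beta\geq\sup\mathcal{A}$, unanimity only forces $g_J(m+1)\leq m$, which $g_J\equiv 1$ satisfies for every $m$; you never pin $g_J$ to incompatible positions because in every test profile the proxy sits on the same side of every admissible $\alpha$. And introducing a second voter whose proxy takes a different value presupposes that such a value is attained by some proxy function, which need not be the case. What your two-element-pool argument does establish correctly is the following weaker statement: if some $f_{i_0,J}$ attains a value $\beta$ with $\inf\mathcal{A}<\beta<\sup\mathcal{A}$, then $\psi$ is not unanimous (choose $\alpha=\inf\mathcal{A}$ to force $g_J(2)=1$ and $\alpha=\sup\mathcal{A}$ to force $g_J(2)=2$). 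I would recommend either proving that version, or restating the proposition with the extra hypothesis that proxy values lie strictly between $\inf\mathcal{A}$ and $\sup\mathcal{A}$; as written, the ``only if'' direction is false and cannot be proved. (A further minor point: your construction also silently assumes $\mathcal{D}_J\setminus\{i_0\}\neq\emptyset$; when nobody can grade $J$ the premise of (U) is vacuous for $J$ and proxies for $J$ are again harmless.)
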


\subsection{Neutrality : N, SN and F}
A decision maker is neutral if it is considered a fair independent judge that treats all candidates equality. The decision they make should only depend on the information provided for the decision process and not any personal preferences regarding the candidates. 
We therefore expect that if two candidates swapped their names (without telling the decision maker) everything else equal, their positions will have swapped in the outcome (everything else equal). In order to promote fairness we therefore wish to introduce a notion of neutrality. However the notion of neutrality conflicts with restricted voting. How can we expect the regulator to treat candidates the same when they have already created non-equal treatment with the voting rights. Our first notion of neutrality (N) is therefore very restrictive. The regulator can only be undiscriminating between two candidates if he gave them the same set of voters. The stronger version of neutrality (SN) is more permissive, as voting rights are not taking into account. We also introduce fairness (F), a weak of neutrality that is specific to phantom-proxy mechanism

\subsubsection{Neutrality : N}
In the case of restricted neutrality (or neutrality), we can only swap the votes for 2 candidates if they have the same rights to vote. 
We therefore expect that the notion of restricted neutrality is linked to the ability to bound a candidate to the set of voters that voted for it. 

\begin{definition}[Neutrality: N]
    A function $\lambda : \mathcal{O}^{\mathcal{M} \times \mathcal{N}} \rightarrow {(\mathcal{B} \cup \{\emptyset\})}^\mathcal{M}$ is neutral (N) if any two candidates $I$ and $J$ such that $\mathcal{D}_I=\mathcal{D}_J$:
    \[
 \varphi(\textbf{v}[v_i[I:=v_i(J);J:=v_i(I)] : \forall i \in \mathcal{D}_I])
    = \varphi(\textbf{v})[I:= \varphi(\textbf{v})(J); J := \varphi(\textbf{v})(I)]
\]
\end{definition}

The following theorem is therefore relatively expected.

\begin{theorem}\label{N theo}
A SP grading function $\varphi$ verifies N if there exists a neutral $\omega_{U,S}^{T}$ functions where $T \subseteq U \subseteq \mathcal{N}$ such that if $\mathcal{D}_J = U$ then $\omega_{J,S}^{\mathcal{D''}_J}(\textbf{v}_{-\mathcal{D''}_J})=\omega_{U,S}^{\mathcal{D''}_J}(\textbf{v}_{-\mathcal{D''}_J})$. (Proof \ref{N theo proof}.)
\end{theorem}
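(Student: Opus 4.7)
The proof is a characterization, so I would prove both directions using the max-min representation from Theorem \ref{SP theo}. The guiding intuition is that neutrality of $\varphi$ translates, at the level of the representation, into the phantom-mappings depending on a candidate $J$ only through its eligibility set $\mathcal{D}_J$, together with a symmetry of $\omega_{U,S}^T$ itself under candidate-swaps inside $\textbf{v}_{-T}$.

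For the easier direction, suppose the phantom-mappings depend only on $U = \mathcal{D}_J$ and are themselves neutral. Given two candidates $I, J$ with $\mathcal{D}_I = \mathcal{D}_J = U$, I would substitute the swapped profile $\textbf{v}' := \textbf{v}[v_i[I:=v_i(J); J:=v_i(I)] : i \in U]$ into the max-min formula for $\varphi(\textbf{v}')(I)$ and $\varphi(\textbf{v}')(J)$. The swap exchanges $\mathcal{D''}_I$ with $\mathcal{D''}_J$ and the grade-multisets $\{v_i(I):i\in S\}$ with $\{v_i(J):i\in S\}$, while the shared phantom-mapping $\omega_{U,S}^{\mathcal{D''}_J}$ is invariant under the analogous swap on $\textbf{v}_{-\mathcal{D''}_J}$. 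The two max-min evaluations therefore exchange values, which is exactly axiom (N). For the other direction, I would fix $U$ and show that whenever $\mathcal{D}_I = \mathcal{D}_J = U$, the phantom-mappings $\omega_{I,S}^T$ and $\omega_{J,S}^T$ coincide, so that a common $\omega_{U,S}^T$ is well defined. To extract $\omega_{J,S}^T(\textbf{v}_{-T})$ from outcomes alone, I would construct a profile with $\mathcal{D''}_J = T$, voters in $S$ casting $\sup\mathcal{A}$ and voters in $T\setminus S$ casting $\inf\mathcal{A}$; combined with the monotonicity $\omega_{J,S}^T \leq \omega_{J,S'}^T$ from Theorem \ref{SP theo}, the max-min formula collapses to reveal $\omega_{J,S}^T$ (up to clipping at $\sup\mathcal{A}$). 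Arranging the same votes on $I$ and applying (N) to the candidate-swap then forces $\omega_{I,S}^T = \omega_{J,S}^T$, and varying the profile outside $T$ yields the neutrality of the common function itself.

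The main obstacle is the bookkeeping of the extraction step: recovering the phantom-mappings directly from observable outcomes breaks down when $\omega_{J,S}^T > \sup\mathcal{A}$ or $\omega_{J,S}^T < \inf\mathcal{A}$, since the boundary grades clip the max-min. The clean workaround is the extension remark following Theorem \ref{SP theo}: first extend $\varphi$ to a strategy-proof mechanism with $\mathcal{A}=\mathcal{B}$, carry out the extraction and the (N)-matching on the extended domain, and then restrict back. A secondary subtlety is to pin down precisely what "neutral $\omega_{U,S}^T$" means as a condition on the residual profile $\textbf{v}_{-T}$; one needs invariance under the candidate-relabelling applied inside $\textbf{v}_{-T}$ whenever two other candidates share an eligibility set, and this recursion on candidates must be shown to be consistent with the well-definedness of $\omega_{U,S}^T$.
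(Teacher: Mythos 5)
Your proposal is correct and follows essentially the same route as the paper: the converse direction is the same direct substitution of the swapped profile into the max-min formula, and your extraction step (completing a profile with extreme grades on $S$ and $T\setminus S$ so that the outcome reveals $\omega_{J,S}^{T}(\textbf{v}_{-T})$, then transferring via the candidate swap) is exactly the content of the paper's Lemma \ref{useful lemma 2}, applied both to third candidates $K$ (to get neutrality of the mappings themselves) and to the pair $I,J$ (to get $\omega_{I,S}^{T}=\omega_{J,S}^{T}$). The only divergence is in handling the boundary clipping, where the paper normalizes the phantom-mappings via Lemma \ref{useful lemma 1} rather than extending to $\mathcal{A}=\mathcal{B}$ --- a workaround that would additionally require checking that the (arbitrary) extension can be chosen to preserve (N).
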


\begin{proposition}\label{N prop}
Let $\bigcup \mathcal{M}_S =\mathcal{M}$ be the partition of $\mathcal{M}$ defined by $J \in \mathcal{M}_S$ if $\mathcal{D}_J = S$. A phantom-proxy mechanism $\psi$ is N iff for all $i$, and all $S\subseteq \mathcal{N}$ we have an neutral function $f_{i,S}$ and a function $g_S$ such that $f_{i,J}=f_{i,S}$ and $g_J=g_S$ if $J \in \mathcal{M}_S $. (Proof \ref{N prop proof}.)
\end{proposition}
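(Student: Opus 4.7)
The plan is to prove both directions, leveraging the explicit structure of the phantom-proxy formula $\psi(\textbf{v})(J) = \mu_{g_J(\#\mathcal{D}''_J + \#\mathcal{F}_J(\textbf{v}))}(\textbf{v}(J) \cup \mathcal{F}_J(\textbf{v}))$ and the definition of N, which swaps the $I$- and $J$-coordinates of every ballot $v_i$ for $i \in \mathcal{D}_I = \mathcal{D}_J$. Here ``neutral'' for $f_{i,S}$ should be read as invariance of $f_{i,S}(v_i)$ under any relabeling of $\mathcal{M}$ that preserves the partition $\{\mathcal{M}_{S'}\}_{S'}$, so that in particular transpositions within a single class $\mathcal{M}_{S'}$ do not affect its value.

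For $(\Leftarrow)$, assume the structural condition holds. Fix $I, J \in \mathcal{M}_S$ and let $\textbf{v}'$ be the swapped profile. I verify the N-equation candidate by candidate. For $K = J$, one has $\textbf{v}'(J) = \textbf{v}(I)$ as multisets, and each contributing proxy is $f_{i,J}(v_i') = f_{i,S}(v_i')$; since the swap $(I\;J)$ is a partition-preserving permutation of $\mathcal{M}$, neutrality gives $f_{i,S}(v_i') = f_{i,S}(v_i) = f_{i,I}(v_i)$, whence $\mathcal{F}_J(\textbf{v}') = \mathcal{F}_I(\textbf{v})$. Combined with $g_J = g_I = g_S$ and matching $\#\mathcal{D}''$-counts, this yields $\psi(\textbf{v}')(J) = \psi(\textbf{v})(I)$, and symmetrically for $I$. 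For a third candidate $K \in \mathcal{M}_{S'}$ with $K \notin \{I,J\}$, we have $\textbf{v}'(K) = \textbf{v}(K)$ directly, and $f_{i,K}(v_i') = f_{i,S'}(v_i') = f_{i,S'}(v_i)$ again by neutrality applied to the same partition-preserving swap, so $\psi(\textbf{v}')(K) = \psi(\textbf{v})(K)$.

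For $(\Rightarrow)$, assume $\psi$ is N with some phantom-proxy representation $(f_{i,J}, g_J)_{i,J}$. For every class $\mathcal{M}_S$, pick a representative $J_S$ and \emph{define} $g_S := g_{J_S}$ and $f_{i,S} := f_{i,J_S}$. I first show that one may replace the original $(f_{i,J}, g_J)$ by $(f_{i,S}, g_S)$ for every $J \in \mathcal{M}_S$ without changing $\psi$. Applying N to the transposition $(J\;J_S)$ on a family of diagnostic profiles extracts the required identities: by isolating a single proxy (e.g.\ profiles in which the proxy multiset $\mathcal{F}_J(\textbf{v})$ is forced to be a singleton or empty while $\textbf{v}(J)$ is varied) one reads off $g_J(k) = g_{J_S}(k)$ for every reachable count $k$; by isolating the proxy contribution of a single voter $i$ (set all other contributions to $\emptyset$ and choose $g$ so the outcome is precisely $f_{i,J}(v_i)$), one obtains $f_{i,J}(v_i) = f_{i,J_S}(v_i^{(J\,J_S)})$ for all ballots $v_i$, which is exactly the assertion $f_{i,J} = f_{i,S}$ once one remembers that $f_{i,S}$ must be invariant under the transposition. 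The neutrality (partition-preserving invariance) of $f_{i,S}$ is then obtained by iterating this argument: for any transposition $(J\;J')$ with $J, J' \in \mathcal{M}_{S'}$, applying N on profiles where the only informative input is the pair $v_i(J), v_i(J')$ shows $f_{i,S}(v_i) = f_{i,S}(v_i \circ (J\;J'))$, and partition-preserving permutations are generated by such transpositions.

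The main obstacle is the $(\Rightarrow)$ direction, and specifically the non-uniqueness of the phantom-proxy representation: two different triples $(f, g)$ can induce the same $\psi$, so N alone only constrains the composite output, not the individual ingredients. The resolution is to separate the information carried by $g_J$ (the index selected) from that carried by $\mathcal{F}_J$ (the pool of proxy grades) via carefully chosen test profiles — profiles that force either the $g$-selection or the $f$-output to be visible in $\psi(\textbf{v})(J)$ — and then to use N to transport these between candidates in the same class. Once $g_J$ and $f_{i,J}$ are shown to depend only on $S = \mathcal{D}_J$ (up to the partition-preserving symmetry), the structural condition stated in the proposition is exactly what has been proved.
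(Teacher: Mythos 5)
Your proposal is correct and follows essentially the same route as the paper: the converse direction is a direct candidate-by-candidate verification of the $N$-equation from the structural condition, and the forward direction extracts $g_J$ and the $f_{i,J}$ from the composite mechanism via test profiles (isolating the proxy pool so that its order statistics become visible in $\psi$) and then transports them across a class $\mathcal{M}_S$ using the transposition supplied by $N$. The only cosmetic difference is that the paper phrases the extraction step through its phantom-mapping characterization ($\omega_{J,S}^{T}=\mu_{k}(\mathcal{F}_J(\cdot))$ together with the completability lemma), whereas you argue directly on the order-statistic formula with bespoke diagnostic profiles; both correctly pin down $g$ only on reachable counts and interpret neutrality of $f_{i,S}$ as invariance under partition-preserving relabelings.
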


\subsubsection{Strong Neutrality : SN}
In strong neutrality we can swap the rights to vote. 
It implies that the regulator should not care who has which rights. For example, the allocation of rights to vote may be random or once the rights to vote was allocated the regulator forgot about them. Alternatively, the regulator may simply create the mechanism without defining the rights to vote yet and intends to use the (BV) property to determine all values once they decided on the rights to vote.

\begin{definition}[Strong neutrality : SN]
A method $\varphi$ is strong neutral (SN) if for any two candidates $I$ and $J$:
\[
 \varphi(\textbf{v}[v_i[I:=v_i(J);J:=v_i(I)] : \forall i])
    = \varphi(\textbf{v})[I:= \varphi(\textbf{v})(J); J := \varphi(\textbf{v})(I)]
\]

\end{definition}

\begin{theorem}[Strong Neutrality characterization]\label{SN theo}
A (SP) grading method $\varphi$ verifies (SN) iff there exists a strong neutral $\omega_{S}^{T}$ function such that for all $\mathcal{J}$ we have $\omega_{S,J}^{T}=\omega_{S}^T$. (Proof \ref{SN theo proof}.)
\end{theorem}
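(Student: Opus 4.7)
The plan is to use Theorem \ref{SP theo} as the backbone: any SP grading function admits the max-min representation with phantom-mappings $\omega_{J,S}^T$, and so the problem reduces to isolating exactly which constraints on the family $\{\omega_{J,S}^T\}_J$ correspond to (SN). I aim to show both directions by pinning down that (SN) is equivalent to candidate-independence of the phantoms together with invariance of the common phantom under candidate permutations of its input.

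For the sufficiency direction, suppose $\omega_{J,S}^T = \omega_S^T$ for every $J$, with $\omega_S^T$ strong-neutral (i.e.\ invariant under swapping any two candidate columns of its input). Fix candidates $I,J$ and form $\textbf{v}'$ by swapping $v_i(I)$ and $v_i(J)$ for every voter $i$. Then $\mathcal{D''}_I(\textbf{v}') = \mathcal{D''}_J(\textbf{v})$, the multisets $\{v'_i(I):i\in S\}$ and $\{v_i(J):i\in S\}$ coincide, and $\textbf{v}'_{-\mathcal{D''}_I(\textbf{v}')}$ differs from $\textbf{v}_{-\mathcal{D''}_J(\textbf{v})}$ only by swapping the $I$- and $J$-columns of the remaining voters. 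Strong neutrality of $\omega_S^T$ therefore yields $\omega_{I,S}^{\mathcal{D''}_J}(\textbf{v}'_{-\mathcal{D''}_J}) = \omega_{J,S}^{\mathcal{D''}_J}(\textbf{v}_{-\mathcal{D''}_J})$. Plugging into the max-min gives $\varphi(\textbf{v}')(I) = \varphi(\textbf{v})(J)$, and symmetrically the reverse identity; other candidates are untouched, so (SN) holds.

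For the necessity direction, fix $T \subseteq \mathcal{N}$, $S \subseteq T$, and a profile $\textbf{v}_{-T}$ on the remaining voters. I construct a profile $\textbf{v}$ with $\mathcal{D''}_J(\textbf{v}) = T$ that isolates the phantom value: set $v_i(J) = \sup \mathcal{A}$ for $i \in S$ and $v_i(J) = \inf \mathcal{A}$ for $i \in T \setminus S$. Using the monotonicity $\omega_{J,S'}^T \leq \omega_{J,S''}^T$ for $S' \subseteq S''$ from Theorem \ref{SP theo}, the max-min collapses: inner mins over $S'$ that meet $T\setminus S$ are bounded by $\inf \mathcal{A} \leq \omega_{J,S}^T$; inner mins over $S' \subseteq S$ equal $\omega_{J,S'}^T \leq \omega_{J,S}^T$; and $S'=S$ realises $\omega_{J,S}^T(\textbf{v}_{-T})$. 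Hence $\varphi(\textbf{v})(J) = \omega_{J,S}^T(\textbf{v}_{-T})$. Running the identical construction on $I$ after the $I\leftrightarrow J$ swap gives $\varphi(\textbf{v}')(I) = \omega_{I,S}^T(\textbf{v}'_{-T})$. Invoking (SN) forces $\omega_{J,S}^T(\textbf{v}_{-T}) = \omega_{I,S}^T(\textbf{v}'_{-T})$ for every $\textbf{v}_{-T}$, which is exactly candidate-independence plus strong-neutrality of the common function $\omega_S^T$.

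The main obstacle is making the isolation of $\omega_{J,S}^T$ rigorous: I must verify that the $\sup\mathcal{A}/\inf\mathcal{A}$ saturation of the voters inside $T$, combined with the phantom chain, forces every alternative term in the max-min to be dominated by $\omega_{J,S}^T(\textbf{v}_{-T})$. A minor subtlety is that phantom values can lie in $\mathcal{B}\setminus\mathcal{A}$, which is why the saturation uses the endpoints of $\mathcal{A}$ together with the standing assumption $\{\inf \mathcal{A},\sup\mathcal{A}\}\subseteq \mathcal{B}$. Once this isolation works on a single pair $(I,J)$, iterating over all candidate pairs and all permutations of columns of $\textbf{v}_{-T}$ delivers the full symmetry statement.
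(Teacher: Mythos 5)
Your argument follows the paper's own route: the sufficiency direction is the same direct substitution into the max-min formula, and your necessity direction's ``isolation'' of $\omega_{J,S}^T$ via extreme grades inside $T$ is precisely the content of the paper's Lemma~\ref{useful lemma 2}, after which (SN) is invoked exactly as in the paper to get candidate-independence and strong neutrality of the phantoms. The one loose end is that your saturation with $\inf\mathcal{A}$ and $\sup\mathcal{A}$ does not by itself isolate a phantom lying outside $[\inf\mathcal{A},\sup\mathcal{A}]$ (e.g.\ if $\omega_{J,S}^T<\inf\mathcal{A}\leq\omega_{J,T}^T$ the max-min returns at least $\inf\mathcal{A}$); you need the without-loss-of-generality normalization of the phantoms from the paper's Lemma~\ref{useful lemma 1}, which is legitimate here because the theorem only asserts the existence of a suitable choice of $\omega_S^T$.
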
 

In other words, a grading function is strongly neutral if we use the same phantom-mappings $\omega_{S}^T$ for all the candidates. This was to be expected. The easiest way to ensure that the outcome does not depend on the identities is to ensure that no step in process depends on their identities.

\begin{proposition}\label{SN prop}
A (BV) phantom-proxy mechanism $\psi$ verifies (SN) iff there is a strongly neutral function $f_i$ such that $f_{i,J}=f_i$ for all $J$ and there is a function $g$ such that for all $J$ we have $g=g_J$. (Proof \ref{SN prop proof}.)
\end{proposition}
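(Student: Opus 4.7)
The plan is to handle each direction separately by tracking how the phantom-proxy formula transforms under the $I\leftrightarrow J$ swap prescribed by (SN), using (BV) to treat $\emptyset$ and $\otimes$ as interchangeable inputs throughout.

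For the ``if'' direction, suppose $f_{i,J}=f_i$ is permutation-invariant in its candidate coordinates and $g_J=g$ for every $J$. Given a profile $\textbf{v}$, let $\textbf{v}'$ be the profile obtained by swapping the $I$- and $J$-columns of every voter. As multisets one has $\textbf{v}'(I)=\textbf{v}(J)$ and the cardinalities $\#\mathcal{D''}_I(\textbf{v}')$ and $\#\mathcal{D''}_J(\textbf{v})$ agree, while the strong-neutrality of $f_i$ gives $f_i(v'_i)=f_i(v_i)$ voter-by-voter, so $\mathcal{F}_I(\textbf{v}')=\mathcal{F}_J(\textbf{v})$. Plugging this into the phantom-proxy formula and using $g_I=g_J=g$ yields $\varphi(\textbf{v}')(I)=\varphi(\textbf{v})(J)$, and symmetrically for $J$.

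For the ``only if'' direction, I proceed in three stages. First, to force $g_I=g_J$, I use (BV) to build profiles in which every voter grades both $I$ and $J$ with arbitrary grades, so $\mathcal{F}_I(\textbf{v})=\mathcal{F}_J(\textbf{v})=\emptyset$ and the mechanism reduces to $\varphi(\textbf{v})(I)=\mu_{g_I(k)}(\textbf{v}(I))$ and $\varphi(\textbf{v})(J)=\mu_{g_J(k)}(\textbf{v}(J))$ for $k=\#\mathcal{D''}_I=\#\mathcal{D''}_J$. The (SN) swap then forces $\mu_{g_I(k)}=\mu_{g_J(k)}$ on every multiset of size $k$; choosing grades that discriminate between distinct order statistics gives $g_I(k)=g_J(k)$, and varying the size of the active electorate covers all $k$. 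Second, to obtain $f_{i,I}(v_i)=f_{i,J}(v_i)$ on the common active domain $v_i(I),v_i(J)\in\{\emptyset,\otimes\}$, I construct a profile in which a single voter $i_0$ blanks both $I$ and $J$ but grades some third candidate, while every other voter is blank everywhere (so their proxy contributions vanish). Then $\varphi(\textbf{v})(I)=f_{i_0,I}(v_{i_0})$ and $\varphi(\textbf{v})(J)=f_{i_0,J}(v_{i_0})$; since $v_{i_0}$ is fixed by the $I\leftrightarrow J$ swap, (SN) identifies these two values. Third, to derive the strong neutrality of the common $f_{i_0}$, I repeat the construction with $v_{i_0}(I)=\otimes$ and $v_{i_0}(J)=\beta\in\mathcal{A}$; the (SN) swap identifies $\varphi(\textbf{v})(I)=f_{i_0}(v_{i_0})$ with $\varphi(\textbf{v}')(J)=f_{i_0}(v_{i_0}[I\leftrightarrow J])$, yielding invariance under the transposition $(I\,J)$ and hence, by generation, under the full symmetric group on $\mathcal{M}$.

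The main obstacle will be the careful bookkeeping around $\emptyset$ versus $\otimes$: the (SN) axiom permutes ballots without regard to eligibility sets, so the swapped profile can place $\emptyset$-entries in positions where eligibility had previously been present, and (BV) is precisely the lever that licenses us to identify these with $\otimes$ and keep the counts and proxy multisets consistent. A secondary subtlety is interpreting ``$f_{i,J}=f_i$'' correctly: taken literally it clashes with the hard constraint that $f_{i,J}(v_i)=\emptyset$ whenever $v_i(J)\in\mathcal{A}$, so the statement must be read as asserting equality on the active domain $v_i(J)\in\{\emptyset,\otimes\}$, which is exactly what the three-stage construction delivers.
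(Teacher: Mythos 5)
Your proof is correct in substance, but it takes a genuinely different route from the paper's for the ``only if'' direction. The paper does not build bespoke profiles: it invokes the strong-neutrality characterization of the phantom-mappings (Theorem \ref{SN theo}) together with the explicit formula $\omega_{S}^{T}(\textbf{v}_{-T})=\mu_{\#S-\#T+g_J(\cdot)}(\mathcal{F}_J(\textbf{v}))$ from Proposition \ref{SP charac prop} and Lemma \ref{useful lemma 2}, deduces that all order statistics of $\mathcal{F}_K(\textbf{v})$ and $\mathcal{F}_K(\textbf{w})$ coincide (hence the proxy multisets coincide), and then reads off $f_{i,K}(v_i)=f_{i,K}(w_i)$ and, by varying $\#\mathcal{F}$, the absence of any index shift, i.e.\ $g_I=g_J$. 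Your version instead works directly from the order-statistic definition with two tailored constructions: an all-graders profile that empties the proxy pools and pins down $g_I(k)=g_J(k)$ (note that when $\#\mathcal{A}<k$ you cannot take $k$ distinct grades, but a two-valued multiset with $p$ copies of $\inf\mathcal{A}$ and $k-p$ of $\sup\mathcal{A}$ separates $\mu_p$ from $\mu_{p+1}$, so the discrimination step survives), and a single-active-voter profile that isolates $f_{i_0,I}(v_{i_0})$ and $f_{i_0,J}(v_{i_0})$ as the full outcome. The latter step is actually a point where your argument is \emph{more} careful than the paper's: passing from equality of the multisets $\mathcal{F}_K(\textbf{v})=\mathcal{F}_K(\textbf{w})$ to the voter-wise identity $f_{i,K}(v_i)=f_{i,K}(w_i)$ implicitly requires isolating voters, which the paper skips and you make explicit. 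Your observation that ``$f_{i,J}=f_i$'' can only be meant on the active domain $v_i(J)\in\{\emptyset,\otimes\}$ (since $f_{i,J}(v_i)=\emptyset$ is forced whenever $v_i(J)\in\mathcal{A}$) is also a legitimate correction of the statement as written. The residual weaknesses of your write-up are shared with the paper: $g_I=g_J$ is only established at pool sizes realizable given $\mathcal{D}_I$ and $\mathcal{D}_J$, the third-candidate construction needs $\#\mathcal{M}\geq 3$ (though the claim is vacuous otherwise), and the bootstrap from the transposition relation $f_{i,J}(v\circ(I\,J))=f_{i,I}(v)$ to full symmetric-group invariance of a single $f_i$ deserves one more line combining your stages two and three. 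What your approach buys is independence from the $\omega$-machinery (it would survive even if one only trusted the order-statistic definition of phantom-proxy mechanisms); what the paper's buys is brevity and uniformity with the proofs of Propositions \ref{N prop}, \ref{A prop} and \ref{SA prop}.
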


\subsubsection{Fairness : F}
In the specific case of phantom-proxy mechanism we have an additional notion of neutrality. The intuition behind the phantom-proxy mechanisms is that voters can be represented by the grade they submitted or by some proxy vote. Once the voting pool is selected, the mechanism decides who the winner is. A phantom-proxy mechanism is considered fair is 2 different candidates that provided the same voting pool get the same outcome.

\begin{definition}[Fairness : F]
A phantom-proxy mechanism $\psi$ is fair iff for any two candidates $I$ and $J$:
\[\textbf{v}(J) \cup \mathcal{F}_J(\textbf{v}) = \textbf{v}(J) \cup \mathcal{F}_J(\textbf{v}) \Rightarrow \psi(\textbf{v})_I=\psi(\textbf{v})_J\]
\end{definition}

\begin{theorem}
    A phantom-proxy $\psi$ is fair iff there exists $g$ such that for all $J$, $g_J=g$.
\end{theorem}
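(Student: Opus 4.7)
The plan is to prove both directions in parallel, with the ``if'' direction being essentially a tautology and the ``only if'' direction following by an explicit counter-profile argument. For the sufficiency, if $g_J = g$ for every $J$, then by definition
\[
\psi(\textbf{v})(J) \;=\; \mu_{g(|P_J|)}(P_J), \qquad P_J := \textbf{v}(J) \cup \mathcal{F}_J(\textbf{v}),
\]
so whenever two candidates share the same voting pool the formula depends only on the pool and fairness is immediate.

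For the necessity, I would argue by contraposition. Suppose there exist candidates $I,J$ and an integer $k$ (realizable as a pool size for both) with $g_I(k) \neq g_J(k)$; write $p := g_I(k) < g_J(k) =: q$ after possibly swapping $I$ and $J$. The counter-profile $\textbf{v}$ I would build picks $k$ voters in $\mathcal{D}_I \cap \mathcal{D}_J$: $p$ of them grade both $I$ and $J$ with $\inf\mathcal{A}$, and the remaining $k-p$ grade both $I$ and $J$ with $\sup\mathcal{A}$; every other voter is kept completely silent, with $v_i(K)\in\{\emptyset,\otimes\}$ for every $K$. By the first clause of the phantom-proxy definition the $k$ active voters trigger $f_{i,I}(v_i) = f_{i,J}(v_i) = \emptyset$ (since $v_i(I),v_i(J)\in\mathcal{A}$), while the remaining voters trigger the same conclusion via the second clause (they grade nothing at all). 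Hence no proxies are generated for $I$ or $J$, and
\[
P_I \;=\; \textbf{v}(I) \;=\; \{\inf\mathcal{A}^{\times p},\; \sup\mathcal{A}^{\times (k-p)}\} \;=\; \textbf{v}(J) \;=\; P_J.
\]
Applying the order statistics gives $\psi(\textbf{v})(I) = \mu_p(P_I) = \inf\mathcal{A}$ and $\psi(\textbf{v})(J) = \mu_q(P_J) = \sup\mathcal{A}$ (since $q > p$), which differ under the standing assumption $|\mathcal{A}|\geq 2$, contradicting fairness.

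The main obstacle will be making sure the constructed profile genuinely produces no proxy votes for either $I$ or $J$, so that the two voting pools are literally equal; this is precisely what forces the construction to use only voters who either grade both $I$ and $J$, or grade nothing, and it is the one place where both clauses of the phantom-proxy definition really get used. A secondary, purely bookkeeping point is that the argument only recovers $g_I(k) = g_J(k)$ for integers $k$ that can actually arise as a pool size (one needs $|\mathcal{D}_I \cap \mathcal{D}_J|\geq k$); since $g_J$ enters the mechanism only through such realizable sizes, one can then take $g$ to be the common restriction and extend arbitrarily to the rest of $\mathbf{N}$ so that $g_J = g$ for every candidate $J$, completing the proof.
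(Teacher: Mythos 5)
Your proof is correct, and it is worth noting that the paper states this theorem without any accompanying proof (it is the one result in Section 5 with no ``(Proof ...)'' pointer and no appendix entry), so you are supplying an argument the authors left implicit. The ``if'' direction is the expected tautology: when $g_J=g$ for all $J$, the output $\mu_{g(\#P_J)}(P_J)$ is a function of the pool alone. Your ``only if'' direction is the natural counter-profile: $p$ copies of $\inf\mathcal{A}$ and $k-p$ copies of $\sup\mathcal{A}$ cast identically on $I$ and $J$ by voters in $\mathcal{D}_I\cap\mathcal{D}_J$, everyone else silent, yields identical pools but $\mu_{p}=\inf\mathcal{A}\neq\sup\mathcal{A}=\mu_{q}$; the checks that $p\geq 1$ and $p<k$ follow from $0<g_I(k)\leq k$ and $q\leq k$, and your choice to use only voters who grade both candidates (or nothing) is exactly what is needed to prevent stray proxy votes from breaking the pool equality. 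The one substantive caveat is the one you already flag: fairness only constrains $g_I(k)$ and $g_J(k)$ at integers $k$ that are \emph{jointly} realizable as a common pool size (your construction needs $\#(\mathcal{D}_I\cap\mathcal{D}_J)\geq k$), so the literal conclusion ``$g_J=g$ for all $J$'' should be read as holding after a without-loss-of-generality modification of each $g_J$ off its realizable range --- the same convention the paper uses elsewhere when it ``chooses'' the phantom mappings in its appendix lemmas. With that reading made explicit, your argument is complete.
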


\subsection{Anonymity : A and SA}
In the classical sitting, a voting method is anonymous if the identity of a voter does not impact the outcome of the election. In other words any 2 voters can swap their ballots without it impacting the outcome. In our model with rights, this notion is troublesome. The identity of the voter impacts which candidates he has the right to vote for. It would therefore be natural that two voters without the same rights to vote cannot swap their ballots. We are therefore left with 2 different anonymity notions. The one were we are restricted and can only swap ballots if the voters have the same rights to vote and the one were any 2 ballots can be swapped as if we are ignoring the rights to vote. 





\subsubsection{Anonymity : A}
Just like with neutrality, we 
consider that 2 voters can only swap their ballots if they have the same rights to vote. 
\begin{definition}[Anonymity]
    A function $\lambda : \mathcal{O}^{\mathcal{M} \times \mathcal{N}} \rightarrow {(\mathcal{B} \cup \{\emptyset\})}^\mathcal{M}$ is anonymous (A) if for any $\textbf{v}$ and for any $\textbf{w}$ obtained from $\textbf{v}$ by switching the ballots of 2 players that had the same rights to vote we have:
    \[\varphi(\textbf{v}) = \varphi(\textbf{w}).\]
\end{definition}

Let $\bigcup_{M\subseteq \mathcal{M}} \mathcal{N}_M$ be the partition of $\mathcal{N}$ such that if $\mathcal{C}_i = M$ then $i \in \mathcal{N}_M$. This partition provides us the sets of voters that can swap their votes. We say that $T$ and $U$ have the same partition-cardinal if $\forall M \subseteq \mathcal{M}, \#(\mathcal{N}_M\cap T)=\#(\mathcal{N}_M\cap T)$.

\begin{theorem}[Anonymous characterization]\label{A theo}
    A SP grading function $\varphi $ is anonymous iff all phantom-mapping are anonymous and if for all $J$, if $T$ and $T'$ have the same cardinal-partition, and if $S\subseteq T$ and $S'\subseteq T'$ have the same cardinal-partition then $\omega_S^T=\omega_{S'}^{T'}$ (Proof \ref{A theo proof}).
\end{theorem}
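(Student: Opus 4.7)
The plan is to work through both directions using the SP characterization (Theorem \ref{SP theo}), which gives an explicit maxmin form $\varphi(\textbf{v})(J)=\max_{S\subseteq\mathcal{D''}_J}\min(\{v_i(J):i\in S\}\cup\{\omega_{J,S}^{\mathcal{D''}_J}(\textbf{v}_{-\mathcal{D''}_J})\})$, and to exploit that any rights-preserving permutation $\sigma$ (one that fixes each partition class $\mathcal{N}_M$ setwise) acts compatibly on the index sets $S$ and $T$ of the phantom-mappings.

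For the ($\Leftarrow$) direction, fix a transposition $\sigma$ of two voters $i,j$ with $\mathcal{C}_i=\mathcal{C}_j$, set $\textbf{w}=\sigma\cdot\textbf{v}$, and fix a candidate $J$. Either $J\notin\mathcal{C}_i$, in which case $\mathcal{D''}_J(\textbf{w})=\mathcal{D''}_J(\textbf{v})$ and only $\textbf{v}_{-\mathcal{D''}_J}$ is permuted; or $J\in\mathcal{C}_i$, in which case $\sigma$ induces a bijection $S\mapsto\sigma(S)$ between subsets of $\mathcal{D''}_J(\textbf{w})$ and of $\mathcal{D''}_J(\textbf{v})$, and the multiset $\{w_k(J):k\in S\}$ equals $\{v_k(J):k\in\sigma(S)\}$. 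In both cases, applying the second hypothesis (the phantom-mapping depends only on the cardinal-partitions of $S,T$, and $\sigma$ preserves cardinal-partitions because it preserves the classes $\mathcal{N}_M$) together with the anonymity of each phantom-mapping gives $\omega_{J,S}^{\mathcal{D''}_J(\textbf{w})}(\textbf{w}_{-\mathcal{D''}_J(\textbf{w})})=\omega_{J,\sigma(S)}^{\mathcal{D''}_J(\textbf{v})}(\textbf{v}_{-\mathcal{D''}_J(\textbf{v})})$. Each term in the max over $S$ therefore matches a term in the max for $\textbf{v}$, so $\varphi(\textbf{w})(J)=\varphi(\textbf{v})(J)$.

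For the ($\Rightarrow$) direction, the strategy is to recover the phantom-mappings explicitly from $\varphi$ and transport the anonymity of $\varphi$ to them. Fix $J$, $T\subseteq\mathcal{D}_J$, $S\subseteq T$, and a profile $\textbf{v}_{-T}$ on the remaining voters. Build a profile $\textbf{v}$ by setting $v_i(J)=\sup\mathcal{A}$ for $i\in S$ and $v_i(J)=\inf\mathcal{A}$ for $i\in T\setminus S$. Using the maxmin formula and the monotonicity of the phantoms ($S\subseteq S'\Rightarrow\omega_{J,S}^T\leq\omega_{J,S'}^T$), every inner min for $S''\not\subseteq S$ collapses to $\inf\mathcal{A}$, while every $S''\subseteq S$ contributes exactly $\omega_{J,S''}^T(\textbf{v}_{-T})$, whose maximum is $\omega_{J,S}^T(\textbf{v}_{-T})$; thus $\omega_{J,S}^T(\textbf{v}_{-T})=\varphi(\textbf{v})(J)$. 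Now if $(S,T,\textbf{v}_{-T})$ and $(S',T',\textbf{v}'_{-T'})$ are related by a rights-preserving permutation $\sigma$ (which is exactly the condition that $T,T'$ and $S,S'$ share cardinal-partitions and $\textbf{v}'_{-T'}=\sigma\cdot\textbf{v}_{-T}$ up to reorderings within classes), the corresponding constructed profiles $\textbf{v},\textbf{v}'$ are related by a product of transpositions within the classes $\mathcal{N}_M$, hence $\varphi(\textbf{v})(J)=\varphi(\textbf{v}')(J)$ by iteratively applying (A). This yields both the anonymity of each $\omega_{J,S}^T$ (taking $\sigma$ to fix $T$ pointwise) and the equality $\omega_{J,S}^T=\omega_{J,S'}^{T'}$ under equal cardinal-partitions.

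The main obstacle is bookkeeping the interplay between the permutation $\sigma$ and the set $\mathcal{D''}_J$ of graders, which itself depends on the profile: when $\sigma$ swaps two voters with the same rights but only one of them submitted a grade for $J$, the set $\mathcal{D''}_J$ shifts, so one must be careful that the bijection $S\mapsto\sigma(S)$ on indexing subsets is compatible both with the multiset of grades inside the min and with the argument $\textbf{v}_{-\mathcal{D''}_J}$ fed into the phantom-mapping. Once this is framed cleanly via ``$\sigma$ preserves cardinal-partitions of every subset it acts on'', both directions reduce to applying either (A) of $\varphi$ to a carefully chosen witness profile, or the two hypothesized properties of the $\omega_{J,S}^T$ term by term inside the maxmin.
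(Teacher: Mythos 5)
Your proposal is correct and follows essentially the same route as the paper: both directions reduce the claim to the maxmin formula of Theorem \ref{SP theo}, recover each $\omega_{J,S}^{T}(\textbf{v}_{-T})$ as $\varphi$ evaluated at an extremal witness profile (your $\sup\mathcal{A}/\inf\mathcal{A}$ construction is exactly the paper's Lemma \ref{useful lemma 2}), and then transport anonymity between $\varphi$ and the phantom-mappings by tracking how a rights-preserving swap moves the sets $S$ and $\mathcal{D''}_J$. The paper merely spells out your uniform ``$\sigma$ acts on index sets'' step as three explicit cases on $(v_i(J),v_j(J))$ being grades, non-grades, or mixed.
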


\begin{proposition}\label{A prop}
A (BV) phantom-proxy mechanism $\psi$ verifies anonymity (A) iff for all $J$, for all $M \subseteq \mathcal{M}$ there is a $f_{M,J}$ such that if $\mathcal{C}_i=M$ then $f_{M,J}=f_{i,J}$. (Proof \ref{A prop proof}.)
\end{proposition}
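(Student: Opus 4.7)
The plan is to prove each direction separately. The backward implication reduces to a direct multi-set equality once one observes that swapping ballots swaps contributions; the forward implication requires isolating a single proxy contribution to the voting pool via a carefully chosen witness profile, and the role of (BV) is to render blank votes indistinguishable from absence for the purposes of the pool.

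\textbf{Backward direction.} Assume that for every $J$ and every $M$ the proxy $f_{i,J}$ depends only on $\mathcal{C}_i = M$. Given a profile $\textbf{v}$ and two voters $i, i' \in \mathcal{N}_M$, let $\textbf{w}$ be the profile obtained by swapping $v_i$ and $v_{i'}$. For any candidate $J$, the multi-set $\textbf{w}(J) \cup \mathcal{F}_J(\textbf{w})$ receives identical contributions from every voter $k \notin \{i, i'\}$, while the contributions of $i$ and $i'$ merely exchange: after the swap, voter $i$ contributes whatever voter $i'$ contributed before (either a real grade in $\mathcal{A}$, or the proxy value $f_{M,J}(v_{i'})$), and symmetrically for $i'$. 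Hence $\textbf{w}(J) \cup \mathcal{F}_J(\textbf{w}) = \textbf{v}(J) \cup \mathcal{F}_J(\textbf{v})$ as multi-sets, so both pools have the same size and $\mu_{g_J(\cdot)}$ returns the same element. This gives $\psi(\textbf{w})(J) = \psi(\textbf{v})(J)$ for every $J$, proving (A).

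\textbf{Forward direction.} Assume (A). Fix $J$, $M$, and two voters $i, i' \in \mathcal{N}_M$. If $v_i(J) \in \mathcal{A}$ then $f_{i,J}(v_i) = f_{i',J}(v_i) = \emptyset$ by definition, so there is nothing to prove. Otherwise $v_i(J) \in \{\emptyset, \otimes\}$. I build a diagnostic profile $\textbf{v}$: voter $i$ casts $v_i$; voter $i'$ casts the ballot $v_{i'}$ with $v_{i'}(J') = \otimes$ for all $J' \in M$ and $\emptyset$ elsewhere; every other voter $k$ casts the analogous all-blank ballot on $\mathcal{C}_k$. By the degenerate clause of the phantom-proxy definition, $f_{k,J}(v_k) = \emptyset$ for every $k \neq i$ and $f_{i,J}(v_{i'}) = \emptyset$, since each of these ballots has all entries in $\{\emptyset, \otimes\}$. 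Applying (BV) to every blank entry allows one to treat it as $\emptyset$ for the purposes of the outcome, so the pool $\textbf{v}(J) \cup \mathcal{F}_J(\textbf{v})$ equals $\{f_{i,J}(v_i)\}$ if this proxy is non-$\emptyset$, and is empty otherwise. Swapping $v_i \leftrightarrow v_{i'}$ yields a profile whose pool for $J$ equals $\{f_{i',J}(v_i)\}$ (or is empty). Anonymity equates the two outcomes, so either both pools are non-empty singletons---in which case $g_J(1) = 1$ forces $f_{i,J}(v_i) = f_{i',J}(v_i)$---or both are empty, giving the same equality.

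\textbf{Main obstacle.} The only delicate step is ruling out the size-mismatch case in the forward direction, where the pool might be empty on one side and a singleton on the other. This is handled by noting that an empty pool would produce output $\emptyset$ while a singleton produces a grade in $\mathcal{B}$, so (A) forces these pool sizes to agree; everything else is a routine unfolding of definitions, with (BV) doing the modest but essential work of identifying the all-blank ballots with no contribution to the pool, so that the swap-invariance argument isolates precisely the contrast between $f_{i,J}(v_i)$ and $f_{i',J}(v_i)$.
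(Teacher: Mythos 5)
Your proof is correct, and the backward direction coincides with the paper's (swapping two ballots within $\mathcal{N}_M$ at worst replaces a proxy vote by the identical proxy vote, so the voting pool is unchanged as a multi-set). The forward direction, however, takes a genuinely different and more self-contained route. The paper derives $f_{i,J}(v_i)=f_{j,J}(v_i)$ by invoking the anonymity of the phantom-mappings $\omega_{J,S}^T$ established in Theorem \ref{A theo}, combined with the order-statistic characterization of those mappings from Proposition \ref{SP charac prop}; your argument bypasses both results by building an explicit diagnostic profile in which every voter other than $i$ submits an all-blank ballot, so that the degenerate clause of the phantom-proxy definition kills all other proxies and the voting pool for $J$ collapses to the singleton $\{f_{i,J}(v_i)\}$ (or is empty), whence $g_J(1)=1$ and anonymity of the outcome directly force $f_{i,J}(v_i)=f_{i',J}(v_i)$. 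Your treatment of the empty-versus-singleton mismatch (output $\emptyset$ versus an element of $\mathcal{B}$) is a point the paper's terse proof glosses over, and is a genuine plus. What each approach buys: the paper's is shorter given its earlier machinery, yours is verifiable from the definitions alone and makes explicit where (BV) and the degenerate proxy clause enter. One small repair: your case split ``$v_i(J)\in\mathcal{A}$, otherwise $v_i(J)\in\{\emptyset,\otimes\}$'' omits the absentee value $v_i(J)=\circ\in\mathcal{E}$; since an absentee contributes no grade to $\textbf{v}(J)$, the same witness profile and singleton-pool argument apply verbatim to that case, so the omission is cosmetic but should be stated.
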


Here is an example with 3 voters $\{x,y,z\}$ and 2 candidates $\{I,J\}$. Let $\mathcal{C}_x=\{I,J\}$ and $\mathcal{C}_y=\mathcal{C}_z =\{I\}$. Then if $f_{I,y}=f_{I,z}$ and $f_{y,I}=f_{z,I}$, then the mechanism is anonymous. Else it isn't.

\subsubsection{Strong Anonymity : SA}
An intuitive anonymity situation is strong anonymity. This represents situations where the regulator treats all voters the same regardless of their voting rights (for example if there are many voters and they were assigned at random, like in LaPrimaire.org). 

\begin{theorem}[Strong Anonymous characterization]\label{SA theo}
An (SP) grading function $\varphi : \mathcal{O}^{\mathcal{M} \times \mathcal{N}} \rightarrow (\mathcal{B} \cup \{\emptyset\})^\mathcal{M}$ is strongly anonymous iff for each $J$ there exists $ \dfrac{\#\mathcal{D}_J(\#\mathcal{D}_J+1)}{2}$ strongly anonymous phantom-mappings $\omega_{J,k}^{d} : \mathcal{O}^{\mathcal{M}} \times \mathcal{N} \Rightarrow \mathcal{B}$ such that for all $0 \leq k < d \leq \#\mathcal{D}_J$ we have  $\omega_{J,k}^{d} \leq \omega_{J,k+1}^{d}$ and:
\[\forall \textbf{v}, \varphi(\textbf{v})(J) = med(\textbf{v}(J),\omega_{J,0}^{\#\mathcal{D''}_J}(\textbf{v}_{-\mathcal{D''}_J}),\dots, \omega_{J,\#\mathcal{D''}_J}^{\#\mathcal{D''}_J}(\textbf{v}_{-\mathcal{D''}_J})).\] (Proof \ref{SA theo proof}.)
\end{theorem}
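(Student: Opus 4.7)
The plan is to build on the general SP characterization (Theorem \ref{SP theo}) and the anonymity characterization (Theorem \ref{A theo}), using the extra rigidity imposed by strong anonymity to collapse the family $\{\omega_{J,S}^T\}$ to a one-parameter family $\{\omega_{J,k}^d\}$ indexed by cardinalities, and then to invoke Moulin's classical median-of-phantoms identity to rewrite the resulting max--min as a single median.

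For the forward direction, fix a candidate $J$. By Theorem \ref{SP theo}, any SP grading function satisfies
\[\varphi(\textbf{v})(J)=\max_{S \subseteq \mathcal{D''}_J}\min\bigl(\{v_i(J): i \in S\}\cup\{\omega_{J,S}^{\mathcal{D''}_J}(\textbf{v}_{-\mathcal{D''}_J})\}\bigr).\]
Strong anonymity (SA) allows us to swap any two voters, regardless of their rights. Permuting the identities inside $\mathcal{D''}_J$ and inside its complement leaves the max--min formula invariant only if, for every pair $S,S' \subseteq T \subseteq \mathcal{D}_J$ of equal cardinality (and $T,T'$ of equal cardinality), one can choose $\omega_{J,S}^{T}(\textbf{v}_{-T}) = \omega_{J,S'}^{T'}(\textbf{v}_{\sigma,-T'})$ whenever $\sigma$ is any bijection witnessing the swap. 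Thus the phantom-mapping depends only on $(|S|,|T|)$ and is itself a strongly anonymous function of the remaining profile; set $\omega_{J,k}^{d}(\textbf{v}_{-T}):=\omega_{J,S}^T(\textbf{v}_{-T})$ whenever $|S|=k, |T|=d$. The monotonicity $\omega_{J,k}^{d}\leq\omega_{J,k+1}^{d}$ is inherited directly from the monotonicity in Theorem \ref{SP theo}.

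With the phantoms now indexed only by cardinality, the max--min formula at the profile $\textbf{v}$ reduces, for $d=\#\mathcal{D''}_J$ and $\alpha_k=\omega_{J,k}^{d}(\textbf{v}_{-\mathcal{D''}_J})$, to
\[\varphi(\textbf{v})(J)=\max_{0\leq k\leq d}\min\bigl(v_{(d-k+1)},\alpha_k\bigr),\]
where $v_{(1)}\leq\cdots\leq v_{(d)}$ is the non-decreasing reordering of $\textbf{v}(J)$ (the max over $S$ of size $k$ is attained by choosing the $k$ largest grades). This is exactly the max--min expression treated by Moulin (1980), who showed it equals
\[\mathrm{med}\bigl(v_{(1)},\ldots,v_{(d)},\alpha_0,\alpha_1,\ldots,\alpha_d\bigr),\]
which gives the desired median representation since the median is symmetric in its arguments.

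The converse is routine: given strongly anonymous phantoms $\omega_{J,k}^{d}$ with the asserted monotonicity, the formula $\mathrm{med}(\textbf{v}(J),\omega_{J,0}^{d},\ldots,\omega_{J,d}^{d})$ is manifestly invariant under any permutation of voter identities (hence SA), and Moulin's characterization of anonymous strategy-proof rules on single-peaked domains guarantees SP once the non-graders' profile is fixed. The main obstacle is the middle step: verifying carefully that SA really forces the cardinal-only dependence of $\omega_{J,S}^T$ (one must handle swaps between a voter in $\mathcal{D''}_J$ and one outside it, where the very set $\mathcal{D''}_J$ changes), and then cleanly invoking Moulin's max--min-to-median reduction in this parametric setting where the phantoms themselves are functions of $\textbf{v}_{-\mathcal{D''}_J}$ rather than constants.
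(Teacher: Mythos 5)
Your proof is correct in substance, and it reaches the paper's statement by reversing the order of the two Moulin ingredients. The paper's own proof first fixes the ballots of the voters who did not grade $J$, observes that strong anonymity makes the restricted rule anonymous among the graders of $J$, and so invokes Moulin's \emph{anonymous} characterization directly to obtain the median-of-phantoms formula at once; only afterwards does it prove that the resulting phantoms $\omega_{J,k}^{\mathcal{D''}_J}$ are strongly anonymous functions of $\textbf{v}_{-\mathcal{D''}_J}$ and depend on $\mathcal{D''}_J$ only through its cardinality, by swapping two non-graders (both votes in $\mathcal{E}$) and then a grader with a non-grader, pinning down the phantom values via Lemma~\ref{useful lemma 2}. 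You instead start from the general set-indexed max--min of Theorem~\ref{SP theo}, use SA to collapse $\omega_{J,S}^{T}$ to a cardinality-indexed family, and only then apply Moulin's max--min-to-median identity. Both routes rest on the same two facts and the same delicate step, which you correctly identify: handling a swap between a voter in $\mathcal{D''}_J$ and one outside it, where $\mathcal{D''}_J$ itself changes. To make that step airtight you need the device the paper supplies in Lemma~\ref{useful lemma 2} --- completing $\textbf{v}_{-T}$ so that the outcome \emph{equals} a chosen phantom value and is not attained by any input grade --- since the max--min formula alone does not determine the phantoms uniquely and the claim is only that one \emph{can choose} them to be cardinality-dependent; you gesture at this ("one can choose") but do not name the tool. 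Your route buys a slightly more self-contained derivation of the median form (the max over $S$ of size $k$ reduces to $\min(v_{(d-k+1)},\alpha_k)$), while the paper's buys a shorter argument by outsourcing that identity to Moulin's anonymous theorem from the start. The converse direction is handled the same way in both.
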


This is perhaps the closest formula we have to \cite{M1980} median in the anonymous case. This is largely due to the fact that just like with the proof of theorem \ref{SP theo} we can use the \cite{Moulin} formula as the first step of the proof of this theorem.

\begin{remark}
        As in \cite{M1980}, to obtain a unanimous rule, we just need to remove $\omega_{J,0}^d$ and $\omega_{J,d}^d$.
\end{remark}

\begin{proposition}\label{SA prop}
A (BV) phantom-proxy mechanism $\psi $ verifies anonymity iff for all $J$ there is a function $J$ such that for all $i$, $f_{i,J}=f_J $.(Proof \ref{SA prop proof}.)
\end{proposition}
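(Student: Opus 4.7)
The plan is to prove the equivalence in two directions, with the forward direction requiring the careful construction of ``witness profiles'' to isolate each proxy function.

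For the easy ($\Leftarrow$) direction, assume that for each $J$ there is a single function $f_J$ with $f_{i,J} = f_J$ for all $i$. Fix any profile $\textbf{v}$ and any permutation of voter ballots. The multiset $\textbf{v}(J) = \{v_i(J) : v_i(J) \in \mathcal{A}\}$ is clearly invariant under such a permutation, and the proxy pool $\mathcal{F}_J(\textbf{v}) = \{f_J(v_i) : f_J(v_i) \neq \emptyset\}$ is also invariant, since the proxy function no longer sees the voter label. Because $g_J$ depends only on the cardinality of the combined pool and $\mu_{g_J(\cdot)}$ is symmetric in its inputs, the output $\psi(\textbf{v})(J)$ is unchanged. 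This is precisely (SA) for $J$, and since $J$ was arbitrary, $\psi$ is strongly anonymous.

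For the harder ($\Rightarrow$) direction, assume $\psi$ is (SA) and respects (BV). Fix $J$, two distinct voters $i_1, i_2$, and an arbitrary ballot $v \in \mathcal{O}^{\mathcal{M}}$; I need $f_{i_1,J}(v) = f_{i_2,J}(v)$. For $k \in \{1,2\}$, build profile $\textbf{v}^k$ where voter $i_k$ submits $v$ and every other voter submits the all-$\otimes$ ballot. By (BV) each such $\otimes$ may be replaced by $\emptyset$, so by the phantom-proxy baseline clause those voters contribute $\emptyset$ to every proxy pool. Hence $\textbf{v}^k(J)$ is $\{v(J)\}$ if $v(J)\in \mathcal{A}$ and empty otherwise, and $\mathcal{F}_J(\textbf{v}^k)$ consists of at most the single entry $f_{i_k,J}(v)$. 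Strong anonymity gives $\psi(\textbf{v}^1)(J) = \psi(\textbf{v}^2)(J)$.

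A short case split completes the argument. If $v(J) \in \mathcal{A}$, the phantom-proxy axiom forces $f_{i_k,J}(v) = \emptyset$ for $k = 1,2$, so the two are trivially equal. If every entry of $v$ lies in $\{\emptyset,\otimes\}$, the same axiom again forces both proxies to equal $\emptyset$. In the remaining case, $v(J) \notin \mathcal{A}$ but $v$ carries at least one real grade, so the combined pool of $\textbf{v}^k$ is the singleton $\{f_{i_k,J}(v)\}$, which forces $g_J(1)=1$ and hence $\psi(\textbf{v}^k)(J) = f_{i_k,J}(v)$; (SA) then yields $f_{i_1,J}(v)=f_{i_2,J}(v)$. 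Since $v$, $i_1$, $i_2$ were arbitrary, $f_{i,J}$ is independent of $i$, so we may call the common function $f_J$.

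The main obstacle is the third case: we need a profile in which the aggregate output exactly reveals the proxy value, and this is only possible when the pool contains a single element. Engineering this with a voter whose ballot still provides a nontrivial proxy (hence must carry at least one grade somewhere other than $J$) while all other voters contribute nothing is where (BV) is essential, since without it we could not reduce full-$\otimes$ ballots to fully ineligible ones and the pool might inherit spurious proxies from the remaining voters.
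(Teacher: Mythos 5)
Your proof is essentially correct and follows the same core idea as the paper's: swap the ballots of two voters, invoke strong anonymity to equate the outputs, and use the defining clause of phantom-proxy mechanisms to pin down the proxies. The paper's own argument is a one-liner (``switch $i$ and $j$ such that $v_i(J)$ and $v_j(J)$ are in $\mathcal{E}$ and obtain $f_{i,J}(v_i)=f_{j,J}(v_i)$''), leaving the isolation of a single proxy value implicit; your construction of witness profiles in which every voter other than $i_k$ submits the all-$\otimes$ ballot, reduced via (BV) to the fully ineligible ballot, makes that isolation explicit and correctly identifies why the (BV) hypothesis is needed for the forward direction. Your backward direction matches the paper's.

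One small gap in your case enumeration: after excluding $v(J)\in\mathcal{A}$ and ballots with all entries in $\{\emptyset,\otimes\}$, the remaining cases are not only ``$v$ carries at least one real grade'' --- a ballot whose entries lie in $\{\emptyset,\otimes,\circ\}$ with at least one $\circ$ and no grade also remains, and for such a ballot the phantom-proxy axiom does \emph{not} force the proxy to be $\emptyset$. However, your singleton-pool argument covers this case verbatim: either both proxies are $\emptyset$, or exactly one is (whence the two outputs are $\emptyset$ versus a grade, contradicting SA), or both are non-$\emptyset$ and $g_J(1)=1$ forces them equal. So the omission is cosmetic rather than substantive; you should simply split on whether $f_{i_k,J}(v)$ is $\emptyset$ rather than on whether $v$ carries a grade.
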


\subsection{Consistency : OC and IC}
Let us consider a situation where the regulator performed his survey in 2 different towns and obtained the same grade for a candidate in both. It is natural to consider that if the regulator had performed a single survey on the total population of both towns the candidate should have gotten that same grade. This is the idea of consistency (in grading). We omit ``grading'' as we will not discuss consistence in ranking because of lack of space. Consistency was first introduced and studied in the context of social choice (resp. welfare) functions by \cite{young} and\cite{Smith}.

In this section we consider two versions of consistency. In the first all voters are assigned to one of the two sets we wish to merge. This is the most common version of consistency in the literature. In the second, only the two groups of voters that provided grades to be disjoint.

\subsubsection{Outer Consistency : OC}
In outer consistency, the two sets that we wish to merge are considered completely disjoint. As such, a voter is considered ineligible when another set is being studied and completely represented when it is part of the set.

 \begin{definition}[Outer consistency :OC]
A grading function $\varphi$ is outer-consistent (OC) iff for all partitions $\mathcal{N}_1 \cup \mathcal{N}_2=\mathcal{N}$ of the set of voters and all voting profile $\textbf{t}$ and all candidates $J$, if $\textbf{v}$ and $\textbf{w}$ are the voting profiles defined as $\textbf{v}= \textbf{t}_{-\mathcal{N}_1}$ and $\textbf{w}= \textbf{t}_{-\mathcal{N}_2}$ then we have:
\[\varphi(\textbf{v})(J) =\varphi(\textbf{w})(J) \Rightarrow \varphi(\textbf{v})(J) = \varphi(\textbf{t})(J) \]
\end{definition}


Note that the previous definition implies almost (BV) to a certain extent.

\begin{theorem}\label{OC theo}
An (SP,BV) grading function $\varphi$ is verifies consistency (OC) iff for all $J$, for all $\textbf{t}$ and for all partitions $\mathcal{N}_1 \cup \mathcal{N}_2=\mathcal{N}$, if 
$\textbf{v} =\textbf{t}_{-\mathcal{N}_1}$ and $\textbf{v} =\textbf{t}_{-\mathcal{N}_2}$ we have that for all $S \subseteq \mathcal{D''}_J(\textbf{v})=T$ and $S' \subseteq \mathcal{D''}_J(\textbf{w})=T'$ we must verify:

\begin{enumerate}
    \item \[\omega_{J,S}^T(\textbf{v}_{-T})= \omega_{J,S}^{T'}(\textbf{w}_{-T'}) \Rightarrow \omega_{J,S\cup S'}^{T\cup T'}(\textbf{t}_{-(T\cup T')}) \]
    \item \begin{align*}
   \max(\omega_{J,\emptyset}^T(\textbf{v}_{-T}), \omega_{J,\emptyset}^{T'}(\textbf{w}_{-T'})) \leq & \alpha \leq \min(\omega_{J,S}^T(\textbf{v}_{-T}), \omega_{J,S}^{T'}(\textbf{w}_{-T'})) \\ 
   & \Rightarrow \\
      \omega_{J,\emptyset}^{T\cup T'}(\textbf{t}_{-(T\cup T')}) \leq & \alpha \leq \omega_{J,S\cup S'}^{T\cup T'}(\textbf{t}_{-(T\cup T')}) 
\end{align*}
    \item \begin{align*}
   \max(\omega_{J,S}^T(\textbf{v}_{-T}), \omega_{J,S'}^{T'}(\textbf{w}_{-T'})) \leq & \alpha \leq \min(\omega_{J,T}^T(\textbf{v}_{-T}), \omega_{J,T}^{T'}(\textbf{w}_{-T'})) \\ 
   & \Rightarrow \\
      \omega_{J,S\cup S'}^{T\cup T'}(\textbf{t}_{-(T\cup T')}) \leq & \alpha \leq \omega_{J,T\cup T'}^{T\cup T'}(\textbf{t}_{-(T\cup T')}) 
\end{align*}
\end{enumerate} (Proof \ref{OC theo proof}.)
\end{theorem}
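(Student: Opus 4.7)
The plan is to prove both implications by unpacking the maxmin formula of Theorem~\ref{SP theo}. Since $\varphi$ is SP, for each candidate $J$ and profile $\textbf{v}$ we have
\[\varphi(\textbf{v})(J)=\max_{S \subseteq \mathcal{D''}_J(\textbf{v})}\min\bigl(\{v_i(J):i\in S\}\cup\{\omega_{J,S}^{\mathcal{D''}_J(\textbf{v})}(\textbf{v}_{-\mathcal{D''}_J(\textbf{v})})\}\bigr).\]
Because $\varphi$ satisfies (BV), replacing $\otimes$ by $\emptyset$ inside any ballot does not change $\varphi(\textbf{v})$. Applied to $\textbf{v}=\textbf{t}_{-\mathcal{N}_1}$ and $\textbf{w}=\textbf{t}_{-\mathcal{N}_2}$ this means $\mathcal{D''}_J(\textbf{v})=T$, $\mathcal{D''}_J(\textbf{w})=T'$, and $\mathcal{D''}_J(\textbf{t})=T\cup T'$ with $T\cap T'=\emptyset$; these are exactly the index sets appearing in the three conditions.

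For the $(\Leftarrow)$ direction, I assume the three phantom-mapping conditions and fix $\textbf{t}$, the partition, and candidate $J$ with $\varphi(\textbf{v})(J)=\varphi(\textbf{w})(J)=\alpha$. Pick $S\subseteq T$, $S'\subseteq T'$ realising the outer maxima on $\textbf{v}$ and on $\textbf{w}$ respectively. I then split into cases according to which ingredient pins the inner $\min$ to $\alpha$ on each side: (i) a phantom on both sides, in which case condition~1 delivers $\omega_{J,S\cup S'}^{T\cup T'}(\textbf{t}_{-(T\cup T')})=\alpha$ and the index $S\cup S'$ attains $\alpha$ in the maxmin on $\textbf{t}$; (ii) a grade on one side and a phantom on the other, where conditions~2 and~3 combined with the phantom monotonicity $\omega_{J,S}^T\leq \omega_{J,S'}^T$ for $S\subseteq S'$ bracket the merged phantom around $\alpha$; (iii) a grade on both sides, handled symmetrically. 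To rule out $\varphi(\textbf{t})(J)>\alpha$ (resp.\ $<\alpha$) one observes that any index in $T\cup T'$ producing a larger (resp.\ smaller) min would already do so on one of $\textbf{v}$ or $\textbf{w}$, contradicting $\varphi(\textbf{v})(J)=\varphi(\textbf{w})(J)=\alpha$.

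For the $(\Rightarrow)$ direction, I fix $J$, $S\subseteq T$, $S'\subseteq T'$ and phantom arguments $\textbf{t}_{-(T\cup T')}$, then engineer specific grade assignments on $T\cup T'$ so that OC forces each relationship. For condition~1, set every voter in $S\cup S'$ to grade $\alpha:=\omega_{J,S}^T(\textbf{v}_{-T})=\omega_{J,S}^{T'}(\textbf{w}_{-T'})$ and every voter in $(T\setminus S)\cup(T'\setminus S')$ to a value at $\sup\mathcal{A}$; the maxmin formula then yields $\varphi(\textbf{v})(J)=\varphi(\textbf{w})(J)=\alpha$, OC gives $\varphi(\textbf{t})(J)=\alpha$, and the maxmin on $\textbf{t}$ reads off the required identity for $\omega_{J,S\cup S'}^{T\cup T'}$. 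For condition~2, I place grades at $\inf\mathcal{A}$ on $T\setminus S$ and $T'\setminus S'$ and at $\alpha$ on $S\cup S'$, so that the outer max is pinned at $\alpha$ either by the $\emptyset$-index phantom or by the $S$-index; applying OC on $\textbf{t}$ and reading the maxmin yields the two inequalities. Condition~3 is the $\sup\mathcal{A}$-symmetric construction. The main obstacle will be this case analysis: the maxmin can attain $\alpha$ in structurally different ways (at a phantom, at an input grade, or at a tie), and one must verify that the three stated conditions exactly capture all ways OC can fail without overshooting; a subsidiary difficulty is maintaining consistency with the phantom monotonicity across the merged domain $T\cup T'$, and checking that (BV) is genuinely needed so that the phantom arguments $\textbf{v}_{-T}$, $\textbf{w}_{-T'}$, $\textbf{t}_{-(T\cup T')}$ line up as written.
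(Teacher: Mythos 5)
Your overall strategy matches the paper's: for $(\Leftarrow)$ a sandwich argument on the maxmin formula over $\textbf{t}$ using the three phantom conditions, and for $(\Rightarrow)$ engineered grade profiles on $T\cup T'$ so that OC pins $\varphi(\textbf{t})(J)$ and the merged phantoms can be read off. Your constructions for conditions~2 and~3 (grades at $\alpha$ on $S\cup S'$ and at an extreme value on the complement, with the edge case $\alpha\in\{\inf\mathcal{A},\sup\mathcal{A}\}$ handled separately) are essentially the paper's.

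However, your construction for condition~1 has a genuine gap, in two respects. First, to realize $\varphi(\textbf{v})(J)=\omega_{J,S}^{T}(\textbf{v}_{-T})$ in the maxmin formula, voters in $S$ must receive grades \emph{at least} the phantom value and voters in $T\setminus S$ grades \emph{at most} the phantom value (this is the construction of Lemma~\ref{useful lemma 2}); you set $T\setminus S$ to $\sup\mathcal{A}$, which inverts this. With grades above $\alpha$ outside $S$, a set $U\subseteq T\setminus S$ contributes $\min(\sup\mathcal{A},\omega_{J,U}^{T}(\textbf{v}_{-T}))=\omega_{J,U}^{T}(\textbf{v}_{-T})$, which can exceed $\alpha$, so the outer max need not equal $\alpha$ and OC gives you nothing. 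Second, even where your construction does yield $\varphi(\textbf{v})(J)=\varphi(\textbf{w})(J)=\alpha$, you place the grade $\alpha$ itself on the voters of $S\cup S'$; then after OC forces $\varphi(\textbf{t})(J)=\alpha$, the outcome on $\textbf{t}$ could be pinned by one of those input grades rather than by the merged phantom, so you cannot conclude $\omega_{J,S\cup S'}^{T\cup T'}(\textbf{t}_{-(T\cup T')})=\alpha$. The paper avoids this by invoking Lemma~\ref{useful lemma 2} to complete the profile so that \emph{no} voter's grade equals the phantom value (except when that value is $\inf\mathcal{A}$ or $\sup\mathcal{A}$, treated separately), which is exactly what licenses the step from ``the outcome is $\alpha$'' to ``the merged phantom is $\alpha$.'' You need to repair condition~1 by using grades strictly bracketing $\alpha$ ($\beta>\alpha$ on $S\cup S'$, $\gamma<\alpha$ on the complement, when such grades exist) rather than grades equal to $\alpha$ and $\sup\mathcal{A}$.
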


\begin{corollary}
When $\mathcal{A}=\mathcal{B}$, a (SP,BV) grading function $\varphi$ verifies consistency (OC) iff for all $J$, for all $\textbf{t}$ and for all partitions $\mathcal{N}_1 \cup \mathcal{N}_2=\mathcal{N}$, if 
$\textbf{v} =\textbf{t}_{-\mathcal{N}_1}$ and $\textbf{v} =\textbf{t}_{-\mathcal{N}_2}$ we have that for all $S \subseteq \mathcal{D''}_J(\textbf{v})=T$ and $S' \subseteq \mathcal{D''}_J(\textbf{w})=T'$ we must verify:

\[\min(\omega_{J,S}^T(\textbf{v}_{-T}), \omega_{J,S}^{T'}(\textbf{w}_{-T'})) \leq \omega_{J,S\cup S'}^{T\cup T'}(\textbf{z}_{-(T\cup T')}) \leq \max(\omega_{J,S}^T(\textbf{v}_{-T}), \omega_{J,S}^{T'}(\textbf{w}_{-T'}))\]

\end{corollary}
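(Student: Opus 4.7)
The plan is to derive this corollary as a direct specialization of Theorem \ref{OC theo} to the case $\mathcal{A}=\mathcal{B}$. The key observation is that conditions (2) and (3) of that theorem are of the form ``for every $\alpha \in \mathcal{A}$ lying in a certain interval determined by the $\omega$'s on each half, a corresponding containment on the merged $\omega$'s holds''. When $\mathcal{A}=\mathcal{B}$, every output-space value is a legitimate $\alpha$, so such $\alpha$-quantified statements collapse into pure inequalities between phantom mappings, obtained by letting $\alpha$ run over the endpoints of the interval.

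For the forward implication, I would fix $J$, voting profile $\textbf{t}$, partition $\mathcal{N}_1 \cup \mathcal{N}_2 = \mathcal{N}$, and subsets $S \subseteq T$ and $S' \subseteq T'$. Without loss of generality, assume $a := \omega_{J,S}^T(\textbf{v}_{-T}) \leq b := \omega_{J,S}^{T'}(\textbf{w}_{-T'})$. Applying condition (2) of the theorem with $\alpha = a$ (the monotonicity $\omega_{J,\emptyset}^{T},\omega_{J,\emptyset}^{T'} \leq \omega_{J,S}^{T},\omega_{J,S}^{T'}$ in the subscript ensures that the hypothesis interval contains $a$) extracts the lower bound $a \leq \omega_{J,S\cup S'}^{T\cup T'}(\textbf{t}_{-(T\cup T')})$, i.e.\ the min-inequality. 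Applying condition (3) with $\alpha = b$ (again using monotonicity $\omega_{J,S'}^{T'} \leq \omega_{J,T}^{T'}$ to check the hypothesis) yields $\omega_{J,S\cup S'}^{T\cup T'}(\textbf{t}_{-(T\cup T')}) \leq b$, i.e.\ the max-inequality.

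For the reverse implication, I would assume the corollary's min/max bound and verify each of conditions (1)--(3). Condition (1) is immediate: equality $\omega_{J,S}^T = \omega_{J,S}^{T'}$ forces $\min=\max$, so the sandwich collapses to the desired equality. Conditions (2) and (3) follow by chasing the bounds: an $\alpha$ sandwiched between the max of two lower $\omega$'s and the min of two higher $\omega$'s is, by applying the corollary's inequality at each level and combining with the monotonicity of $\omega$ in its subscript argument, automatically sandwiched between the corresponding merged $\omega$'s.

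The main obstacle I anticipate is the asymmetry in indexing: condition (3) of Theorem \ref{OC theo} uses potentially distinct subsets $S$ and $S'$ on the two sides, whereas the corollary's bound only compares the same $S$ on both sides. Bridging this requires exploiting the monotonicity of the $\omega_{J,\cdot}^{\cdot}$ in the subscript to propagate bounds, together with the freedom to choose $\alpha$ anywhere in $\mathcal{B}$. Additional care is needed in degenerate cases where the $\alpha$-interval in (2) or (3) is empty, in which case the premise is vacuously satisfied and nothing must be checked; the equivalence $\mathcal{A}=\mathcal{B}$ is precisely what ensures these boundary cases don't create a gap between the two formulations.
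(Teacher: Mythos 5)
The paper states this corollary without proof, so your plan of specializing Theorem \ref{OC theo} is the natural route, and your reverse direction (sandwich implies conditions (1)--(3)) is correct: applying the sandwich at the pairs $(\emptyset,\emptyset)$, $(S,S')$ and $(T,T')$ and chaining through $\alpha$ gives each condition immediately.

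The forward direction, however, has a genuine gap at exactly the point you flag as an ``obstacle''. To extract $\min(a,b)\le \omega_{J,S\cup S'}^{T\cup T'}$ from condition (2) with $\alpha=a$ (where $a=\omega_{J,S}^T(\textbf{v}_{-T})\le b=\omega_{J,S'}^{T'}(\textbf{w}_{-T'})$), you need $a$ to lie in the hypothesis interval, i.e.\ $\omega_{J,\emptyset}^{T'}(\textbf{w}_{-T'})\le a$. Monotonicity in the subscript only gives $\omega_{J,\emptyset}^{T'}\le\omega_{J,S'}^{T'}=b$; it says nothing about the cross-profile comparison with $a$, and when $a<\omega_{J,\emptyset}^{T'}$ the hypothesis interval of (2) for the pair $(S,S')$ is empty. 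The symmetric problem occurs for condition (3) with $\alpha=b$: you need $b\le\omega_{J,T}^T(\textbf{v}_{-T})$, which monotonicity does not supply. Your closing remark that in the degenerate case ``the premise is vacuously satisfied and nothing must be checked'' has the logic reversed for this direction: a vacuous premise means conditions (2)--(3) yield no information, yet the sandwich must still be established, so these are precisely the cases where the work remains. Indeed, taking constant phantom-mappings on disjoint singletons $T=\{1\}$, $T'=\{2\}$ with $\omega_{J,\{1\}}^{\{1\}}<\omega_{J,\emptyset}^{\{2\}}$, the two one-voter restrictions can never agree on an output, so (OC) and conditions (1)--(3) are vacuous for that partition while the merged phantoms are essentially unconstrained; the forward implication therefore needs either additional hypotheses (e.g.\ unanimity) or a separate argument covering these non-overlapping configurations, and your proof as written does not close it.
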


\begin{proposition}\label{OC prop}
A (BV) phantom-proxy $\psi$ is outer-consistent iff for all $J$ and $\forall k,k'$ we have $g_J(k+k') \in \{g_J(k) + g_J(k') -1; g_J(k) + g_J(k')\}$. (Proof \ref{OC prop proof}.)
\end{proposition}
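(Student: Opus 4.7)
The plan is to translate outer consistency into a purely combinatorial statement about order statistics on the voting pool, using (BV) to neutralize the ``removed'' voters. In $\textbf{v} = \textbf{t}_{-\mathcal{N}_1}$, every voter $i \in \mathcal{N}_1$ has an all-$\otimes$ ballot, which by (BV) in the form of Proposition \ref{BV prop} is indistinguishable from the all-$\emptyset$ ballot; the phantom-proxy definition then forces $f_{i,J}(v_i) = \emptyset$ for such voters, so they contribute neither grades nor proxies. Since each $f_{i,J}$ depends only on voter $i$'s own ballot, the pools $P_2 := \textbf{v}(J) \cup \mathcal{F}_J(\textbf{v})$, $P_1 := \textbf{w}(J) \cup \mathcal{F}_J(\textbf{w})$, and $P := \textbf{t}(J) \cup \mathcal{F}_J(\textbf{t})$ satisfy $P = P_1 \cup P_2$ as multisets, with sizes $n_1, n_2, n = n_1 + n_2$. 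Outer consistency for $J$ thus reduces to the following combinatorial assertion: whenever $\mu_{g_J(n_1)}(P_1) = \mu_{g_J(n_2)}(P_2) = \alpha$, then $\mu_{g_J(n)}(P) = \alpha$.

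For the ``if'' direction, write $p_i := g_J(n_i)$ and recall that $\mu_{p_i}(P_i) = \alpha$ is equivalent to $\#\{x \in P_i : x \leq \alpha\} \geq p_i$ together with $\#\{x \in P_i : x \geq \alpha\} \geq n_i - p_i + 1$. Summing over $i = 1, 2$ gives $\#\{x \in P : x \leq \alpha\} \geq p_1 + p_2$ and $\#\{x \in P : x \geq \alpha\} \geq n - (p_1 + p_2) + 2$. These two inequalities jointly force $\mu_k(P) = \alpha$ for every integer $k$ with $p_1 + p_2 - 1 \leq k \leq p_1 + p_2$, which is exactly the hypothesis $g_J(n) \in \{p_1 + p_2 - 1, p_1 + p_2\}$.

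For the ``only if'' direction, I argue by contraposition. Suppose $g_J(k_1 + k_2) \notin \{p_1 + p_2 - 1, p_1 + p_2\}$ for some $k_1, k_2$. Take $k_1 + k_2$ voters eligible and voting only for $J$, partitioned into groups $\mathcal{N}_1, \mathcal{N}_2$ of sizes $k_1, k_2$; such voters produce no proxy contribution to $J$, so the pool sizes are exactly $k_1, k_2, k_1 + k_2$. In the case $g_J(k_1 + k_2) \geq p_1 + p_2 + 1$, populate partition $i$ with $p_i$ copies of $\alpha$ and $k_i - p_i$ copies of some $\beta > \alpha$; each half outputs $\alpha$, but the union contains exactly $p_1 + p_2$ copies of $\alpha$ followed by $\beta$'s, so the combined outcome is $\beta$. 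In the case $g_J(k_1 + k_2) \leq p_1 + p_2 - 2$, the symmetric construction with $p_i - 1$ copies of some $\gamma < \alpha$ and $k_i - p_i + 1$ copies of $\alpha$ gives a combined outcome equal to $\gamma$. Either way, OC fails.

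The main obstacle is ensuring the ``no extra proxy'' property in the counterexample: any voter in $\mathcal{N}_1 \cup \mathcal{N}_2$ who grades some candidate can inject an unwanted proxy into the pool for $J$ and disturb the pool sizes. Restricting the construction to voters whose entire ballot is a grade for $J$ alone sidesteps this, keeping the multiset arithmetic aligned with $k_1, k_2$ and $k_1 + k_2$ and letting the two inequality bookkeeping arguments above close cleanly.
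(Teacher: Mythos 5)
Your proof is correct and follows essentially the same route as the paper: decompose the voting pool for $J$ into the two sub-pools, and compare the order statistic of the merged pool with those of the parts. Where the paper asserts the two directions informally (the forward direction via "no other element is guaranteed to coincide" and the reverse as "immediate"), you supply the explicit counting argument on $\#\{x\le\alpha\}$ and $\#\{x\ge\alpha\}$ and the two-case counterexample, which is a more rigorous rendering of the same idea rather than a different one.
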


\begin{remark}
 A (BV,OC) phantom-proxy mechanism $\psi$ satisfies the participation (P).
\end{remark}

\subsubsection{Inner Consistency : IC}
Inner consistency is a stronger notion of consistency where, if we have two incomplete voting profiles for the same set of voters, 
if both voting profiles agreed on the grade for $J$ then we would like for the outcome to agree as well. 

 \begin{definition}[Inner consistency :IC]
Let $\textbf{t}$ be the voting profile where all voters were allowed to vote. Let $\textbf{v}$ and $\textbf{w}$ be voting profiles obtained from $\textbf{t}$ by removing the right to vote of some experts for some candidates. A grading function $\varphi$ is inner-consistent (IC) iff for any such $\textbf{t}, \textbf{v}, \textbf{w}$, if for candidate $J$, $\textbf{v}$ and $\textbf{w}$ have disjoints sets of voters with the right to vote for $J$, then:
\[\varphi(\textbf{v})(J) =\varphi(\textbf{w})(J) \Rightarrow \varphi(\textbf{v})(J) = \varphi(merge(\textbf{v},\textbf{w}))(J) \]
where merge is the  function defined by : if for any $(i,J)$, $v_i(J) \neq \emptyset \Rightarrow merge(\textbf{v},\textbf{w})_i(J)=v_i(J)$. 
\end{definition}

\begin{theorem}\label{IC theo}
An (SP,BV) grading function $\varphi$ verifies inner-consistent (IC) if for all $J$, all voting profiles $\textbf{v}$ and $\textbf{w}$ obtained from a same $\textbf{t}$ by removing votes such that $\mathcal{D''}_J(\textbf{v})$ and $\mathcal{D''}_J(\textbf{w})$ are disjoint, we have that for all $S \subseteq \mathcal{D''}_J(\textbf{v})=T$ and $S' \subseteq \mathcal{D''}_J(\textbf{w})=T'$ $\textbf{z}=merge(\textbf{v},\textbf{w})$ must verify:

\begin{enumerate}
    \item \[\omega_{J,S}^T(\textbf{v}_{-T})= \omega_{J,S}^{T'}(\textbf{w}_{-T'}) \Rightarrow \omega_{J,S\cup S'}^{T\cup T'}(\textbf{z}_{-(T\cup T')}) \]
    \item \begin{align*}
   \max(\omega_{J,\emptyset}^T(\textbf{v}_{-T}), \omega_{J,\emptyset}^{T'}(\textbf{w}_{-T'})) \leq & \alpha \leq \min(\omega_{J,S}^T(\textbf{v}_{-T}), \omega_{J,S}^{T'}(\textbf{w}_{-T'})) \\ 
   & \Rightarrow \\
      \omega_{J,\emptyset}^{T\cup T'}(\textbf{z}_{-(T\cup T')}) \leq & \alpha \leq \omega_{J,S\cup S'}^{T\cup T'}(\textbf{z}_{-(T\cup T')}) 
\end{align*}
    \item \begin{align*}
   \max(\omega_{J,S}^T(\textbf{v}_{-T}), \omega_{J,S'}^{T'}(\textbf{w}_{-T'})) \leq & \alpha \leq \min(\omega_{J,T}^T(\textbf{v}_{-T}), \omega_{J,T}^{T'}(\textbf{w}_{-T'})) \\ 
   & \Rightarrow \\
      \omega_{J,S\cup S'}^{T\cup T'}(\textbf{z}_{-(T\cup T')}) \leq & \alpha \leq \omega_{J,T\cup T'}^{T\cup T'}(\textbf{z}_{-(T\cup T')}) 
\end{align*}
\end{enumerate} (Proof \ref{IC theo proof}.)
\end{theorem}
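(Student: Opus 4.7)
The proof closely parallels that of Theorem \ref{OC theo} for outer consistency, since the three characterising conditions are identical in form. The structural difference is that in (IC) the profiles $\textbf{v}$ and $\textbf{w}$ are two views of a single underlying profile $\textbf{t}$ with the disjointness restricted to the actual-grader sets $\mathcal{D''}_J(\textbf{v}) \cap \mathcal{D''}_J(\textbf{w}) = \emptyset$, rather than coming from a partition of the entire voter set. My first step would be to invoke Theorem \ref{SP theo} to represent $\varphi(\textbf{v})(J)$, $\varphi(\textbf{w})(J)$ and $\varphi(\textbf{z})(J)$ (where $\textbf{z} = merge(\textbf{v},\textbf{w})$) via the max-min formula in the phantom-mappings, and to use (BV) to replace any blank votes on $J$ by $\emptyset$. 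This normalisation guarantees that $\mathcal{D''}_J(\textbf{z}) = T \sqcup T'$ on the nose, where $T = \mathcal{D''}_J(\textbf{v})$ and $T' = \mathcal{D''}_J(\textbf{w})$.

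For the ``only if'' direction, I would fix $J$, $S \subseteq T$, $S' \subseteq T'$, and a choice of ballots for the voters outside $T \cup T'$, then design the grader ballots so that $\varphi(\textbf{v})(J) = \varphi(\textbf{w})(J) = \alpha$ is attained with the max-min realised at $S$ and at $S'$ respectively. (IC) forces the same value $\alpha$ to appear as $\varphi(\textbf{z})(J)$, and reading the max-min formula on $\textbf{z}$ produces the three inequalities---one for each regime in which $\alpha$ can live: coinciding with a common phantom (condition (1)), pinned above the lower phantoms $\omega_{J,\emptyset}^T$, $\omega_{J,\emptyset}^{T'}$ by some grade (condition (2)), or pinned below the upper phantoms $\omega_{J,T}^T$, $\omega_{J,T'}^{T'}$ by a grade (condition (3)). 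This is the same case split used for (OC); the one subtlety is that $\textbf{v}_{-T}$ already contains the $\emptyset$-coordinates of the $T'$-voters and is not freely independent of $\textbf{w}_{-T'}$, but this causes no restriction because the remaining coordinates of $\textbf{t}$ (i.e., the non-grader block for both profiles) can still be chosen arbitrarily.

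For the ``if'' direction, suppose $\varphi(\textbf{v})(J) = \varphi(\textbf{w})(J) = \alpha$, pick $S \subseteq T$ and $S' \subseteq T'$ that attain the respective max-mins, and case-split on whether $\alpha$ is realised as a phantom value or as one of the grades in each profile. The phantom/phantom case is handled directly by condition (1) applied to $\textbf{z}$'s residual; the grade/phantom and grade/grade cases are sandwiched by conditions (2) and (3), which guarantee that the $S \cup S'$ term in the max-min for $\textbf{z}$ still evaluates to $\alpha$ and that no larger index produces a strictly higher value. The main obstacle is residual-tracking: each phantom-mapping depends on its residual profile $\textbf{v}_{-T}$, $\textbf{w}_{-T'}$, or $\textbf{z}_{-(T \cup T')}$, and one must verify that the three hypothesised inequalities are instantiated on mutually compatible residuals derived from the common $\textbf{t}$. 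This is exactly where the shared-$\textbf{t}$ hypothesis of (IC) takes over the role of the partitioned-$\mathcal{N}_i$ hypothesis of (OC); once this bookkeeping is done, the rest is a direct adaptation of the (OC) argument.
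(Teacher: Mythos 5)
Your proposal is correct and follows essentially the same route as the paper's own proof: the appendix argument for Theorem \ref{IC theo} is a near-verbatim transposition of the proof of Theorem \ref{OC theo}, constructing extremal grade profiles at $S$ and $S'$ to force each of the three phantom-mapping conditions in the forward direction, and sandwiching $\varphi(\textbf{z})(J)$ between $\varphi(\textbf{v})(J)$ and $\varphi(\textbf{w})(J)$ via the max-min formula in the reverse direction. Your observation that the shared-$\textbf{t}$ hypothesis replaces the partition hypothesis of (OC) in the residual bookkeeping is exactly the point where the two theorems diverge, and the paper handles it the same way.
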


\begin{corollary}
When $\mathcal{A}=\mathcal{B}$, an (SP,BV) grading function $\varphi$ verifies inner consistency (IC) if for all $J$, all voting profiles $\textbf{v}$ and $\textbf{w}$ obtained from a same $\textbf{t}$ by removing votes such that $\mathcal{D''}_J(\textbf{v})$ and $\mathcal{D''}_J(\textbf{w})$ are disjoint, we have that for all $S \subseteq \mathcal{D''}_J(\textbf{v})=T$ and $S' \subseteq \mathcal{D''}_J(\textbf{w})=T'$, $\textbf{z}=merge(\textbf{v},\textbf{w})$ must verify:

\[\min(\omega_{J,S}^T(\textbf{v}_{-T}), \omega_{J,S}^{T'}(\textbf{w}_{-T'})) \leq \omega_{J,S\cup S'}^{T\cup T'}(\textbf{z}_{-(T\cup T')}) \leq \max(\omega_{J,S}^T(\textbf{v}_{-T}), \omega_{J,S'}^{T'}(\textbf{w}_{-T'}))\]
\end{corollary}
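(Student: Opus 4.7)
The plan is to invoke Theorem \ref{IC theo}, which characterizes inner consistency for (SP,BV) grading functions by its three conditions on the phantom-mappings, and show that the single sandwich inequality of the corollary is enough to force each of these three conditions once $\mathcal{A} = \mathcal{B}$. This direction is exactly what the corollary asserts: that the sandwich is a sufficient criterion for IC, subsuming the more technical three-part condition of the theorem.

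The bulk of the argument derives conditions 1, 2 and 3 of Theorem \ref{IC theo} from the sandwich. Condition 1 concerns the case $\omega_{J,S}^T(\textbf{v}_{-T}) = \omega_{J,S'}^{T'}(\textbf{w}_{-T'})$: both sides being equal forces $\min = \max$ in the sandwich, squeezing $\omega_{J,S\cup S'}^{T\cup T'}(\textbf{z}_{-(T\cup T')})$ to that common value. For condition 2, suppose a witness $\alpha$ satisfies $\max(\omega_{J,\emptyset}^T, \omega_{J,\emptyset}^{T'}) \leq \alpha \leq \min(\omega_{J,S}^T, \omega_{J,S'}^{T'})$. I would apply the sandwich at the pair $(\emptyset, \emptyset)$ to bound $\omega_{J,\emptyset}^{T\cup T'}$ above by $\max(\omega_{J,\emptyset}^T, \omega_{J,\emptyset}^{T'}) \leq \alpha$, and apply it at the pair $(S, S')$ to bound $\omega_{J,S\cup S'}^{T\cup T'}$ below by $\min(\omega_{J,S}^T, \omega_{J,S'}^{T'}) \geq \alpha$; these are exactly the conclusions required. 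Condition 3 is dual: the upper anchor is now the pair $(T, T')$, and the pair $(S, S')$ again supplies the intermediate bound $\omega_{J,S\cup S'}^{T\cup T'} \leq \alpha$.

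For completeness I would briefly sketch the converse (IC implies the sandwich), exploiting $\mathcal{A} = \mathcal{B}$ so that the witness $\alpha$ of the theorem may be any element of $\mathcal{B}$. Setting $\alpha = \max(\omega_{J,S}^T, \omega_{J,S'}^{T'})$ and using the monotonicity $\omega_{J,\emptyset}^T \leq \omega_{J,S}^T$ (built into Theorem \ref{SP theo}) to verify the hypothesis of condition 2 yields the upper side of the sandwich; the symmetric choice $\alpha = \min(\omega_{J,S}^T, \omega_{J,S'}^{T'})$ combined with condition 3 yields the lower side.

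The main obstacle is the notational asymmetry in the theorem's conditions: when $T \cap T' = \emptyset$, expressions such as $\omega_{J,S}^{T'}$ with $S \subseteq T$ (which appear in conditions 1 and 2) require a convention, since strictly $S \not\subseteq T'$. I would adopt the natural reading that pairs $S$ with $S'$ throughout (matching exactly the sandwich's pairing), and interpret condition 1 as asserting that a shared value propagates to $\omega_{J,S\cup S'}^{T\cup T'}$. Once this convention is fixed, the bound-chasing above is routine and only invokes the monotonicity of phantom-mappings in their subset argument together with direct substitution into the sandwich at the specific pairs $(\emptyset,\emptyset)$, $(S,S')$ and $(T,T')$.
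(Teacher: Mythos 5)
Your main argument is correct and is exactly the intended route: the paper gives no separate proof of this corollary, treating it as an immediate consequence of Theorem~\ref{IC theo}, and your instantiation of the sandwich at the pairs $(\emptyset,\emptyset)$, $(S,S')$ and $(T,T')$ does deliver conditions 1--3 of that theorem (your reading of the paper's $\omega_{J,S}^{T'}$ as $\omega_{J,S'}^{T'}$ is the right one). Note that this sufficiency direction is all the corollary literally asserts, and it does not even use $\mathcal{A}=\mathcal{B}$.

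Your sketch of the converse, however, is off: condition 2 concludes $\alpha \leq \omega_{J,S\cup S'}^{T\cup T'}$, so it can only supply the \emph{lower} side of the sandwich (and condition 3 the upper side), whereas you pair condition 2 with $\alpha=\max(\omega_{J,S}^T,\omega_{J,S'}^{T'})$ and claim the upper side; moreover, verifying the hypothesis of condition 2 at $\alpha=\min(\omega_{J,S}^T,\omega_{J,S'}^{T'})$ requires the cross-inequalities $\omega_{J,\emptyset}^{T}\leq\omega_{J,S'}^{T'}$ and $\omega_{J,\emptyset}^{T'}\leq\omega_{J,S}^{T}$, which do not follow from the monotonicity in the subset argument alone. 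Since the statement only claims the ``if'' direction, this does not invalidate your proof of the corollary, but the converse paragraph as written would not survive scrutiny.
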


\section{Ranking: an iterative tie-breaking for the proxy-mechanisms}
The mechanisms that we have considered so far allow us to provide grades which we can use to rank. The main issue with this grading method is that ties are expected and frequent, especially if we use a grading function where the output is one of the inputs and the input space $\mathcal{A}$ is small. For example, if we have only six possible grades as inputs grades such as \{Excellent, Good, Acceptable, Weak, Bad, Terrible\} then since the proxi-mechanisms will return one of these grades, for seven or more candidates, two will be in a tie. It is thus quite natural that if the aggregate grades have any meaning such as merit, two candidates that do not have the same final grades, we must rank them according to the order induced from their respective final grades. Our only issue is therefore how to break the ties. In this section we introduce an intuitive tie-breaking rule for the phantom-proxy mechanisms, extending the one used in \cite{BL02007,BL2011} for majority judgment. 

Importantly, as \cite{Gibbard} and \cite{Satterthwaite} theorems forbid the existence of strategy-proof mechanisms when the objective is to rank, one cannot expect our ranking methods to be strategy-proof in a ranking context. However, as majority judgment, proxy ranking methods can be proved to be partially strategy-proof in ranking (see \cite{BL2011}, section 13.1).


\subsection{Ranking when the voting pools have the same size}\label{Tie-break same size}

Let $\psi$ be a (F) phantom-proxy grading function, which is by definition, defined for a variable electorate. For any $J$, let $\mathcal{V}(J):=\textbf{v}(J) \cup \mathcal{F}_J(\textbf{v})$ be the voting pool for $J$. We therefore have that $\psi(\textbf{v})(J)=\mu_{g(\#\mathcal{V}(J))}(\mathcal{V}(J))$. Let $I$ and $J$ be two candidates we wish to rank using a phantom-proxy $\psi$ (not necessarily (BV) or (OC)) and suppose in this section (to facilitate the understanding) that: 


\[\forall I,J\in \mathcal{M},\#(\textbf{v}(I) \cup \mathcal{F}_I(\textbf{v})) = \#(\textbf{v}(J) \cup \mathcal{F}_J(\textbf{v}))\]
and

\[\forall i\in \mathcal{N}, v_i(J) \in \mathcal{A} \vee f_{i,J}(v_i) \in \mathcal{B} \Rightarrow \forall I, v_i(I) \in \mathcal{A} \vee f_{i,I}(v_i) \in \mathcal{B}\]

Having made these assumptions (to be relaxed in the next section), let us explain how the tie breaking-rule associated to a proxy $\psi$ rank candidate $J$ compared to $I$ and call this order $<_{\psi}$.

It is natural to assume that if  $\psi(\textbf{v})(J) < \psi(\textbf{v})(I)$ (the final grade of $J$ according to $\psi$ is strictly smaller than that of $I$, then $J <_{\psi} I$.

Therefore if we were unable to distinguish $I$ and $J$ according to $\psi$, that means that there is an $\alpha$ such that $\psi(\textbf{v})(J) = \psi(\textbf{v})(I) =\alpha$ and:
\[\alpha \in \mathcal{V}(I) \cup \mathcal{V}(J).\]

 If there is a voter $i$ that voted $\alpha$ for both candidates, he is satisfied by the outcome (both have $i$'s grade as final grade), and $i$ is satisfied with either tie-breaking choice between $I$ and $J$. It therefore makes sense to remove $i$ (or equivalently $\alpha$) from the ranking process to allow the other voters (in particular those with an incentive to participate in the tie-breaking process) to decide how the tie must be broken. If we accept this logic, to compare $I$ and $J$ we should compare $\psi(\textbf{v}[\forall J,v_i(J):=\emptyset])(I)$ and $\psi(\textbf{v}[\forall J,v_i(J):=\emptyset])(J)$. If they can distinguish the two candidate we are done, if not we repeat the process attractively until we can distinguish the two candidates or otherwise, we declare them equally competent (which happens only if they have identical voting pools, very unlikely in practice). 
 
 Following this logic, we generate a ranking function  $<_{\psi}$, better described if we associate to each candidate $J$ a voting range $R_J^{\psi} :  \mathcal{O}^{\mathcal{M} \times \mathcal{N}} \rightarrow (\mathcal{B})^\mathcal{N}$ generated by the following algorithm:

 \begin{enumerate}
     \item $R_J^{\psi} := []$; $S =\emptyset$;
     \item If $\psi(\textbf{v}[\forall J,\forall i \in S, v_i(J) = \emptyset])(J) = \emptyset$; Return $R_J^{\psi}$, END \label{R_G start loop}
     \item Else $\alpha :=\psi(\textbf{v}[\forall J,\forall i \in S, v_i(J) = \emptyset])(J)$ $R_J^{\psi} := (R_J^{\psi},\alpha)$;
     \item Find $i \in \mathcal{N}-S$ such that $v_i(J)=\alpha$ or $f_{i,J}(v_i) = \alpha$; $S: = S \cup \{i\}.$
     \item Go back to (\ref{R_G start loop})
\end{enumerate}

As $\psi$ if fixed we will drop in the sequel its index just use the notation $R_J$. Because the votes associated to each voter in the voting pool for a candidate $J$ is independent from the ballots of all other voters, we conclude that:
 
\begin{proposition}\label{R_J deter}
     The voting range associated to $J$ is well defined. (Proof \ref{R_J deter proof}.)
\end{proposition}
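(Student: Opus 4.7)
The plan is to verify two things: (i) the algorithm is well-posed, i.e.\ step~4 can always be executed whenever step~2 did not halt; and (ii) the sequence $R_J$ does not depend on which voter $i$ is picked in step~4 when several satisfy the condition. Termination is automatic because $|S|$ strictly increases at each iteration and $\mathcal{N}$ is finite, so the loop exits in at most $|\mathcal{N}|$ passes.

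For (i), let $\textbf{v}' := \textbf{v}[\forall J', \forall i \in S, v_i(J') = \emptyset]$ denote the profile at the current iteration and write $\mathcal{V}'(J) := \textbf{v}'(J) \cup \mathcal{F}_J(\textbf{v}')$ for the voting pool for $J$ at that iteration. By the very definition of a phantom-proxy mechanism, $\alpha = \psi(\textbf{v}')(J) = \mu_{g_J(\#\mathcal{V}'(J))}(\mathcal{V}'(J))$ is an order statistic of $\mathcal{V}'(J)$, hence an element of that multiset. The only way this could fail to yield a suitable $i \in \mathcal{N} \setminus S$ would be $\mathcal{V}'(J) = \emptyset$, but then $\psi(\textbf{v}')(J) = \emptyset$ and step~2 would have already terminated the loop. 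Therefore some $i \in \mathcal{N} \setminus S$ has either $v_i(J) = \alpha$ or $f_{i,J}(v_i) = \alpha$, making step~4 executable.

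For (ii), the crucial observation is that $f_{i,J}$ depends only on the individual ballot $v_i$, and the phantom-proxy constraint forces each voter to contribute at most one element to $\mathcal{V}'(J)$ (either the grade $v_i(J)$ or the proxy $f_{i,J}(v_i)$, never both). Moving a voter $i^{\ast}$ from $\mathcal{N} \setminus S$ into $S$ therefore removes exactly one copy of $i^{\ast}$'s contribution from the multiset $\mathcal{V}'(J)$ and leaves the contributions of all other voters unchanged. Consequently, if two candidates $i_1, i_2 \in \mathcal{N} \setminus S$ both have contribution equal to $\alpha$, picking either one produces the same multiset $\mathcal{V}''(J)$ at the next iteration.

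Since $\psi(\textbf{v}'')(J) = \mu_{g_J(\#\mathcal{V}''(J))}(\mathcal{V}''(J))$ depends on the profile only through the multiset $\mathcal{V}''(J)$ for the fixed candidate $J$, the next appended value $\alpha'$ is identical under both choices (fairness (F) is available by hypothesis but is not actually invoked here, since we are tracking a single candidate). A straightforward induction on the iteration index then shows that the whole tuple $R_J$ is independent of the tie-breaking choices in step~4, so the voting range is well-defined. The main conceptual subtlety to be careful about is the bookkeeping of the profile modification ``$\forall J', v_i(J') := \emptyset$'': because it zeroes $i$ on every candidate simultaneously, it removes both the grade and the proxy of $i$ on $J$, while leaving proxies of voters $j \ne i$ intact, which is exactly what enables the multiset-only argument to go through.
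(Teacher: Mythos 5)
Your proof is correct and follows essentially the same route as the paper: the key lemma in both is that each voter contributes at most one element to the voting pool for $J$, and that setting $v_i(J'):=\emptyset$ for all $J'$ removes exactly that voter's contribution while leaving every other voter's grade and proxy untouched, so the range depends only on the multiset $\textbf{v}(J)\cup\mathcal{F}_J(\textbf{v})$. The paper packages this as the identity $R_J^{\psi}(\textbf{v})=P_J^{\psi}(\textbf{v}(J)\cup\mathcal{F}_J(\textbf{v}))$ for a purely multiset-based procedure $P_J^{\psi}$, whereas you phrase it as invariance under the tie-breaking choice in step 4 plus an induction on iterations; these are the same argument.
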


 \begin{definition}[$\psi$-Ordering]
 We declare $I<_{\psi} J$ at $\textbf{v}$ if and only if $R_I(\textbf{v}) < R_J(\textbf{v})$ for the lexicographic order.
 \end{definition}

When all voters are eligible and the voters input a grade to all the candidates, two well-known ranking functions are the leximin and leximax ordering \cite{Moulin}. Majority judgment \cite{BL02007}, which ranks iteratively according to the smallest median, is another.

\begin{corollary}\label{ranking corol}
    The ranking function $<_{\psi}$ defines a total (hence complete and transitive) order where a tie between two candidates happens only when their voting pools are identical. (Proof \ref{ranking corol}.)
\end{corollary}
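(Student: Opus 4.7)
The plan is to invoke Proposition~\ref{R_J deter} for the well-definedness of each sequence $R_J(\textbf{v})$ and then reduce the statement to two standard ingredients: properties of the lexicographic order on finite tuples over the totally ordered alphabet $\mathcal{B}$, and one bookkeeping invariant about how each iteration of the algorithm interacts with the voting pool.

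The central observation I would establish first is that iteration $k$ strictly shrinks the residual pool by one copy of the appended grade. Writing $\mathcal{V}_k(J)$ for the pool after $S$ has been expanded $k$ times, the phantom-proxy formula gives $\alpha_{k+1} = \mu_{g_J(\#\mathcal{V}_k(J))}(\mathcal{V}_k(J))$, while adding a voter $i$ with $v_i(J) = \alpha_{k+1}$ or $f_{i,J}(v_i) = \alpha_{k+1}$ to $S$ and blanking all of their votes removes precisely one copy of $\alpha_{k+1}$ from $\mathcal{V}_k(J)$; the key ingredient is the phantom-proxy constraint forcing $f_{i,J}(v_i) = \emptyset$ whenever $v_i$ is uniformly blank, so the proxy contribution genuinely vanishes. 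Termination (output $\emptyset$) coincides with the pool becoming empty, hence $|R_J(\textbf{v})| = \#\mathcal{V}(J)$ and the multiset of entries of $R_J(\textbf{v})$ is $\mathcal{V}(J)$ itself.

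Under the standing assumption that all voting pools share the same cardinality, $R_I(\textbf{v})$ and $R_J(\textbf{v})$ are then equal-length tuples valued in the totally ordered set $\mathcal{B}$. Completeness, antisymmetry, and transitivity of $<_\psi$ are inherited directly from the corresponding properties of lexicographic comparison, so $<_\psi$ is a total order. For the tie characterization, if $R_I = R_J$ then their entry multisets coincide, whence $\mathcal{V}(I) = \mathcal{V}(J)$; conversely, if $\mathcal{V}(I) = \mathcal{V}(J)$, fairness (F) ensures $g_I = g_J$, so an easy induction on $k$ gives $\mathcal{V}_k(I) = \mathcal{V}_k(J)$ and hence $\alpha_{k+1}^{(I)} = \alpha_{k+1}^{(J)}$ at every step, so $R_I = R_J$.

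The only mildly delicate point is the decrement-by-one invariant: the routine check that adding a voter to $S$ excises exactly one element from $\mathcal{V}(J)$, no more and no less. This uses that each voter contributes at most one value to $\mathcal{V}(J)$ — either a direct grade, or a proxy, but not both, by the phantom-proxy definition — so that blanking them across candidates removes a single copy of their contribution.
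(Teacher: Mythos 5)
Your proposal is correct and follows essentially the same route as the paper: the range $R_J(\textbf{v})$ is an ordering of the voting pool $\mathcal{V}(J)$ (so distinct pools give distinct ranges, and equal pools give equal ranges by fairness), and totality of $<_\psi$ is inherited from the lexicographic order on equal-length tuples. The paper's own proof is far terser; your decrement-by-one invariant is exactly the content already established in the proof of Proposition~\ref{R_J deter}, so you are simply making explicit what the paper leaves implicit.
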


 Observe that, to rank $I$ and $J$ using the voting ranges defined by $\psi$, the only requirement is that the voting pool sizes for $I$ and $J$ are equal. The next section extends the construction to non-equal sizes, which is very important in many applications.
 
\subsection{Ranking when voting pools have different sizes (under OC)}


When the voting pool size for two candidates are not initially equal, we will duplicate their respective voting pools until with get the same size. Operationally, this is equivalent to pass to a continuum electorate and normalize so that each voting pools size is $100\%$. This is how LaPrimaire.org and Paris participatory budget uses MJ to compare candidates with juries of different sizes. 

\begin{definition}[Duplicate voters]
For a phantom-proxy $\psi$, a voter $j$ is the duplicate of $i$ at the voting profile $\textbf{v}$ iff $v_i=v_j$ and $f_{i,J}=f_{j,J}$ for all $J$.
\end{definition}

Now let us make the link with the range function defined in the previous section. When $\psi$ is a proxy and $\textbf{v}$ is voting profile, $R_J(\textbf{v})$ is a vector $(R_J(\textbf{v})(1), ...,R_J(\textbf{v})(d))$ for some $d$. What happens to this vector if we duplicate the electorate $k$ times?  In general, anything can happen. However, if $\psi$ satisfies the outer consistent (OC) property, the next proposition provides a nice answer. Assuming (OC) makes sense because (OC) is a desirable property for variable electorates. Recall that it implies in particular the Participation (P), the Blank Vote (BV) and the Silent Ignored (SI) properties. 

\begin{proposition}\label{duplicate prop}
    When a phantom-proxy mechanism $\psi$ verifies (OC), for all voting profile $\textbf{v}$, if $\textbf{w}$ is obtained by duplicating all the voters $k$ times, then for $\textbf{w}$, the order in which we select for the first time a duplicate representing a new voter is the same as the order in which the voters where selected for $\textbf{v}$. (Proof \ref{duplicate proof})
\end{proposition}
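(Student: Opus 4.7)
The plan is to use outer consistency to transport grade computations between $\textbf{v}$ and $\textbf{w}$, then apply induction on the number of first-time selections.

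First I would establish the following duplication invariance: for any $S \subseteq \mathcal{N}$, the state of $\textbf{w}$ in which all $k$ copies of voters in $S$ have been removed has grade equal to $\psi(\textbf{v}_{-S})(J)$. The argument partitions $\textbf{w}$'s remaining electorate into $k$ disjoint sub-electorates, each isomorphic to $\textbf{v}_{-S}$. Since (OC) implies (BV) in this framework (remark after Proposition~\ref{OC prop}), blanking the voters outside one sub-electorate is equivalent to removing them, so each sub-electorate gives $\psi(\textbf{v}_{-S})(J)$. Iterated (OC) then yields the same grade for the union. In particular, with $(i_1,\dots,i_n)$ the voter sequence of some run of $\textbf{v}$'s algorithm and $(\alpha_1,\dots,\alpha_n)=R_J(\textbf{v})$ the corresponding grades, the state of $\textbf{w}$ with all $k$ copies of $\{i_1,\dots,i_{t-1}\}$ removed has grade $\alpha_t$.

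Next I would prove by induction on $t$ that a run of $\textbf{w}$'s algorithm can be constructed in which the $t$-th first-time selection is $i_t$. The base case is immediate from $\psi(\textbf{w})(J)=\alpha_1$ and the fact that $i_1$ is a voter with grade $\alpha_1$ in the pool. For the inductive step, suppose the first $t-1$ first-time selections in $\textbf{w}$ are $i_1,\dots,i_{t-1}$; the state then has $r_s \in \{0,\dots,k-1\}$ leftover copies of each $i_s$ ($s<t$) and all $k$ copies of every other voter. The key sub-claim is that the current grade at such a state always lies in $\{\alpha_1,\dots,\alpha_t\}$: whenever the grade is some $\alpha_s$ with $s<t$, the only voters in the pool with that grade are leftover copies of $i_s$, so the algorithm's next selection is a non-first-time removal; whenever the grade is $\alpha_t$, the voters with that grade include $i_t$ (still untouched), and we may select $i_t$ as the $t$-th first-time selection.

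The main obstacle is verifying this sub-claim on intermediate grades---that a grade $\alpha_s$ with $s>t$ cannot appear before the $t$-th first-time selection. The tool is Proposition~\ref{OC prop}'s characterization $g_J(k+k') \in \{g_J(k)+g_J(k'),\, g_J(k)+g_J(k')-1\}$ together with the (P)/(SC) consequence that $g_J(p+1)-g_J(p) \in \{0,1\}$: these constraints control how the order statistic $\mu_{g_J(M)}$ of the pool shifts under single-copy removals, and combined with the duplication-invariance of the first step applied to various subsets of already-selected voters, they confine the order-statistic trajectory to the range between $\alpha_1$ and $\alpha_t$ throughout the inductive regime. Detailed bookkeeping on the $r_s$ values, matched against the corresponding order-statistic indices, then completes the induction and yields the claimed matching of first-time selection orders.
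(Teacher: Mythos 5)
Your scaffolding is sound and partly nicer than the paper's: the duplication-invariance lemma (that after deleting all $k$ copies of a set $S$, the grade of the $J$-pool in $\textbf{w}$ equals $\psi(\textbf{v}_{-S})(J)$, obtained by splitting the remaining electorate into $k$ copies of $\textbf{v}_{-S}$ and merging them pairwise with (OC)) is a correct and clean structural step that the paper never isolates, and your induction on first-time selections is a legitimate alternative to the paper's argument, which instead proceeds by contradiction at the \emph{first inversion} of the selection order. However, there is a genuine gap: the entire difficulty of the proposition is concentrated in your ``key sub-claim'' that at every intermediate state between the $(t-1)$-th and $t$-th first-time selections the current grade lies in $\{\alpha_1,\dots,\alpha_t\}$, i.e.\ that the order statistic $\mu_{g_J(M)}$ can never land on a value $\alpha_s$ with $s>t$ while some voter earlier in $\textbf{v}$'s order is still awaiting its first selection. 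Your duplication-invariance lemma only controls the special states where \emph{all} $k$ copies of the already-selected voters are gone; the dangerous states are precisely the ones with partial leftovers $r_s\in\{1,\dots,k-1\}$, and for those you offer only ``detailed bookkeeping on the $r_s$ values \dots completes the induction.'' That sentence is a restatement of the problem, not a proof of it.

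To see that this step is where all the work lives, compare with the paper's proof (Appendix \ref{duplicate proof}): there, at the moment a duplicate of the ``wrong'' voter $j$ would be first-selected, one counts the $rk$ never-selected duplicates of the elements below $\tilde{v_i(J)}$ plus the $k'$ surviving extraneous duplicates, deduces that the selection index would have to equal $rk+k'+1$ (resp.\ be at most $k(r-1)$ in the symmetric case), and contradicts the bounds $kg_J(n)-(k-1)\le g_J(kn)\le kg_J(n)$ that follow from iterating $g_J(a+b)\in\{g_J(a)+g_J(b)-1,\,g_J(a)+g_J(b)\}$. Some counting argument of exactly this type is unavoidable in your approach too, because the constraints $g_J(p+1)-g_J(p)\in\{0,1\}$ alone do not obviously prevent the order statistic from overshooting $\alpha_t$ at a state with an unfavourable leftover profile; you would need to prove an explicit inequality relating $g_J(M)$ to the number of pool elements lying strictly below $\alpha_t$ at each reachable state. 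Until that inequality is stated and verified, the induction does not close.
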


    


\begin{corollary}\label{duplicate corol}
     When a voting phantom-proxy mechanism $\psi$ verifies (OC,F) and the voting profile $\textbf{v}$ provides us $I<_\psi J$ then any voting profile $\textbf{w}$ obtained by duplicating $k$ times all voters of $\textbf{v}$ also verifies $I<_\psi J$. (Proof \ref{duplicate corol proof})
\end{corollary}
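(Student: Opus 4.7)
My plan is to translate the hypothesis $I<_\psi J$ at $\textbf{v}$ into the lexicographic inequality $R_I(\textbf{v})<_{\text{lex}} R_J(\textbf{v})$ (by the definition of $<_\psi$) and derive the same inequality for $\textbf{w}$. Writing $R_I(\textbf{v})=(\alpha_1,\alpha_2,\ldots)$ and $R_J(\textbf{v})=(\beta_1,\beta_2,\ldots)$, I first introduce the smallest index $s$ at which they differ, so $\alpha_s<\beta_s$ (the edge case where one sequence is a strict prefix of the other being analogous and easier). The main tool is Proposition~\ref{duplicate prop}: for each candidate $X\in\{I,J\}$, the order in which the $\textbf{w}$-algorithm first selects a duplicate representing a new original voter coincides with the selection order on $\textbf{v}$, and since a duplicate inherits both the grade and the proxy of its original, the sub-sequence of $R_X(\textbf{w})$ at its first-time positions equals $R_X(\textbf{v})$. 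In particular $R_I(\textbf{w})[1]=\alpha_1$ and $R_J(\textbf{w})[1]=\beta_1$, which settles $s=1$ immediately; I may therefore assume $s>1$.

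For $s>1$ my plan is to define $t^\star$ as the $s$-th first-time position for both candidates (I will claim it is the same integer on both sides) and to establish (i) $R_I(\textbf{w})[t]=R_J(\textbf{w})[t]$ for every $t<t^\star$ and (ii) $R_I(\textbf{w})[t^\star]=\alpha_s$ while $R_J(\textbf{w})[t^\star]=\beta_s$. Point (ii) follows directly from Proposition~\ref{duplicate prop} once (i) and the coincidence of $t^\star$ are in hand. Under fairness (F) both algorithms are driven by the same rank-selector $g$; under outer consistency (OC), Proposition~\ref{OC prop} pins $g(kd)$ to $kg(d)$ up to an additive error of at most $k-1$, so $g$ scales nearly linearly under duplication. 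The duplicated voting pool for $X\in\{I,J\}$ decomposes into a common block, namely $k$ copies of $\{\alpha_1,\ldots,\alpha_{s-1}\}=\{\beta_1,\ldots,\beta_{s-1}\}$, together with a tail block of $k$ copies of the remaining grades, which generally differ across $I$ and $J$. I would then run an induction on the step counter showing that, as long as the rank selected by $g$ lies inside the current common block, the two sorted pools agree at that rank, the outcomes coincide, the same common-grade duplicates are removed on both sides, and the first-time positions stay in lock-step.

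The hard part will be verifying this lock-step claim rigorously and showing that $g$ first crosses into the tail precisely at $t^\star$. After each common-block removal, the surviving tail grades still occupy identical sorted positions relative to the surviving common grades on both sides, and the (OC)-controlled scaling of $g$ should keep the selected rank inside the common block until it is no longer possible; at that moment, no duplicate of any tail voter has yet been selected, so by Proposition~\ref{duplicate prop} the ensuing selection is the first-time appearance of $i_s$, and this position coincides on both sides. Combining (i) and (ii) then delivers $R_I(\textbf{w})<_{\text{lex}}R_J(\textbf{w})$, i.e.\ $I<_\psi J$ at $\textbf{w}$.
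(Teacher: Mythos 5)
Your proposal follows essentially the same route as the paper: reduce the lexicographic comparison of $R_I(\textbf{w})$ and $R_J(\textbf{w})$ to that of $R_I(\textbf{v})$ and $R_J(\textbf{v})$ via fairness (same $g$ for both candidates) and Proposition~\ref{duplicate prop} (preservation of the order of first-time discoveries). The one step you flag as ``the hard part'' --- the lock-step claim that the two selection processes stay synchronised up to the first divergence --- is exactly the step the paper dispatches in one line, and its device is worth noting: instead of your common-block/tail-block decomposition, the paper matches the voters of $I$ to the voters of $J$ by their rank in the sorted voting pool. Under that matching the class removed at each step is determined purely by rank arithmetic (same pool size, same $g$), hence is literally identical for both candidates at every step, so the positions before $t^\star$ carry equal values and $t^\star$ coincides on both sides with no induction on blocks needed; your route would instead have to argue separately that the surviving tail grades keep identical sorted positions relative to the surviving common grades, which is true but is itself a consequence of the rank-matching observation. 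So the plan is sound and the conclusion correct, but as written the synchronisation step remains a sketch; the rank-matching reformulation is the cleanest way to close it.
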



Consequently, all candidates that could be ranked (using the previous section) because their voting pools have equal size, can still be ranked after duplication, and will keep the same rank as before.  We will show now how two candidates with non-equal pools can be ranked under (OC).

If for a given $\textbf{v}$, $I$ and $J$ cannot be ranked by using $\psi(\textbf{v})(I) < \psi(\textbf{v})(J)$ then we have that $\psi(\textbf{v})(I) = \psi(\textbf{v})(J)$ and that the voting pool size for $I$ and $J$ are non-equal. 


Let $\mathcal{N}_I$ and $\mathcal{N}_J$ be the set of voters obtained by duplicating respectfully $n_J$ and $n_I$ times $\mathcal{N}$. Let $\textbf{v}$ be the initial voting profile and let $\textbf{v}^I$ and $\textbf{v}^J$ be the voting profiles for $\mathcal{N}_I$ and  $\mathcal{N}_J$ respectively. Consequently, the voting pool of $I$ for $\textbf{v}^I$ has now the same size as the voting pool of $J$ for $\textbf{v}^J$. It follows that the voting range $R_I(\textbf{v}^I)$ can be compared to the voting range $R_J(\textbf{v}^J)$, as such we can order $I$ and $J$ according to $\psi$. Since $R_I(\textbf{v}^I)$ represents $R_I(\textbf{v})$ and $R_J(\textbf{v}^J)$ represents $R_J(\textbf{v})$ we can therefore, without loss of generality, claim that $R_I(\textbf{v}) < R_J(\textbf{v})$ iff $R_I(\textbf{v}^I) < R_J(\textbf{v}^J)$. 

\begin{proposition}
    The ranking method just described, we note $>_{\psi}$, induces a well defined total order.
\end{proposition}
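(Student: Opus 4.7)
The plan is to verify in turn the three properties implicit in the statement: well-definedness of $>_\psi$, completeness, and transitivity; antisymmetry and irreflexivity then come for free from lexicographic comparison. Since the values in each $R_J(\textbf{v})$ lie in the totally ordered set $\mathcal{B}$, once every pair $(I,J)$ gives rise to two range-vectors of equal length, the lexicographic order on them is automatically a total order.

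For well-definedness I would argue as follows. To compare $I$ and $J$ the construction duplicates the electorate $n_J$ times (for $I$) and $n_I$ times (for $J$), producing pools of common size $n_I n_J$. The only arbitrary choice is the multiplier: one could equally well have used $k n_J$ and $k n_I$ for any $k\geq 1$. By Proposition \ref{duplicate prop}, duplicating the electorate $k$ times preserves the order in which new (non-duplicate) voters are first picked up by the greedy procedure defining the range; consequently the range obtained after $k$-fold duplication is exactly the original range with each entry repeated $k$ times. Replacing two equal-length vectors $R_I$ and $R_J$ by these block-repetitions never alters their lexicographic comparison, so the outcome of $>_\psi$ is independent of the chosen common multiple. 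This is exactly where the (OC) hypothesis is essential: without it, duplication could reshuffle which voter is selected first and thereby change the range.

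Completeness and transitivity then follow easily. Given any finite family of candidates $I_1,\dots,I_r$ with pool sizes $n_1,\dots,n_r$, one can duplicate the electorate to reach a common pool size $N$ (for instance $N=n_1 n_2 \cdots n_r$, or the least common multiple). By the well-definedness step just established, the pairwise rankings computed from these $N$-sized ranges coincide with those assigned by $>_\psi$ on the original profile. Completeness then holds because the lexicographic order on $\mathcal{B}^N$ is total, and transitivity holds because lexicographic order on equal-length tuples is transitive. Irreflexivity and antisymmetry of the strict relation $>_\psi$ are immediate from the same lex comparison, and a tie occurs exactly when the two range vectors coincide — which, as noted already in Corollary \ref{ranking corol}, happens only when the (duplicated) voting pools are identical.

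The main obstacle is really the first step: one must guarantee that duplication does not disturb the sequence in which new voters are absorbed into the range. This is precisely the content of Proposition \ref{duplicate prop}, and it is what forces the (OC) assumption in the statement. Once this key lemma is in hand, the remainder of the argument reduces to transparent, standard facts about lexicographic orders on equal-length tuples over a totally ordered set, so no additional machinery is needed.
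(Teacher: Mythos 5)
Your overall strategy --- reduce every pairwise comparison to equal-size pools by duplication, show the answer does not depend on the chosen common multiple, and then fall back on standard properties of the lexicographic order for completeness and transitivity --- is exactly the route the paper intends: it states this proposition without supplying any proof, leaning implicitly on Proposition \ref{duplicate prop} and Corollary \ref{duplicate corol}, and your reduction of transitivity to a single common pool size $N$ for all candidates is a clean way of making that omitted argument explicit.

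However, the pivotal step of your well-definedness argument is wrong as stated. You claim that after duplicating the electorate $k$ times the range $R_J$ becomes the original range with each entry repeated $k$ times in contiguous blocks, and you present this as a consequence of Proposition \ref{duplicate prop}. That proposition only guarantees that the \emph{order of first occurrences} of new voters is preserved; it says nothing about contiguity, and contiguity is false in general. Take $g(n)=\lceil n/2\rceil$ (which satisfies (OC) and (F)) and a candidate whose voting pool is $\{1,3,5\}$: the range is $(3,1,5)$, but after duplicating twice the pool $\{1,1,3,3,5,5\}$ yields the range $(3,3,1,5,1,5)$, not the block repetition $(3,3,1,1,5,5)$. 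So ``replacing the vectors by their block-repetitions never alters the lexicographic comparison'' cannot serve as the justification. The invariance you need has to be derived the way the proof of Corollary \ref{duplicate corol} does it: because $\psi$ is fair, the two greedy procedures select duplicates of corresponding voters in lockstep, so the lexicographic comparison is decided at the first position where the values representing corresponding voters differ, and by Proposition \ref{duplicate prop} that first position identifies the same pair of voters before and after duplication. Once that substitution is made, the remainder of your argument (passing to a common multiple for several candidates, totality and transitivity of the lexicographic order on equal-length tuples over $\mathcal{B}$, and ties occurring only for identical pools) goes through.
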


Interestingly, we can view our ranking function as a grading function where the output space is the range space (typically $\cup_n \mathcal{B}^n$), which is a much richer output space than $\mathcal{B}$. In this richer space, ties are extremely rare. Also, if we assume that the preferred range outcome on a candidate $J$ for a voter $i$ that graded $J$ is $(v_i(J))$ and that a voter has a single peaked preference over that totally ordered space (for any two ranges $R_1$ and $R_2$ of same size, if $R_1$ and $R_2$ have the same prefix then use the SP on the first different value with $v_i(J)$ as your peak), then, thanks to our construction, our richer grading function is strategy-proof.

\begin{theorem}\label{newSP theo}
    The grading function induced by $>_{\psi}$ on the richer output space is strategy-proof. (Proof \ref{newSP theo proof})
\end{theorem}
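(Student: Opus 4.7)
The plan is to reduce strategy-proofness of the range-valued grading function to iterated applications of SP of $\psi$. Fix a voter $i$ with true grade $v_i(J) = \alpha$ on a candidate $J$, consider a potential deviation to $w_i(J) = \beta \in \mathcal{A}$, and write $\textbf{v}$ and $\textbf{v}' := \textbf{v}[v_i(J):=\beta]$ for the truthful and deviated profiles, with induced ranges $R := R_J(\textbf{v})$ and $R' := R_J(\textbf{v}')$. By Corollary~\ref{duplicate corol} duplication preserves $<_\psi$ and every duplicate of voter $i$ shares voter $i$'s report, so without loss of generality the comparison takes place in the equal pool-size regime and $R, R'$ are compared lexicographically. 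If $R = R'$ there is nothing to prove; otherwise let $k$ denote the first position where they differ, and write $\gamma_j := R[j] = R'[j]$ for $j < k$.

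The goal is to show that the two runs of the iterative algorithm building $R_J$ can be aligned so that at step $k$ the reduced profiles differ only in voter $i$'s vote on $J$; a single invocation of SP of $\psi$ then closes the argument. I would induct on $j$ from $1$ to $k-1$, showing that one may choose the removed voter at step $j$ identically in both runs while keeping voter $i$ active. If $\gamma_j \neq \alpha$, voter $i$'s contribution matches $\gamma_j$ in neither run, so any common choice of removed voter works. If $\gamma_j = \alpha$, voter $i$ is a valid removal in the truthful run; however, since $R'[j] = \alpha$ must be attained in the pool associated with $\textbf{v}'$, where voter $i$ contributes $\beta \neq \alpha$, some other voter or proxy contributes $\alpha$ in $\textbf{v}'$, and that contributor appears identically in $\textbf{v}$ (the proxies for $J$ of voters other than $i$ do not depend on $v_i$). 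By Proposition~\ref{R_J deter} the range is invariant under the choice of matching voter to remove at each step, so we may remove that alternative contributor in both runs and keep voter $i$ active.

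After $k-1$ such aligned removals, the two reduced profiles $\textbf{v}^{(k-1)}$ and $\textbf{v}'^{(k-1)}$ differ only in voter $i$'s vote on $J$; voter $i$'s proxy on $J$ is $\emptyset$ in both because $v_i(J)\in\mathcal{A}$ in each scenario. Applying SP of the phantom-proxy mechanism $\psi$ (Proposition~\ref{SP prop}) to $\textbf{v}^{(k-1)}$ for the candidate $J$ and alternative ballot $w_i$ yields $R[k] > \alpha \Rightarrow R'[k] \geq R[k]$ and $R[k] < \alpha \Rightarrow R'[k] \leq R[k]$, so $R[k]$ is weakly closer to $\alpha$ than $R'[k]$. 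Under the single-peaked lexicographic preference on ranges with peak $\alpha$ described just before the theorem, voter $i$ weakly prefers $R$ to $R'$, which is strategy-proofness on the richer output space.

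The main obstacle is the bookkeeping that aligns the two runs: a priori the truthful and deviated executions might remove different voters at some step $j < k$, so that the reduced profiles at step $k$ differ in more ballots than voter $i$'s and the direct appeal to SP of $\psi$ collapses. The well-definedness of $R_J$ from Proposition~\ref{R_J deter} is precisely the tool that unlocks the inductive step; the delicate point is the second case above, where one must exploit the equality $R'[j] = \alpha$ to locate a non-$i$ contributor in $\textbf{v}'$, hence also in $\textbf{v}$, that can be removed in place of voter $i$ without altering the output of either run.
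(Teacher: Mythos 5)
Your proof is correct and follows essentially the same route as the paper's: locate the first index $k$ where $R_J(\textbf{v})$ and $R_J(\textbf{v}')$ differ, align the two runs of the iterative algorithm up to step $k-1$ (possible because the two voting pools differ only in voter $i$'s contribution), and then close with a single application of SP of $\psi$ to the reduced profiles. You in fact fill in a detail the paper's proof glosses over -- the case $\gamma_j=\alpha$, where the truthful run might naturally remove voter $i$ but the deviated run cannot, resolved via Proposition~\ref{R_J deter} -- while the appeal to Corollary~\ref{duplicate corol} is superfluous here since both pools for $J$ already have equal size.
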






\section{Extensions and conclusion}

The model is so rich that what is left to do exceed what we did. For example: we only characterized the two most natural anonymity properties, while there are many other interesting permutations that transform a $\textbf{v}$ into a $\textbf{w}$ everything else equal such as:
\begin{itemize}

  \item For any $i,j \in \mathcal{D''}_J$ : $w_i(J) = v_j(J)$ and $w_j(J) = v_i(J)$.
\item For any $i,j \in \mathcal{D}_J$ : $w_i(J) = v_j(J)$ and $w_j(J) = v_i(J)$
\end{itemize}

We didn't give a full characterization of all the proxy-phantom grading functions; We could have studied a richer proxy class (that may include the uniform median \cite{ICML2016,Freeman} excluded by our proxy) but we didn't in the paper because their characterizations (we have) are much more involved; We didn't discuss all the axioms we want for a ranking functions (such as IIA, partial SP, consistency in ranking, etc). Finally, we could have tried to characterize the methods where voters weights differ across voters and/or candidates, etc.  

We have taken as given the voting rights. It is important to understand the optimal or approximately optimal way to determine those rights, given a method. When we have a large set of equally qualified voters who just don't have the time to grade all the candidates, proposing at random a small number of candidates to each seems to be the right solution. \cite{BS2020} computed the optimal way to approximate Borda and minmax rules; LaPrimaire.org, implemented an apparently good approximation where each voter were asked to grade 5 candidates out of 12 selected at random. In EC 2023, because not all the referees are competent for all submitted papers, the rights are distributed after taking into account the declared conflicts of interests and the intensity of the preferences (expressed in the scale $[-20,20]$). 


We worked on a static environment and we were looking for fixed grading function. But many practical instances are dynamic and contains many blank votes (grading movies, hotels, restaurants, products at the internet, etc). In such instances, we may want the proxies to be dynamically updated using some ML algorithm that learns or predicts from the grades the voter already gave on some candidates and other private or social characteristic, how they will behave in other instances and use those predictions as proxies that will replace in the instances where they didn't vote. 

Delegation can allow to have a better aggregation when a small number of voters must evaluate a large number of candidate. It can easily be incorporated to our methods. If $i$ is eligible and delegates to $j$ its vote on $J$, we should respect that and let the vote of $i$ for $J$ be that of $j$ for $J$. 

\textbf{To conclude}: We built a complex but realistic and practical model where voters have different voting rights, can cast blank votes or abstain. We extended in several directions the results in \cite{M1980} and \cite{BL2011}. This leads us to a new class of methods to grade and rank, but also rises many interesting research questions that one paper is not enough to explore. 


\bibliographystyle{ACM-Reference-Format}
\bibliography{EC23.bib}
\newpage
\appendix

\section{Lemma to the proofs}

\begin{lemma}[A useful lemma]\label{useful lemma 1}
    For any SP $\varphi$ function, we can choose our $\omega_{J,S}^T$ functions such that:
    \[\forall \textbf{v}, \omega_{J,S}^T(\textbf{v}_{-T}) < \inf\mathcal{A} \Rightarrow \omega_{J,S}^T(\textbf{v}_{-T}) =\omega_{J,T}^T(\textbf{v}_{-T})\]
    and
    \[\forall \textbf{v}, \omega_{J,S}^T(\textbf{v}_{-T}) > \sup\mathcal{A} \Rightarrow \omega_{J,S}^T(\textbf{v}_{-T}) =\omega_{J,\emptyset}^T(\textbf{v}_{-T})\]
\end{lemma}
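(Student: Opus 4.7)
My plan is to construct modified phantom mappings $\omega'_{J,S}^T$ from the original $\omega_{J,S}^T$ given by Theorem~\ref{SP theo} by ``clipping'' any value below $\inf\mathcal{A}$ (resp.\ above $\sup\mathcal{A}$) up to $\inf\mathcal{A}$ (resp.\ down to $\sup\mathcal{A}$), or, in the degenerate case when the whole family $\{\omega_{J,S}^T\}_S$ lies entirely on one side of $\mathcal{A}$, collapsing all values to $\omega_{J,T}^T$ (resp.\ $\omega_{J,\emptyset}^T$). I would then verify three things: monotonicity of $S\mapsto\omega'_{J,S}^T$ is preserved, plugging $\omega'$ into the $\max$--$\min$ formula of Theorem~\ref{SP theo} yields the same grading function $\varphi$, and the two implications of the lemma hold for $\omega'$.

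Fixing $J$, $T\subseteq \mathcal{D}_J$ and $\textbf{v}_{-T}$, I would split into three cases using the monotonicity chain $\omega_{J,\emptyset}^T\leq \omega_{J,S}^T\leq \omega_{J,T}^T$. In Case~1, $\omega_{J,T}^T(\textbf{v}_{-T})<\inf\mathcal{A}$: then every $\omega_{J,S}^T<\inf\mathcal{A}$, and for any non-empty $S$ the term $\min(\{v_i(J):i\in S\}\cup\{\omega_{J,S}^T\})$ equals $\omega_{J,S}^T$ (because $v_i(J)\geq \inf\mathcal{A}$), while the $S=\emptyset$ term equals $\omega_{J,\emptyset}^T$, so the overall max is $\omega_{J,T}^T$. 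Setting $\omega'_{J,S}^T:=\omega_{J,T}^T$ for every $S$ preserves this output and the first implication holds trivially. Case~2 ($\omega_{J,\emptyset}^T>\sup\mathcal{A}$) is symmetric with $\omega'_{J,S}^T:=\omega_{J,\emptyset}^T$. In Case~3 (the generic situation), set $\omega'_{J,S}^T:=\min(\max(\omega_{J,S}^T,\inf\mathcal{A}),\sup\mathcal{A})$; monotonicity is preserved since the clipping map is non-decreasing, and both premises become vacuous because $\omega'\in[\inf\mathcal{A},\sup\mathcal{A}]$.

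The main obstacle is verifying carefully that Case~3 leaves the max--min output unchanged. For non-empty $S$: if $\omega_{J,S}^T\in[\inf\mathcal{A},\sup\mathcal{A}]$ the term is untouched; if $\omega_{J,S}^T<\inf\mathcal{A}$ the term increases from $\omega_{J,S}^T$ to $\inf\mathcal{A}$, but the overall max was already at least $\min(\omega_{J,T}^T,\min_i v_i(J))\geq\inf\mathcal{A}$ because $\omega_{J,T}^T\geq\inf\mathcal{A}$ in this case, so the max is unaffected; if $\omega_{J,S}^T>\sup\mathcal{A}$ the term equals $\min_i v_i(J)$ before and after clipping since both $\omega_{J,S}^T$ and $\sup\mathcal{A}$ dominate $\min_i v_i(J)\leq\sup\mathcal{A}$. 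The $S=\emptyset$ term uses the Case~3 hypothesis $\omega_{J,\emptyset}^T\leq\sup\mathcal{A}$ (and the trivial bound for $\inf\mathcal{A}$) to remain unchanged. Combining the three cases gives an $\omega'$ satisfying the stated implications while still representing the same SP grading function $\varphi$.
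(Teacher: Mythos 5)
Your proof is correct and takes essentially the same route as the paper's: for each fixed $J$, $T$ and $\textbf{v}_{-T}$ it splits on whether the whole family $\{\omega_{J,S}^T(\textbf{v}_{-T})\}_S$ lies below $\inf\mathcal{A}$ (collapse everything to $\omega_{J,T}^T$), lies above $\sup\mathcal{A}$ (collapse to $\omega_{J,\emptyset}^T$), or is generic (clip into $[\inf\mathcal{A},\sup\mathcal{A}]$), and checks in each case that the max--min output is unchanged. If anything, your write-up is more careful than the paper's in explicitly verifying that monotonicity in $S$ is preserved and in handling the $S=\emptyset$ term.
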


\begin{proof}
For any voting profile $\textbf{v}$, let $T = \mathcal{D''}_J$
\begin{itemize}
    \item Suppose that $\omega_{J,T}^T(\textbf{v}_{-T}) < \inf\mathcal{A}$. For all $S \subseteq T$ since $\min \{v_i(J) : i \in S\} \cup \{\omega_{J,S}^T(\textbf{v}_{-T}) \} \leq \{v_i(J) : i \in T\} \cup \{\omega_{J,T}^T(\textbf{v}_{-T}) \}$ we have that $\varphi(\textbf{v})(J)= \omega_{J,T}^T(\textbf{v}_{-T})$. Therefore, without loss of generality, we can assume that for all $S$:    
    $\omega_{J,S}^T(\textbf{v}_{-T}) =\omega_{J,T}^T(\textbf{v}_{-T})$.
    \item  Suppose that $\omega_{J,T}^T(\textbf{v}_{-T}) \geq \inf\mathcal{A}$ then for $S$ such that $\omega_{J,S}^T(\textbf{v}_{-T}) < \inf\mathcal{A}$ we have that $\min \{v_i(J) : i \in S\} \cup \{\omega_{J,S}^T(\textbf{v}_{-T}) \} < \min \{v_i(J) : i \in S\} \cup \{\inf\mathcal{A} \} \leq \{v_i(J) : i \in T\} \cup \{\omega_{J,T}^T(\textbf{v}_{-T}) \}$. Therefore, without loss of generality, we can assume that if $\omega_{J,S}^T(\textbf{v}_{-T}) \leq \inf\mathcal{A}$ then $\omega_{J,S}^T(\textbf{v}_{-T}) = \inf\mathcal{A}$.
    \item Suppose that $\omega_{J,\emptyset}^T(\textbf{v}_{-T}) > \sup\mathcal{A}$. Then for all $\emptyset \subset S \subseteq T$, $\min \{v_i(J) : i \in S\} \cup \{\omega_{J,S}^T(\textbf{v}_{-T}) \} \leq \omega_{J,\emptyset}^T(\textbf{v}_{-T})$. Therefore $\varphi(\textbf{v})(J)= \omega_{J,\emptyset}^T(\textbf{v}_{-T})$. Therefore, without loss of generality, we can assume that for all $S$:    
    $\omega_{J,S}^T(\textbf{v}_{-T}) =\omega_{J,\emptyset}^T(\textbf{v}_{-T})$.
    \item Suppose that $\omega_{J,\emptyset}^T(\textbf{v}_{-T}) \leq \sup\mathcal{A}$ then for $S$ such that $\omega_{J,S}^T(\textbf{v}_{-T}) > \sup\mathcal{A}$ we have that $\min \{v_i(J) : i \in S\} \cup \{\omega_{J,S}^T(\textbf{v}_{-T}) \} = \min \{v_i(J) : i \in S\} \cup \{\sup\mathcal{A} \}$. Therefore, without loss of generality, we can assume that if $\omega_{J,S}^T(\textbf{v}_{-T}) \geq \sup\mathcal{A}$ then $\omega_{J,S}^T(\textbf{v}_{-T}) = \sup\mathcal{A}$.
\end{itemize}

\end{proof}

\begin{lemma}\label{useful lemma 2}
    For all (SP) $\varphi$ functions, for all $S \subseteq T$ and $J$, $\forall \textbf{v}_{-T}$, we can complete to obtain $\textbf{v}$ such that $\varphi(\textbf{v})=\omega_{J,S}^T(\textbf{v}_{-T})$.
    
    Furthermore if $\omega_{J,S}^T(\textbf{v}_{-T}) \not \in \{\inf\mathcal{A},\sup\mathcal{A}\}$ then we can complete $\textbf{v}$ in such a way as to never have $i$ such that $v_i(J) = \omega_{J,S}^T(\textbf{v}_{-T})$.
\end{lemma}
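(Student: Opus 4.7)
Write $\alpha := \omega_{J,S}^{T}(\mathbf{v}_{-T})$. I may assume throughout that the phantom mappings already satisfy the normalisation of Lemma~\ref{useful lemma 1}. The completion will consist in choosing grades on $J$ for the voters in $T$ (grades on the other candidates, and ballots outside $T$, are inherited from $\mathbf{v}_{-T}$ and play no role: since the voters of $T$ will be precisely $\mathcal{D}''_J(\mathbf{v})$, Theorem~\ref{SP theo} guarantees that $\varphi(\mathbf{v})(J)$ depends only on $\{v_i(J):i\in T\}$ and on $\omega_{J,\cdot}^{T}(\mathbf{v}_{-T})$). The output $\varphi(\mathbf{v})(J)$ is then computed via the maxmin formula, and the only task is to pick the grades in $T$ so that this maxmin equals $\alpha$.

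\textbf{Case 1: $\alpha < \inf\mathcal{A}$.} By Lemma~\ref{useful lemma 1}, $\alpha=\omega_{J,T}^{T}(\mathbf{v}_{-T})$. Set $v_i(J)=\inf\mathcal{A}$ for every $i\in T$. Then the subset $S'=T$ contributes $\min(\inf\mathcal{A},\omega_{J,T}^{T})=\omega_{J,T}^{T}=\alpha$, while every proper $S'\subsetneq T$ contributes at most $\omega_{J,S'}^{T}\leq\omega_{J,T}^{T}=\alpha$ by monotonicity. Hence the max equals $\alpha$. Case~2 ($\alpha>\sup\mathcal{A}$) is symmetric, using $v_i(J)=\sup\mathcal{A}$ and the fact that $\alpha=\omega_{J,\emptyset}^{T}$.

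\textbf{Case 3: $\inf\mathcal{A}\leq\alpha\leq\sup\mathcal{A}$.} Pick $\beta\in\mathcal{A}$ with $\beta\geq\alpha$ and $\gamma\in\mathcal{A}$ with $\gamma\leq\alpha$ (for instance $\beta=\sup\mathcal{A}$ and $\gamma=\inf\mathcal{A}$). Put $v_i(J)=\beta$ for $i\in S$ and $v_i(J)=\gamma$ for $i\in T\setminus S$. I then check, for each $S'\subseteq T$, the contribution to the maxmin:
\begin{itemize}
\item $S'=S$: $\min(\beta,\alpha)=\alpha$, achieving $\alpha$.
\item $S'\subsetneq S$: all grades are $\beta\geq\alpha\geq\omega_{J,S'}^{T}$, so the min is $\omega_{J,S'}^{T}\leq\alpha$.
\item $S'\supsetneq S$ or $S'$ incomparable to $S$ with $S'\neq\emptyset$: $S'\cap(T\setminus S)\neq\emptyset$, so $\gamma$ appears in the min, giving at most $\gamma\leq\alpha$.
\item $S'=\emptyset$: the value is $\omega_{J,\emptyset}^{T}\leq\alpha$.
\end{itemize}
The maximum is therefore exactly $\alpha$, as required.

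\textbf{The ``furthermore'' clause.} If $\alpha\notin\{\inf\mathcal{A},\sup\mathcal{A}\}$ we are in Case~3 with $\alpha>\inf\mathcal{A}$ and $\alpha<\sup\mathcal{A}$, so I may additionally require $\gamma<\alpha<\beta$ (both choices available in $\mathcal{A}$). With these strict choices no voter's grade on $J$ equals $\alpha$, while the case analysis above is unchanged. The only genuine obstacle in the argument is the careful four-way case split in Case~3, which relies on the monotonicity $\omega_{J,S'}^{T}\leq\omega_{J,S}^{T}$ for $S'\subseteq S$ (to bound the ``too small'' $S'$) and on the choice $\gamma\leq\alpha$ (to kill the ``too large'' $S'$). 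The endpoint cases~1 and~2 are where Lemma~\ref{useful lemma 1} is needed, since one cannot push grades strictly outside $[\inf\mathcal{A},\sup\mathcal{A}]$, and it is precisely these cases that allow the boundary values of $\alpha$ to be exhibited.
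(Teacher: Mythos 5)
Your proof is correct and follows essentially the same route as the paper's: normalise the phantoms via Lemma~\ref{useful lemma 1}, dispose of the two extreme cases $\omega_{J,S}^T(\textbf{v}_{-T})<\inf\mathcal{A}$ and $>\sup\mathcal{A}$ separately, and in the main case assign a grade $\beta\geq\omega_{J,S}^T(\textbf{v}_{-T})$ to the voters of $S$ and $\gamma\leq\omega_{J,S}^T(\textbf{v}_{-T})$ to those of $T\setminus S$ (strictly so for the ``furthermore'' clause). Your write-up is merely more explicit than the paper's in verifying the maxmin over all subsets $S'\subseteq T$; no substantive difference.
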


\begin{proof}
Let us place ourselves in the simplified choice of $\omega_{J,S}^T$ suggested by lemma \ref{useful lemma 1}.
\begin{itemize}
    \item If $\omega_{J,T}^T(\textbf{v}_{-T}) < (\textbf{v}_{-T})(\textbf{v}_{-T})$, then no matter how we complete $\textbf{v}$ we have $\varphi(\textbf{v})= \omega_{J,T}^T(\textbf{v}_{-T})=\omega_{J,S}^T(\textbf{v}_{-T})$.
    \item If $\omega_{J,\emptyset}^T(\textbf{v}_{-T}) > (\textbf{v}_{-T})(\textbf{v}_{-T})$, then no matter how we complete $\textbf{v}$ we have $\varphi(\textbf{v})= \omega_{J,\emptyset}^T(\textbf{v}_{-T})=\omega_{J,S}^T(\textbf{v}_{-T})$.
    \item Else we have $\inf\mathcal{A}\leq \omega_{J,S}^T(\textbf{v}_{-T}) \leq\sup\mathcal{A}$. We can therefore select 
    $\alpha \leq \omega_{J,S}^T(\textbf{v}_{-T})$ and $\beta \geq \omega_{J,S}^T(\textbf{v}_{-T})$. If $i\in S$ then $v_i(J)=\beta$ and else $v_i(J)=\alpha$. We get $\varphi(\textbf{v})(J) =\omega_{S,J}^T(\textbf{v}_{-T})$. If $\omega_{S,J}^T(\textbf{v}_{-T}) \not \in \{\inf\mathcal{A},\sup\mathcal{A}\}$ then we can select $\alpha$ and $\beta$ such that $\alpha < \omega_{J,S}^T(\textbf{v}_{-T}) < \beta$.
\end{itemize}
\end{proof}

\section{Section \ref{section SP} }











\subsection{Proof for Proposition \ref{SP prop}}\label{SP prop proof}

\begin{proof}
    For $i \in \mathcal{N}$ and $J \in \mathcal{M}$. For a voting profile $\textbf{v}$ such that $v_i(J) \in \mathcal{A}$. Let $\textbf{w}=\textbf{v}[v_i = w_i : w_i(J) \in \mathcal{A}]$. If $v_i(J) < \psi(\textbf{v})(J)$ we get:
    \begin{align*}
        \psi(\textbf{v})(J) & = \mu_{g_J(\#\mathcal{D''}_J \cup \mathcal{F}_J(\textbf{v}))}(\textbf{v}(J) \cup \mathcal{F}_J(\textbf{v})) \\
        & \leq \mu_{g_J(\#\mathcal{D''}_J \cup \mathcal{F}_J(\textbf{v}))}(\textbf{w}(J) \cup \mathcal{F}_J(\textbf{v})) \\
        & = \mu_{g_J(\#\mathcal{D''}_J \cup \mathcal{F}_J(\textbf{w}))}(\textbf{w}(J) \cup \mathcal{F}_J(\textbf{w})) \\
        & = \psi(\textbf{w})(J)
    \end{align*}

The $\psi(\textbf{v})(J) < v_i(J)$ is symmetric. Therefore we verify SP.
    
\end{proof}

\subsection{Proof for proposition \ref{SP charac prop}}\label{SP charc prop proof}

\begin{proof}
Let $\psi$ be a phantom-proxy mechanism defined by:
\[\forall \textbf{v},\psi(\textbf{v})(J)=\mu_{g_J(\#\mathcal{D''}_J \cup \mathcal{F}_J(\textbf{v}))}(\textbf{v}(J) \cup \mathcal{F}_J(\textbf{v})).\]

For any $\textbf{v}$, let $n=\#(\mathcal{D''}_J \cup \mathcal{F}_J(\textbf{v}))$ and $p=g(n)$ and $\alpha=\psi(\textbf{v})$. We have that $\alpha$ is the $p$-th smallest element of $\textbf{v}(J) \cup \mathcal{F}_J(\textbf{v})$. 

Let $\varphi$ be the SP grading function defined by the phantom-mapping given in the characterization.

\begin{align*}
    \varphi(\textbf{v})(J) = & \max_{S \subseteq T} \min \{v_i(J) : i \in S \} \cup \{\omega_{J,S}^T(\textbf{v}_{-T}) \} \\
    = & \max_{S \subseteq T} \min \{v_i(J) : i \in S \} \cup \{\mu_{\#S - \#T +p}(\mathcal{F}_J(\textbf{v}))\} 
\end{align*}

Suppose that $\alpha \not \in \textbf{v}(J)$ then let $m$ be the number of elements of $\mathcal{F}_J(\textbf{v})$ counted as less than $\alpha$ when determining $\psi(\textbf{v})(J)$. There are therefore $p-m$ elements of $\textbf{v}(J)$ counted as less than $\alpha$. For $S$ the set that does not contain the $p-m$ voters associated to the smallest elements of the multi-set $\textbf{v}(J)$. $\#S = \#T - p + m$, therefore $ \mu_{\#S - \#T +p}(\mathcal{F}_J(\textbf{v})) = \alpha$. It follows that:
\[\min \{v_i(J) : i \in S \} \cup \{\mu_{\#S - \#T +p}(\mathcal{F}_J(\textbf{v}))\} = \alpha\]
And for any other $S' \subseteq T$ we have :

\[\min \{v_i(J) : i \in S' \} \cup \{\mu_{\#S' - \#T +p}(\mathcal{F}_J(\textbf{v}))\} \leq \alpha.\]

Therefore we have that $\varphi(\textbf{v})(J) = \alpha.$

If $\alpha \in \textbf{v}(J)$ then let $m$ be the number of elements of $\mathcal{F}_J(\textbf{v})$ counted as less than $\alpha$ when determining $\psi(\textbf{v})(J)$. There are therefore $p-m-1$ elements of $\textbf{v}(J)$ counted as less than $\alpha$. For $S$ the set that does not contain the $p-m-1$ voters associated to the smallest elements of the multi-set $\textbf{v}(J)$. $\#S = \#T - p + m +1$, therefore $ \mu_{\#S - \#T +p}(\mathcal{F}_J(\textbf{v})) \geq \alpha$. It follows that:
\[\min \{v_i(J) : i \in S \} \cup \{\mu_{\#S - \#T +p}(\mathcal{F}_J(\textbf{v}))\} = \alpha\]
And for any other $S' \subseteq T$ we have :

\[\min \{v_i(J) : i \in S' \} \cup \{\mu_{\#S' - \#T +p}(\mathcal{F}_J(\textbf{v}))\} \leq \alpha.\]

Therefore we have that $\varphi(\textbf{v})(J) = \alpha.$

\end{proof}

\section{Section \ref{section non graders} }
\subsection{Proof for BV theorem \ref{BV theo}}\label{BV theo proof}
\begin{proof}
    $\Rightarrow:$ By using our lemma \ref{useful lemma 2} we can show that for all $\textbf{v}$ with $v_i(J) \in \emptyset$ and $\textbf{w}=\textbf{v}[v_i(I) =\otimes]$. For all $S, T, J$ we have that $\omega_{J,S}^T(\textbf{v})=\omega_{J,S}^T(\textbf{w})$. Therefore phantom-mapping verify (BV). 
    $\Leftarrow:$ Immediate
\end{proof}

\subsection{Proof for BV property \ref{BV prop}}\label{BV prop proof}
Immediate due to characterization of the phantom-mappings \ref{SP charc prop proof}.

\subsection{Proof for SI theorem \ref{SI theo}}\label{SI theo proof}
\begin{proof}
    $\Rightarrow:$ By using our lemma \ref{useful lemma 2} we can show that for all $\textbf{v}$ with $v_i(J) = \circ$ and $\textbf{w}=\textbf{v}[\forall J, v_j(J) =\emptyset]$. For all $S, T, J$ we have that $\omega_{J,S}^T(\textbf{v})=\omega_{J,S}^T(\textbf{w})$. Therefore phantom-mapping verify (SI). 
    $\Leftarrow:$ Immediate.
\end{proof}

\subsection{Proof for SI property \ref{SI prop}}\label{SI prop proof}
Immediate due to characterization of the phantom-mappings \ref{SP charc prop proof} and the definition of the phantom-proxy mechanism.

\subsection{Proof for SC theoretical lemma \ref{SC theo lemma}}\label{SC theo lemma proof}

\begin{proof}
$\Rightarrow:$
For any $J$,$S$ and $T$ and $i \not \in T$, let $\textbf{v}$ be such that $v_i(J)=\circ$, for $j \in S$, $v_j(J) = \alpha$ such that $\omega_{J,\emptyset}^T(\textbf{v}_{-T}) \leq \alpha \leq \omega_{J,S}^T(\textbf{v}_{-T})$ and for $j \in T-S$, $v_j(J) = \beta \leq  \alpha$. Suppose that $\beta < \alpha$, we have that $\varphi(\textbf{v})(J) = \alpha$. Let $\textbf{w} = \textbf{v}[v_i(J):=\alpha]$, by SC we get $\varphi(\textbf{w})(J) = \alpha$. As such we have
$\omega_{J,\emptyset}^{T \cup \{i\}}(\textbf{w}_{-T \cup \{i\}}) \leq \alpha \leq \omega_{J,S \cup \{i\}}^{T \cup \{i\}}(\textbf{w}_{-T \cup \{i\}})$.
If we could not choose $\beta<\alpha$ then $\omega_{J,\emptyset}^T(\textbf{v}_{-T}) = \alpha =\inf\mathcal{A}$ therefore by SC we have $\omega_{J,\emptyset}^{T\cup\{i\}}(\textbf{v}_{-T \cup \{i\}}) \leq \alpha \leq \omega_{J,S}^{T\cup\{i\}}(\textbf{v}_{-T \cup \{i\}})$.

For any $J$,$S$ and $T$ and $i \not \in T$, let $\textbf{v}$ be such that $v_i(J)=\circ$, for $j \in S$, $v_j(J) = \alpha$ and for $j \in T-S$, $v_j(J) = \beta$ such that $\omega_{J,S}^T(\textbf{v}_{-T}) \leq \beta \leq \omega_{J,T}^{T}(\textbf{v}_{-T})$ and $\beta \leq \alpha$. Suppose that $\beta < \alpha$. We have that $\varphi(\textbf{v})(J) = \beta$. Let $\textbf{w} = \textbf{v}[v_i(J):=\beta]$, by SC we get $\varphi(\textbf{w})(J) = \beta$. As such we have
$\omega_{J,S}^{T \cup \{i\}}(\textbf{w}_{-(T \cup \{i\})}) \leq \beta \leq \omega_{J,T \cup \{i\}}^{T \cup \{i\}}(\textbf{v}_{-(T \cup \{i\})})$.
If we could not choose $\beta < \alpha$ then $\omega_{J,T}^T(\textbf{v}_{-T}) = \beta = \sup\mathcal{A}$ therefore by SC we have  $\omega_{J,S}^{T\cup\{i\}}(\textbf{v}_{-(T\cup\{i\})}) \leq \alpha \leq \omega_{J,T}^{T\cup\{i\}}(\textbf{v}_{-(T\cup\{i\})})$.

$\Leftarrow:$ Suppose we have our inequalities. Let $\textbf{v}$ be such that $v_i(J) = \circ$ and $\varphi(\textbf{v})(J) = \alpha \in \mathcal{A}$. Let $\textbf{w}=\textbf{v}[v_i(J):=\alpha]$. Then we have that:
\begin{align*}
    \varphi(\textbf{v})(J) = & \max_{S \in T} \min \{v_i(J) : i \in S \} \cup \{\omega_{J,S}^{T}(\textbf{v}_{-T}) \} \\
    = & \min \{v_j(J) : j \in S_v \} \cup \{\omega_{J,S_v}^{T}(\textbf{v}_{-T}) \} \\
    = & \min \{v_j(J) : j \in S_v \} \cup \{\alpha\} \cup \{\omega_{J,S_v}^{T}(\textbf{v}_{-T}) \} \\
    \leq & \min \{w_j(J) : j \in S_v \cup \{i\} \} \cup \{\omega_{J,S_v \cup \{i\}}^{T\cup\{i\}}(\textbf{v}_{-(T\cup\{i\})}) \} \\
    \leq & \varphi(\textbf{w})(J)
\end{align*}

\begin{align*}
    \varphi(\textbf{w})(J) = & \max_{S \in T \cup \{i\}} \min \{w_i(J) : i \in S \} \cup \{\omega_{J,S}^{T\cup\{i\}}(\textbf{w}_{-(T\cup\{i\})}) \} \\
    = & \min \{w_j(J) : j \in S_w \} \cup \{\omega_{J,S_w}^{T\cup\{i\}}(\textbf{w}_{-(T\cup\{i\})}) \} \\
    = & \min \{v_j(J) : j \in S_w - \{i\}\} \cup \{\alpha\} \cup \{\omega_{J,S_w}^{T\cup\{i\}}(\textbf{w}_{-(T\cup\{i\})}) \} \\
    \leq & \min \{v_j(J) : j \in S_w - \{i\}\} \cup \{\omega_{J,S_w}^{T}(\textbf{v}_{-T}) \} \\
    \leq  & \varphi(\textbf{v})(J)
\end{align*}

Therefore $\varphi(\textbf{w})(J)=\alpha$, we verify SC.

\end{proof}

\subsection{Proof for SC theorem \ref{SC theo}}\label{SC theo proof}

 \begin{proof}
 This is an immediate consequence of the characterization \ref{SC theo lemma}.
 \end{proof}

 \subsection{Proof for SC proposition \ref{SC prop}}\label{SC prop proof}

  \begin{proof}
$\Rightarrow:$
Let $i$ be the absentee voter. When player $i$ changed to a vote $\alpha$ for $J$ we have 2 options, either there was a proxy for him ($f_{i,J}(v_i)\in \mathcal{B}$) in which case SP provides us that we are SC or $\mathcal{F}_J(\textbf{v}) =\mathcal{F}_J(\textbf{v}[v_i(J):=\alpha])$. 

Let us recall that the phantom-mappings are characterized as for any $S$ and $T$ and $\textbf{v}$:
$k=\#S - \#T +g_J(\#\mathcal{D''}_J + \#\mathcal{F}_J(\textbf{v}))$.

\begin{itemize}
    \item If $k \leq 0$ then $\omega_{J,S}^T(\textbf{v}_{-T}) = \inf\mathcal{B}$.
    \item If $k > \#\mathcal{F}_J(\textbf{v})$ then $ \omega_{J,S}^T(\textbf{v}_{-T})  = \sup\mathcal{B}$
    \item Else $\omega_{J,S}^T(\textbf{v}_{-T}) = \mu_{k}(\mathcal{F}_J(\textbf{v}))$
\end{itemize}

Since the set of proxy phantoms has not changed and that according to the SC characterization we need to ensure a series of inequality we need. Let $k'=\#S - \#T +g_J(\#\mathcal{D''}_J + \#\mathcal{F}_J(\textbf{v})+1)$.

\begin{itemize}
    \item If $k' \leq 0$ then $\omega_{J,S}^{T\cup\{i\}}(\textbf{v}_{-(T\cup\{i\})}) = \inf\mathcal{B}$. It follows that we need $\omega_{J,S}^{T}(\textbf{v}_{-T}) \leq \inf\mathcal{A}$. Since $g_J$ cannot depend on the values of $\mathcal{F}_J$, we therefore need $k \leq 0$. As such for all $p \in \mathbf{N}$, $g_J(p) \leq g_J(p+1)$.
    \item If $k' \geq \#\mathcal{F}_J(\textbf{v})$ then $ \omega_{J,S\cup \{i\}}^{T \cup \{i\}}(\textbf{v}_{-(T \cup \{i\})})  = \sup\mathcal{B}$. It follows that we need $\omega_{J,S}^{T}(\textbf{v}_{-T}) > \sup\mathcal{A}$. Since $g_J$ cannot depend on the values of $\mathcal{F}_J$, we therefore need $k \geq \#\mathcal{F}_J(\textbf{v})$. As such for all $p \in \mathbf{N}$, $g_J(p) +1 \geq g_J(p+1)$.
\end{itemize}

We have shown that for all $p$ we have $g_J(p+1) -1 \leq g_J(p) \leq g_J(p+1)$. As such $g_J(p+1) \in \{g_J(p),g_J(p)+1\}$.

$\Leftarrow:$ Suppose that for all $p\in\mathbf{N}$ we have $g_J(p+1) \in \{g_J(p),g_J(p)+1\}$. Then for all $S$ we have $\omega_{J,S}^{T\cup\{i\}}(\textbf{v}_{-T\cup\{i\}}) \leq \omega_{J,S}^{T}(\textbf{v}_{-T}) \leq \omega_{J,S\cup\{i\}}^{T\cup\{i\}}(\textbf{v}_{-T\cup\{i\}})$. The characterization of SC gives the rest.

 \end{proof}
 
\subsection{Proof for P remark \ref{P remark}}\label{P remark proof}

\begin{proof}
 Suppose that $\varphi$ verifies the participation property (P).
For any $\textbf{w}$ such that $w_i(J) = \circ$. Let $\alpha = \varphi(\textbf{w})(J) \in \mathcal{A}$ and let $\textbf{v}=\textbf{w}[v_i(J) :=\alpha]$.

If we suppose that $\varphi(\textbf{v})(J) > v_i(J)=\alpha$. Then by (P) we have that $\alpha = \varphi(\textbf{w})(J) \geq \varphi(\textbf{v})(J) >\alpha$. This is absurd. A symmetrical proof shows that $\varphi(\textbf{v})(J) < v_i(J)=\alpha$ is also absurd. As such we must have $\varphi(\textbf{v})(J) =\alpha$.

We have therefore that $\varphi$ verifies the "silent consent" rule.
\end{proof}

\subsection{Proof for P theorem \ref{P theo}}\label{P theo proof}

\begin{proof}
$\Rightarrow:$ Let us consider a voting profile $\textbf{v}$ such that $v_i(J) = \alpha \in \mathcal{A}$. Let us assume that $\varphi(\textbf{v})(J) = \omega_{J,S}^{\mathcal{D''}_J}(\textbf{v}_{-\mathcal{D''}_J})$ and that $\forall i, \varphi(\textbf{v})(J) \neq v_i(J)$ (We can always obtain this result by moving the $v_i(I)$ with $I \neq J$ around unless $\omega_{J,S}^{\mathcal{D''}_J}(\textbf{v}_{-\mathcal{D''}_J}) \in\{\inf \mathcal{A},\sup \mathcal{A}\}$). Let the profile $\textbf{w}=\textbf{v}[w_i=\circ]$.

First case:
 $\omega_{J,S}^{\mathcal{D''}_J}(\textbf{v}_{-\mathcal{D''}_J}) < \alpha$. We have $i \in S$. Suppose that 
 \[\omega_{J,S}^{\mathcal{D''}_J(\textbf{v})}(\textbf{v}_{-\mathcal{D''}_J}) < \omega_{J,S-\{i\}}^{\mathcal{D''}_J(\textbf{w})}(\textbf{w}_{-\mathcal{D''}_J}).\]
 If $\omega_{J,S}^{\mathcal{D''}_J}(\textbf{v}_{-\mathcal{D''}_J}) = \sup \mathcal{A}$. This implies that $\varphi(\textbf{w})(J)=\omega_{J,\emptyset}^{\mathcal{D''}_J(\textbf{w})}(\textbf{w}_{-\mathcal{D''}_J}) > \sup \mathcal{A} = \varphi(\textbf{v})(J)$.
 Else we have that:
 \begin{align*}
     \varphi(\textbf{v})(J) = & \omega_{J,S}^{\mathcal{D''}_J}(\textbf{v}_{-\mathcal{D''}_J}) \\
     < & \min \{v_j(J) : j \in S - \{i\}\} \cup \{ \omega_{J,S-\{i\}}^{\mathcal{D''}_J(\textbf{w})}(\textbf{w}_{-\mathcal{D''}_J}) \}
     \leq \varphi(\textbf{w})(J)\\
 \end{align*}
We can conclude that :
 \[\omega_{J,S}^{\mathcal{D''}_J(\textbf{v})}(\textbf{v}_{-\mathcal{D''}_J}) \geq \omega_{J,S-\{i\}}^{\mathcal{D''}_J(\textbf{w})}(\textbf{w}_{-\mathcal{D''}_J}).\]

Second case:
 $\alpha < \omega_{J,S}^{\mathcal{D''}_J}(\textbf{v}_{-\mathcal{D''}_J})$. We have $i \not \in S$. Suppose that 
 \[\omega_{J,S}^{\mathcal{D''}_J(\textbf{v})}(\textbf{v}_{-\mathcal{D''}_J}) > \omega_{J,S}^{\mathcal{D''}_J(\textbf{w})}(\textbf{w}_{-\mathcal{D''}_J}).\]
 If $\omega_{J,S}^{\mathcal{D''}_J}(\textbf{v}_{-\mathcal{D''}_J}) = \inf \mathcal{A}$. This implies that $\varphi(\textbf{w})(J)=\omega_{J,{\mathcal{D''}_J(\textbf{w})}}^{\mathcal{D''}_J(\textbf{w})}(\textbf{w}_{-\mathcal{D''}_J}) < \inf \mathcal{A} = \varphi(\textbf{v})(J)$.
 Else we have that:
 \begin{align*}
     \varphi(\textbf{w})(J) \leq & \min \{w_j(J) : j \in S \} \cup \{ \omega_{J,S}^{\mathcal{D''}_J(\textbf{w})}(\textbf{w}_{-\mathcal{D''}_J}) \} \\
     \leq & \omega_{J,S}^{\mathcal{D''}_J(\textbf{w})}(\textbf{w}_{-\mathcal{D''}_J}) \\
     < & \omega_{J,S}^{\mathcal{D''}_J(\textbf{v})}(\textbf{v}_{-\mathcal{D''}_J}) \\
     = & \varphi(\textbf{v})(J)\\
 \end{align*}
We can conclude that :
 \[\omega_{J,S}^{\mathcal{D''}_J(\textbf{v})}(\textbf{v}_{-\mathcal{D''}_J}) \leq \omega_{J,S}^{\mathcal{D''}_J(\textbf{w})}(\textbf{w}_{-\mathcal{D''}_J}).\]

$\Leftarrow:$ Suppose that we have desired characterization. For any $J \in \mathcal{M}$, let $\textbf{v}$ be a voting profile with $v_i(J) = \alpha$. Let $\textbf{w}$ obtained by $\textbf{v}$ by $w_i(J)= \circ$ everything else equal.

Suppose that $\varphi(\textbf{v})(J) > v_i(J)$. It follows that for $S$ such that :
\[\varphi(\textbf{v})(J)= \max_{S \subseteq \mathcal{D''}_J(\textbf{v})} \min \{v_j(J) : j \in S\}\ \cup \{\omega_{J,S}^{\mathcal{D''}_J(\textbf{v})}(\textbf{v}_{-\mathcal{D''}_J})\}\]
we have that $i \not \in S$. As such $S \subseteq \mathcal{D''}_J(\textbf{w})$. As such:

\begin{align*}
    \varphi(\textbf{w})(J)= &\max_{S \subseteq \mathcal{D''}_J(\textbf{w})} \min \{w_j(J) : j \in S\} \cup \{\omega_{J,S}^{\mathcal{D''}_J(\textbf{w})}(\textbf{w}_{-\mathcal{D''}_J})\} \\
    \geq & \max_{S \subseteq \mathcal{D''}_J(\textbf{w})} \min \{v_j(J) : j \in S\}\cup \{\omega_{J,S}^{\mathcal{D''}_J(\textbf{v})}(\textbf{v}_{-\mathcal{D''}_J})\} \\
    \geq & \max_{S \subseteq \mathcal{D''}_J(\textbf{v})} \min \{v_j(J) : j \in S\} \cup \{\omega_{J,S}^{\mathcal{D''}_J(\textbf{v})}(\textbf{v}_{-\mathcal{D''}_J})\} \\
    = & \varphi(\textbf{v})(J)
\end{align*}

Conversely if $\varphi(\textbf{v})(J) < v_i(J)$. Due to the monotonicity of the $\omega_{J,S}^T$ functions in respect to the sets $S$ we know that we can find $S$ such that $\varphi(\textbf{v})(J)= \max_{S \subseteq \mathcal{D''}_J(\textbf{v})} \min \{v_j(J) : j \in S\} \cup \{\omega_{J,S}^{\mathcal{D''}_J(\textbf{v})}(\textbf{v}_{-\mathcal{D''}_J})\}$ and $i \in S$. As such:

\begin{align*}
    \varphi(\textbf{w})(J)= &\max_{S \subseteq \mathcal{D''}_J(\textbf{w})} \min \{w_j(J) : j \in S\} \cup \{\omega_{J,S}^{\mathcal{D''}_J(\textbf{w})}(\textbf{w}_{-\mathcal{D''}_J})\} \\
    \leq & \max_{\{i\} \subseteq S \subseteq \mathcal{D''}_J(\textbf{v})} \min \{v_j(J) : j \in S\} \cup \{\omega_{J,S}^{\mathcal{D''}_J(\textbf{v})}(\textbf{v}_{-\mathcal{D''}_J})\} \\
    \leq & \max_{S \subseteq \mathcal{D''}_J(\textbf{v})} \min \{v_j(J) : j \in S\} \cup \{\omega_{J,S}^{\mathcal{D''}_J(\textbf{v})}(\textbf{v}_{-\mathcal{D''}_J})\} \\
    = & \varphi(\textbf{v})
\end{align*}

So the participation property is verified.

 \end{proof}

\subsection{Proof for P and FP propositions \ref{P prop}}\label{FP prop proof}

\begin{proof}
The proof is the same as for (SC) see \ref{SC prop proof}
\end{proof}

\subsection{Proof for JD lemma \ref{JD lemma}}\label{JD lemma proof}

\begin{proof}
    (JD) provides that the outcome of a $\omega_J$ function only depends on the votes that were associated to $J$. Therefore in order to determine the value of $\omega_{J,S}^T$ we are only interested in the votes $v_i(J) \in \{\emptyset,\otimes,\circ\}$.
\end{proof}

\subsection{Proof for JD prop \ref{JD prop}}\label{JD prop proof}
\begin{proof}
    Immediate due to the characterization \ref{SP charac prop}.
\end{proof}
\section{Section \ref{section prop}}

\subsection{Proof for U theorem \ref{U theo}}\label{U theo proof}

\begin{proof}
        Suppose that we are unanimous and that there exists a profile $\textbf{v}$ such that $\omega_{J,\emptyset}^T(\textbf{v}_{-\mathcal{D''}_J}) > \inf \mathcal{A}$. Let $\alpha \in \mathcal{A}$ be such that $\alpha < \omega_{J,\emptyset}^T(\textbf{v}_{-\mathcal{D''}_J})$. Let $\textbf{w}=\textbf{v}[v_i(J) : i \in \mathcal{D''}_J]$. Then by (U) we must have $\alpha=\varphi(\textbf{w})(J)=\omega_{J,\emptyset}^T(\textbf{v}_{-\mathcal{D''}_J})$. This is absurd, therefore $\omega_{J,\emptyset}^T(\textbf{v}_{-\mathcal{D''}_J}) \leq \inf \mathcal{A}$. 
        
        The proof for $\inf \mathcal{A} \wedge \omega_{J,T}^T\geq\sup\mathcal{A}$ is done symmetrically.
\end{proof}

\subsection{Proof for N theorem \ref{N theo}}\label{N theo proof}

\begin{proof}
$\Rightarrow:$
    Let $\varphi:\mathcal{O}^{\mathcal{M} \times \mathcal{N}} \rightarrow {(\mathcal{B} \cup \{\emptyset\})}^\mathcal{M}$ be (SP,N). For any $\textbf{v}$, according to neutrality if $\mathcal{D}_I=\mathcal{D}_J$ then we can can switch the votes of $I$ and $J$ without changing the outcome. Let $\textbf{w}$ be obtained from $\textbf{v}$ by the swapping of votes for $I$ with those for $J$.
    Let us first consider $K$ that is not $I$ or $J$. By using the lemma \ref{useful lemma 2}, we get $\omega_{K,S}^{\mathcal{D''}_K}(\textbf{v}_{-\mathcal{D''}_K})=\omega_{K,S}^{\mathcal{D''}_K}(\textbf{w}_{-\mathcal{D''}_K})$. So when we swapped $J$ and $I$ the outcome of $\omega_{K,S}^{\mathcal{D''}_K}$ was not affected. Therefore $\omega_{K,S}^{\mathcal{D''}_K}$ is neutral.

    Let us now consider $I$ and $J$, by neutrality we have that $\varphi(\textbf{v})(J)=\varphi(\textbf{w})(I)$. By using lemma \ref{useful lemma 2} we can get:  $\omega_{J,S}^{\mathcal{D''}_J}(\textbf{v}_{-\mathcal{D''}_J})=\omega_{I,S}^{\mathcal{D''}_I}(\textbf{w}_{-\mathcal{D''}_I})$. 
    Since all phantom-mappings $\omega_{K,S}^T$ functions are neutral, we therefore have that $\omega_{J,S}^{\mathcal{D''}_J}(\textbf{v}_{-\mathcal{D''}_J})=\omega_{I,S}^{\mathcal{D''}_I}(\textbf{w}_{-\mathcal{D''}_I})=\omega_{I,S}^{\mathcal{D''}_I}(\textbf{v}_{-\mathcal{D''}_J})$.
    It follows that $\omega_{J,S}^{\mathcal{D''}_J}=\omega_{I,S}^{\mathcal{D''}_I}$. This proves the existence of the $\omega_{U,S}^{T}$ functions.

$\Leftarrow:$  
For $I$ and $J$ such that $\mathcal{D}_I=\mathcal{D}_J$, for a given voting profile $\textbf{v}$, let $\textbf{w}=\textbf{v}[\forall i, v_i(I):=v_i(J),v_i(J):=v_i(I)]$.
    \begin{align*}
        \varphi(\textbf{v})(J)=& \max_{S\subseteq \mathcal{D''}_J(\textbf{v})} \min \{v_i(J) : i \in S\} \cup \{\omega_{J,S}^{\mathcal{D''}_J}(\textbf{v}_{-\mathcal{D''}_J})\} \\
        =&\max_{S\subseteq {\mathcal{D''}_I}(\textbf{w})}\min \{w_i(I) : i \in S\} \cup \{\omega_{I,S}^{\mathcal{D''}_I}(\textbf{w}_{\mathcal{D''}_I})\}\\
        =&\varphi(\textbf{w})(J).
    \end{align*}
\end{proof}

\subsection{Proof for N proposition \ref{N prop}}\label{N prop proof}

\begin{proof}
$\Rightarrow:$ Let $\psi$ be an (SP,N) phantom-proxy mechanism. For any $\textbf{v}$. Let $U =\mathcal{D}_J=\mathcal{D}_I$, let us switch $I$ and $J$ to obtain $\textbf{w}=\textbf{v}[J:=v_i(I), I:=v_i(J) : \forall i]$. For $K \not \in \{I,J\}$, $\psi(\textbf{v})(K)=\psi(\textbf{w})(K)$. We can use lemma \ref{useful lemma 2} to obtain that:

\begin{align*}
    \omega_{K,S}^T(\textbf{v}_{-T})= & \mu_{(\#S - \# T - g_K(\#T + \#\mathcal{F}_K(\textbf{v}))}(\mathcal{F}_K(\textbf{v})) \\
    & \mu_{(\#S - \# T - g_K(\#T+\#\mathcal{F}_K(\textbf{w}))}(\mathcal{F}_K(\textbf{w})) \\
    = & \omega_{K,S}^T(\textbf{w}_{-T})
\end{align*}

Therefore for all $k$, we have $\mu_k(\mathcal{F}_K(\textbf{v})) = \mu_k(\mathcal{F}_K(\textbf{w}))$. As such we have $f_{i,K}(v_i) =f_{i,K}(w_i)$. Therefore all $f_{i,K}$ functions are neutral.

\begin{align*}
    \omega_{U,S}^T(\textbf{v}_{-T})= & \mu_{(\#S - \# T - g_I(\#T + \#\mathcal{F}_I(\textbf{v}))}(\mathcal{F}_I(\textbf{v})) \\
    & \mu_{(\#S - \# T - g_I(\#T+ \#\mathcal{F}_I(\textbf{w}))}(\mathcal{F}_I(\textbf{w})) \\
    & \mu_{(\#S - \# T - g_J(\#T + \#\mathcal{F}_J(\textbf{w}))}(\mathcal{F}_J(\textbf{w})) \\
    = & \omega_{U,S}^T(\textbf{w}_{-T})
\end{align*}

Therefore for all $k$, we have $\mu_k(\mathcal{F}_I(\textbf{v})) = \mu_{(k+c)}(\mathcal{F}_J(\textbf{w}))$. By considering extreme values we find that $c=0$. As such $g_I(k)=g_J(k)$ for $k \leq \#T + \max \#\mathcal{F}_I(\textbf{v})$ and $f_{i,I}(\textbf{v})=f_{i,J}(\textbf{w}) = f_{i,J}(\textbf{v})$. As such we have $f_{i,J}=f_{i,I}$.



$\Leftarrow:$ Suppose that we have our characterization. Let $\textbf{v}$ be such that $\mathcal{D}_J=\mathcal{D}_I$ and let $\textbf{w}$ be such that $\textbf{w}$ was obtained from $\textbf{v}$ by switching $I$ and $J$. In the worst case we replace a proxy vote by the same proxy vote.

\end{proof}

\subsection{Proof for SN theorem \ref{SN theo}}\label{SN theo proof}

\begin{proof}
$\Rightarrow:$
    Let $\varphi:\mathcal{O}^{\mathcal{M} \times \mathcal{N}} \rightarrow {(\mathcal{B} \cup \{\emptyset\})}^\mathcal{M}$ be (SP,SN). For any $\textbf{v}$, according to strong neutrality we can can switch the votes of $I$ and $J$ without changing the outcome. Let $\textbf{w}$ be obtained from $\textbf{v}$ by swapping $I$'s votes with those for $J$.
    Let us first consider $K$ that is not $I$ or $J$. By using the lemma \ref{useful lemma 2}, we get $\omega_{K,S}^{\mathcal{D''}_K}(\textbf{v}_{-\mathcal{D''}_K})=\omega_{K,S}^{\mathcal{D''}_K}(\textbf{w}_{-\mathcal{D''}_K})$. So when we swapped $J$ and $I$ the outcome of $\omega_{K,S}^{\mathcal{D''}_K}$ was not affected. Therefore $\omega_{K,S}^{\mathcal{D''}_K}$ is strongly neutral.

    Let us now consider $I$ and $J$, by strong neutrality we have that $\varphi(\textbf{v})(J)=\varphi(\textbf{w})(I)$. By using lemma \ref{useful lemma 2} we can get:  $\omega_{J,S}^{\mathcal{D''}_J}(\textbf{v}_{-\mathcal{D''}_J})=\omega_{I,S}^{\mathcal{D''}_I}(\textbf{w}_{-\mathcal{D''}_I})$. 
    Since all phantom-mappings $\omega_{K,S}^T$ functions are strong neutral, we therefore have that $\omega_{J,S}^{\mathcal{D''}_J}(\textbf{v}_{-\mathcal{D''}_J})=\omega_{I,S}^{\mathcal{D''}_I}(\textbf{w}_{-\mathcal{D''}_I})=\omega_{I,S}^{\mathcal{D''}_I}(\textbf{v}_{-\mathcal{D''}_J})$.
    It follows that $\omega_{J,S}^{\mathcal{D''}_J}=\omega_{I,S}^{\mathcal{D''}_I}$. As such the phantom-mappings do not depend on $J$.

$\Leftarrow:$
For $I$ and $J$, for a given voting profile $\textbf{v}$, let $\textbf{w}=\textbf{v}[\forall i, v_i(I):=v_i(J),v_i(J):=v_i(I)]$.
    \begin{align*}
        \varphi(\textbf{v})(J)=& \max_{S\subseteq \mathcal{D''}_J(\textbf{v})} \min \{v_i(J) : i \in S\} \cup \{\omega_{J,S}^{\mathcal{D''}_J}(\textbf{v}_{-\mathcal{D''}_J})\} \\
        =&\max_{S\subseteq {\mathcal{D''}_I}(\textbf{w})}\min \{w_i(I) : i \in S\} \cup \{\omega_{I,S}^{\mathcal{D''}_I}(\textbf{w}_{\mathcal{D''}_I})\}\\
        =&\varphi(\textbf{w})(J).
    \end{align*}
\end{proof}

\subsection{Proof for SN proposition \ref{SN prop}}\label{SN prop proof}

\begin{proof}
$\Rightarrow:$ Let $\psi$ be an (SP,SN) phantom-proxy mechanism. 
For any $\textbf{v}$. Let us switch $I$ and $J$ to obtain $\textbf{w}=\textbf{v}[J:=v_i(I), I:=v_i(J) : \forall i]$. For $K \not \in \{I,J\}$ $\psi(\textbf{v})(K)=\psi(\textbf{w})(K)$.

\begin{align*}
    \omega_{S}^T(\textbf{v}_{-T})= & \mu_{(\#S - \# T - g_K(\#T + \#\mathcal{F}_K(\textbf{v}))}(\mathcal{F}_K(\textbf{v})) \\
    & \mu_{(\#S - \# T - g_K(\#T+\#\mathcal{F}_K(\textbf{w}))}(\mathcal{F}_K(\textbf{w})) \\
    = & \omega_{S}^T(\textbf{w}_{-T})
\end{align*}

Therefore for all $k$, we have $\mu_k(\mathcal{F}_K(\textbf{v})) = \mu_k(\mathcal{F}_K(\textbf{w}))$. As such we have $f_{i,K}(v_i) =f_{i,K}(w_i)$. Therefore all $f_{i,K}$ functions are strong neutral.

\begin{align*}
    \omega_{S}^T(\textbf{v}_{-T})= & \mu_{(\#S - \# T - g_I(\#T + \#\mathcal{F}_I(\textbf{v}))}(\mathcal{F}_I(\textbf{v})) \\
    & \mu_{(\#S - \# T - g_I(\#T+ \#\mathcal{F}_I(\textbf{w}))}(\mathcal{F}_I(\textbf{w})) \\
    & \mu_{(\#S - \# T - g_J(\#T + \#\mathcal{F}_J(\textbf{w}))}(\mathcal{F}_J(\textbf{w})) \\
    = & \omega_{S}^T(\textbf{w}_{-T})
\end{align*}

This remains true no matter the size of $\mathcal{F}_J(\textbf{v})$, as such $g_I=g_J$.

$\Leftarrow:$ Suppose that we have our characterization. Let $\textbf{v}$ and $\textbf{w}$ such that $\textbf{w}$ was obtained from $\textbf{v}$ by switching  $I$ and $J$.

\begin{align*}
    \psi(\textbf{v})(I) = & \mu_{g_J(\#\mathcal{D''}_J(\textbf{v}) \cup \mathcal{F}_J(\textbf{v}))}(\textbf{v}(J) \cup \mathcal{F}_J(\textbf{v})) \\
    = & \mu_{g(\#\mathcal{D''}_J(\textbf{v}) \cup \mathcal{F}_J(\textbf{v}))}(\textbf{v}(J) \cup \mathcal{F}_J(\textbf{v})) \\
    = & \mu_{g(\#\mathcal{D''}_J(\textbf{w}) \cup \mathcal{F}_J(\textbf{w}))}(\textbf{w}(J) \cup \mathcal{F}_J(\textbf{w})) \\
    =\psi(\textbf{v})(I)
\end{align*}
\end{proof}

\subsection{Proof for A theorem \ref{A theo}}\label{A theo proof}

\begin{proof}
$\Rightarrow:$ 

For any given $J$ we have the following:
\begin{itemize}
    \item If both votes are in $\mathcal{A}$ we have that there is no change to the value of the $\omega_{J,S}$ functions.
    \item If both votes are in $\circ,\otimes$ (or both in $\emptyset$ then we can see that the $\omega$ functions are A. this extends without loss of generality.
    \item If 
\end{itemize}


Let us consider any $\textbf{v}$ such that $\mathcal{C}_i=\mathcal{C}_j$ and $\textbf{w}$ that was obtained by switching the ballots for $i$ and $j$ of $\textbf{v}$. By anonymity we have $\varphi(\textbf{v})=\varphi(\textbf{w})$.

For any $J$, let us use lemma \ref{useful lemma 2} to obtain $\varphi(\textbf{v})(J)=\omega_{J,S}^{\mathcal{D''}_J(\textbf{v})}(\textbf{v}_{-\mathcal{D''}_J})$. 

If we consider $v_i(J),v_j(J) \in \mathcal{A}$. Then we have $\mathcal{D''}_J(\textbf{v})= \mathcal{D''}_J(\textbf{w})=T$. We have that:

\[ \varphi(\textbf{w})(J) = \max_{S' \subseteq T} \min \{w_i: i \in S\} \cup \{\omega_{J,S'}^T(\textbf{w}_{-T}) \} \]

By A and the lemma \ref{useful lemma 2} we therefore have $\omega_{J,S}^T(\textbf{v}_{-T}) =\omega_{J,S'}^T(\textbf{w}_{-T}) $.

\[\max_{S' \subseteq T} \min \{w_i: i \in S'\} \cup \{\omega_{J,S'}^T(\textbf{w}_{-T}) \} \\
    = \max_{S \subseteq T} \min \{v_i: i \in S\} \cup \{\omega_{J,S}^T(\textbf{v}_{-T}) \} \]

If $v_i(J)$ and $v_j(J)$ were both less (or equal) to $\varphi(\textbf{v})$ or both more (or equal) to $\varphi(\textbf{v})$ we obtain that $S=S'$. If we have $v_i(J) < \varphi(\textbf{v}) < v_j(J)$ then we get $S' = (S \cup \{i\}) -\{j\}$. We follows have that if $S$ and $S'$ have the same cardinal-partition then $\omega_{J,S}^T=\omega_{J,S'}^T$.

Let us consider $v_i(J),v_j(J) \in \{\circ,\otimes\}$. We have $\mathcal{D''}_J(\textbf{v}) = \mathcal{D''}_J(\textbf{w})$, therefore by using the lemma \ref{useful lemma 2} we have that $\omega_{J,S}^{\mathcal{D''}_J}(\textbf{v}_{-\mathcal{D''}_J})=\omega_{J,S}^{\mathcal{D''}_J}(\textbf{w}_{-\mathcal{D''}_J})$. The same can be said when $v_i(J)$ and $w_i(J)$ are both in $\emptyset$. As such the $\omega_{J,S}^T$ functions are therefore Anonymous over the set of voters not in $T$. Without loss of generality, we can extend this to all $\omega_{J,S}^T$ functions are anonymous. 

It remains to be able to swap when $v_i(J)\in\{\circ,\otimes\}$ and $v_j(J) \in \mathcal{A}$. We have that for $j \not \in S$, $\omega_{J,S}^{\mathcal{D''}_J(\textbf{v})}(\textbf{v}_{-\mathcal{D''}_J}) = \omega_{J,S}^{\mathcal{D''}_J(\textbf{w})}(\textbf{w}_{-\mathcal{D''}_J})=\omega_{J,S}^{\mathcal{D''}_J(\textbf{v})}(\textbf{w}_{-\mathcal{D''}_J})$. It follows that $\omega_{J,S}^{\mathcal{D''}_J(\textbf{v})}=\omega_{J,S}^{\mathcal{D''}_J(\textbf{w})}$.
If $j\in S$ then $\omega_{J,S}^{\mathcal{D''}_J(\textbf{v})}(\textbf{v}_{-\mathcal{D''}_J}) = \omega_{J,(S\cup\{j\})- \{i\}}^{\mathcal{D''}_J(\textbf{w})}(\textbf{w}_{-\mathcal{D''}_J})=\omega_{J,(S\cup\{j\})- \{i\}}^{\mathcal{D''}_J(\textbf{v})}(\textbf{w}_{-\mathcal{D''}_J})$. Therefore if $T$ and $T$, $S$ and $S'$ have the same cardinal-partition then $\omega_{J,S}^T=\omega_{J,S'}^{T'}$.

We have obtained all the relevant information that provides our characterization.

$\Leftarrow:$ Suppose that we verify the characterization:
For any $\textbf{v}$, suppose we have $i$ and $j$ such that $\mathcal{C}_i=\mathcal{C}_j$. Let $\textbf{w}$ be obtained from $\textbf{v}$ by switching the ballots of $i$ and $j$ everything else equal. For any $J \in\mathcal{M}$:
\begin{itemize}
    \item If $v_i(J),v_j(J) \in \mathcal{A}$ then we have $\mathcal{D''}_J(\textbf{v})=\mathcal{D''}_J(\textbf{w})=T$, $\textbf{v}_{-\mathcal{D''}_J}(J) = \textbf{w}_{-\mathcal{D''}_J}(J)$. And all $S$, $S'$ that have the same cardinal-partition verify $\omega_{J,S}^T=\omega_{J,S}^T$. Therefore
    
\begin{align*}
    \varphi(\textbf{w})(J) = & \max_{S \subseteq T} \min \{w_k: k \in S\} \cup \{\omega_{J,S}^T(\textbf{w}_{-T}) \} \\
    = & \max_{S \subseteq T} \min \{v_k: k \in S\} \cup \{\omega_{J,S}^T(\textbf{v}_{-T}) \} \\
    = & \varphi(\textbf{v})(J).
\end{align*}

\item If $v_i(J),v_j(J)\in\{\circ,\otimes\}$ then we have $\mathcal{D''}_J(\textbf{v})=\mathcal{D''}_J(\textbf{w})=T$ and $\textbf{v}(J) =\textbf{w}(J)$ and the $\omega_{J,S}^T$ functions are anonymous. Therefore:

\begin{align*}
    \varphi(\textbf{w})(J) = & \max_{S \subseteq T} \min \{w_k: k \in S\} \cup \{\omega_{J,S}^T(\textbf{w}_{-T}) \} \\
    = & \max_{S \subseteq T} \min \{v_k: k \in S\} \cup \{\omega_{J,S}^T(\textbf{v}_{-T}) \} \\
    = & \varphi(\textbf{v})(J).
\end{align*}

\item If $v_i(J)\in\{\circ,\otimes\}$ and $v_j(J)\in \mathcal{A}$ then $\mathcal{D''}_J(\textbf{w})=T$ and $\mathcal{D''}_J(\textbf{v})=(T \cup \{j\}) -\{i\}=T'$ have the same partition-cardinal and $S$ and $(S \cup \{j\}) -\{i\}=S'$ have the same partition-cardinal. As such
 $\omega_{J,S}^{T}  = \omega_{J,S'}^{T'}.$
\begin{align*}
    \varphi(\textbf{w})(J) = & \max_{S' \subseteq T'} \min \{w_k: k \in S'\} \cup \{\omega_{J,S'}^T(\textbf{w}_{-T'}) \} \\
    = & \max_{S \subseteq T} \min \{v_k: k \in S\} \cup \{\omega_{J,S}^T(\textbf{v}_{-T}) \} \\
    = & \varphi(\textbf{v})(J).
\end{align*}
\end{itemize}

\end{proof}

\subsection{Proof for A proposition \ref{A prop}}\label{A prop proof}

\begin{proof}
$\Rightarrow:$ Suppose that the phantom-proxy mechanism $\psi$ is (A):
For any $\textbf{v}$ such that $\mathcal{C}_i=\mathcal{C}_j$. Let $\textbf{w}$ be obtained from $\textbf{v}$ by swapping $i$ and $j$
\begin{align*}
    \psi(\textbf{v})(J) = \mu_{g_J(\#\mathcal{D''}_J(\textbf{v}) + \#\mathcal{F}(\textbf{v}))}(\textbf{v}(J)\cup \mathcal{F}_J(\textbf{v}))
\end{align*}

We know that the $\omega_{J,S}^T$ functions are anonymous therefore we have that $f_{i,J}(v_i) = f_{j,J}(w_j)=f_{j,J}(v_i)$. It follows that $f_{i,J} = f_{j,J}$. This shows the existence of $f_{M,J}$. 

$\Leftarrow :$ Suppose that the phantom-proxy mechanism $\psi$ is such that for all $J$ and all $M$ and $i$ if $\mathcal{C}_i =M$ then $f_{i,J}=f_{i,J}$.
If we switch $v_i$ with $w_i$ then at worst we replace a proxy vote by the same proxy vote. Therefore we have that $\psi$ verifies anonymity.
\end{proof}

\subsection{Proof for SA theorem \ref{SA theo}}\label{SA theo proof}

\begin{proof}
$\Rightarrow:$ By the same reasoning as for theorem \ref{SP theo} we have the existence of $\omega_{J,d}^T$ functions such that $\forall \textbf{v}, \varphi(\textbf{v})(J) = med(\textbf{v}(J),\omega_{J,0}^{\mathcal{D''}_J}(\textbf{v}_{-\mathcal{D''}_J}),\dots,\omega_{J,n}^{\mathcal{D''}_J}(\textbf{v}_{-\mathcal{D''}_J}))$.

For any $\textbf{v}$ let $\textbf{w}$ be obtained from $\textbf{v}$ by switching the ballots of $i$ and $j$. 

Thanks to lemma \ref{useful lemma 2} for all $k$ and $J$, without loss of generality, we can assume  $\omega_{J,k}^{\mathcal{D''}_J(\textbf{v})}(\textbf{v}_{-{\mathcal{D''}_J(\textbf{v})}}) = \omega_{J,k}^{\mathcal{D''}_J(\textbf{w})}(\textbf{w}_{-{\mathcal{D''}_J(\textbf{w})}}) $.

By considering $v_i(J)$ and $w_i(J)$ in $\mathcal{E}$ we can deduce that the $\omega_{J,k}^{\mathcal{D''}_J}$ functions are strong Anonymous (without loss of generality, this includes the voters in $\mathcal{D''}_J$).

By considering $v_i(J) \in \mathcal{A}$ and $w_i(J) \in \mathcal{E}$:
\[\omega_{J,k}^{\mathcal{D''}_J(\textbf{v})}(\textbf{v}_{-{\mathcal{D''}_J(\textbf{v})}}) =
\omega_{J,k}^{\mathcal{D''}_J(\textbf{v})}(\textbf{w}_{-\mathcal{D''}_J(\textbf{v})}) =
\omega_{J,k}^{\mathcal{D''}_J(\textbf{w})}(\textbf{w}_{-\mathcal{D''}_J(\textbf{w})}) .\]

We can therefore conclude that $\omega_{J,k}^{\mathcal{D''}_J(\textbf{v})}=\omega_{J,k}^{\mathcal{D''}_J(\textbf{w})}$. Since any $i,j\in \mathcal{N}$ can be considered when switching from $\textbf{v}$ to $\textbf{w}$ we therefore have the existence of a set of functions $\omega_{J,k}^d$ such that if $\mathcal{D''}_J(\textbf{w}) = d$ then $\omega_{J,k}^{\mathcal{D''}_J(\textbf{v})} = \omega_{J,k}^d$.

$\Leftarrow:$ Let the $\omega_{J,k}^d$ functions be strongly anonymous and let $\varphi$ be defined as follows:
\[\forall \textbf{v}, \varphi(\textbf{v})(J) = med\{\textbf{v}(J),\omega_{J,0}^{\#\mathcal{D''}_J}(\textbf{v}_{-\mathcal{D''}_J}),\dots \omega_{J,\#\mathcal{D''}_J}^{\#\mathcal{D''}_J}(\textbf{v}_{-\mathcal{D''}_J}))\}\]

We therefore have that $\varphi$ is (SP,SA).
\end{proof}

\subsection{Proof for SA prop \ref{SA prop}} \label{SA prop proof}

\begin{proof}
$\Rightarrow:$ Suppose that $\psi$ is (SA) then if we switch $i$ and $j$ such that $v_i(J)$ and $v_j(J)$ are in $\mathcal{E}$ we obtain that $f_{i,J}(v_i)=f_{j,J}(v_i)$. Hence the existence of $f_J$.

$\Leftarrow:$ Suppose that the function $f_J$ exists, if we switch $i$ and $j$, at worst we are replacing a proxy vote by an identical proxy vote. Therefore we are strongly anonymous.
\end{proof}

\subsection{Proof for OC theorem \ref{OC theo}}\label{OC theo proof}

\begin{proof}
$\Rightarrow:$ 

\begin{enumerate}
\item If we have $\omega_{J,S}^T(\textbf{v}_{-T})= \omega_{J,S}^{T'}(\textbf{w}_{-T'})$ then for an appropriate choice of $\textbf{t}(J)$ values we get $\alpha = \varphi(\textbf{v})(J)=\omega_{J,S}^T(\textbf{v}_{-T})= \omega_{J,S}^{T'}(\textbf{w}_{-T'})=\varphi(\textbf{w})(J)$ that does not belong to $\textbf{t}(J)$ unless $\alpha \in \{\inf\mathcal{A},\sup\mathcal{A}\}$. Therefore by OC we get $\varphi(\textbf{t})(J)=\alpha$, it follows that $\omega_{J,S\cup S'}^{T\cup T'}(\textbf{t}_{-(T\cup T')}) =\alpha$.
\item Suppose that we have $\max(\omega_{J,\emptyset}^T(\textbf{v}_{-T}), \omega_{J,\emptyset}^{T'}(\textbf{w}_{-T'})) \leq \alpha \leq \min(\omega_{J,S}^T(\textbf{v}_{-T}), \omega_{J,S}^{T'}(\textbf{w}_{-T'}))$. Any changes to the grades for $J$ does not affect the phantom-mappings. For all $i\in S \cup S'$ we take $t_i(J) = \alpha$ and for all others we take $t_i(J) = \beta \leq \alpha$. We have $\varphi(\textbf{v})(J)=\varphi(\textbf{w})(J)=\alpha$, therefore by OC we have that $\varphi(\textbf{t})(J)=\alpha$. Suppose we could take $\beta < \alpha$.
For any $U \not \subseteq S \cup S'$ we have that $\min \{t_i(J) : i \in U\} = \beta$, as such:

\begin{align*}
    \alpha=\varphi(\textbf{t})(J)= & \max_{U \subseteq T \cup T'}\min \{t_i(J) : i \in U\} \cup \{\omega_{J,U}^{T \cup T'}(\textbf{t}_{-(T\cup T')}) \} \\
    = & \max\{\min \{\alpha,\omega_{S \cup S'}^{T \cup T'}(\textbf{t}_{-(T\cup T')}) \},\omega_{J,\emptyset}^{T \cup T'}(\textbf{t}_{-(T\cup T')})\} \\
\end{align*}

Since $\omega_{J,\emptyset}^{T \cup T'}(\textbf{t}_{-(T\cup T')})$ was not selected we therefore have that $\omega_{J,\emptyset}^{T \cup T'}(\textbf{t}_{-(T\cup T')})\leq \alpha$. As such:

\[\varphi(\textbf{v})(J) =\min \{\alpha,\omega_{S \cup S'}^{T \cup T'}(\textbf{t}_{-(T\cup T')}) \}\]

Therefore we have $\alpha \leq \omega_{S \cup S'}^{T \cup T'}(\textbf{t}_{-(T\cup T')})$. 

If it was impossible to take $\beta < \alpha$, then we have $\alpha=\inf\mathcal{A}$. Since we take the $\max$ is the characterization we have $\omega_{J,\emptyset}^{T\cup T'}(\textbf{t}_{-(T\cup T')}) \leq \alpha$. By the lemma \ref{useful lemma 1} we have $\alpha \leq \omega_{S \cup S'}^{T \cup T'}(\textbf{t}_{-(T\cup T')})$.
\item Suppose that we have $\max(\omega_{J,S}^T(\textbf{v}_{-T}), \omega_{J,S'}^{T'}(\textbf{w}_{-T'})) \leq \alpha \leq \min(\omega_{J,T}^T(\textbf{v}_{-T}), \omega_{J,T}^{T'}(\textbf{w}_{-T'})) $. Any changes to the grades for $J$ does not affect the phantom-mappings. For all $i\not \in S \cup S'$ we take $t_i(J) = \alpha$ and for all $i\in S \cup S'$ we take $t_i(J) = \beta \geq \alpha$. We have $\varphi(\textbf{v})(J)=\varphi(\textbf{w})(J) = \alpha$. Therefore by IC we get that $\varphi(\textbf{t})(J) =\alpha$. Suppose that we can find $\beta > \alpha$.
If we had $\omega_{S \cup S'}^{T \cup T'}(\textbf{t}_{-(T \cup T')}) > \alpha$ then we would have $\min \{t_i(J) : i \in S \cup S'\} \cup \{\omega_{J,S \cup S'}^{T \cup T'}(\textbf{t}_{-(T \cup T')}) \} > \varphi(\textbf{t})(J)$. This is absurd, as such $\omega_{J,S \cup S'}^{T \cup T'}(\textbf{t}_{-(T \cup T')}) \leq \alpha$. If $\omega_{J,T \cup T'}^{T \cup T'}(\textbf{t}_{-(T \cup T')}) < \alpha$ then for all 
$U$ we have $\omega_{J,U}^{T \cup T'}(\textbf{t}_{-(T \cup T')}) < \alpha$ therefore we would have $\varphi(\textbf{v})(J) < \alpha$. This is absurd, as such $\alpha \leq \omega_{J,S \cup S'}^{T \cup T'}(\textbf{t}_{-(T \cup T')})$.

If we could not find $\beta > \alpha$ then we have that $\alpha=\sup\mathcal{A}$. Due to the lemma \ref{useful lemma 1} If we have $\omega_{J,S \cup S'}^{T \cup T'}(\textbf{t}_{-(T \cup T')}) > \alpha$ then we have $\omega_{J,\emptyset}^{T \cup T'}(\textbf{t}_{-(T \cup T')}) > \alpha$ we therefore contradict $\varphi(\textbf{t})(J)=\alpha$. Similarly $\omega_{J,T\cup T'}^{T \cup T'}(\textbf{t}_{-(T \cup T')}) < \alpha$ implies that $\varphi(\textbf{t})(J)<\alpha$. As such $\omega_{J,S \cup S'}^{T \cup T'}(\textbf{t}_{-(T \cup T')}) \leq \alpha \leq \omega_{J,T\cup T'}^{T \cup T'}(\textbf{t}_{-(T \cup T')})$.
\end{enumerate}

$\Leftarrow:$ Let us suppose that we have $\varphi$ that verifies the characterization. Suppose that $\varphi(\textbf{v})(J)=\varphi(\textbf{w})(J)$:

\begin{align*}
    \varphi(\textbf{t})(J) = & 
    \max_{S_v \subseteq \mathcal{D''}_J(\textbf{v}), S_w \subseteq \mathcal{D''}_J(\textbf{w})} \min \{t_i(J) : i \in S_v \cup S_w\} \cup \{\omega_{J,S_v \cup S_w}^{\mathcal{D''}_J(\textbf{t})}(\textbf{t}_{-\mathcal{D''}_J(\textbf{z)}}) \} \\
    = & \max_{S_v \subseteq \mathcal{D''}_J(\textbf{v}), S_w \subseteq \mathcal{D''}_J(\textbf{w})} \min \{v_i(J) : i \in S_v\} \cup \{w_i(J) : i \cup S_w\} \cup \{\omega_{J,S_v \cup S_w}^{\mathcal{D''}_J(\textbf{t})}(\textbf{t}_{-\mathcal{D''}_J(\textbf{z)}}) \} \\
    \geq & \min \{v_i(J) : i \in S\} \cup \{w_i(J) : i \cup S'\} \cup \{\omega_{J,S \cup S}^{\mathcal{D''}_J(\textbf{t})}(\textbf{t}_{-\mathcal{D''}_J(\textbf{z)}}) \} \\
    \geq & \min \{v_i(J) : i \in S\} \cup \{w_i(J) : i \cup S'\} \cup \{\min(\omega_{J,S}^{\mathcal{D''}_J(\textbf{v})}(\textbf{v}_{-\mathcal{D''}_J(\textbf{v})}),\omega_{J,S'}^{\mathcal{D''}_J(\textbf{w})}(\textbf{w}_{-\mathcal{D''}_J(\textbf{w)}}))\} \\
    \geq & \varphi(\textbf{v}))(J)
\end{align*}

\begin{align*}
    \varphi(\textbf{t})(J) = & 
    \max_{S_v \subseteq \mathcal{D''}_J(\textbf{v}), S_w \subseteq \mathcal{D''}_J(\textbf{w})} \min \{t_i(J) : i \in S_v \cup S_w\} \cup \{\omega_{J,S_v \cup S_w}^{\mathcal{D''}_J(\textbf{t})}(\textbf{t}_{-\mathcal{D''}_J(\textbf{z)}}) \} \\
    \leq & \max_{S_v \subseteq \mathcal{D''}_J(\textbf{v}), S_w \subseteq \mathcal{D''}_J(\textbf{w})} \min \{t_i(J) : i \in S_v \cup S_w\} \cup \{\max(\omega_{J,S_v}^{\mathcal{D''}_J(\textbf{v})}(\textbf{v}_{-\mathcal{D''}_J(\textbf{v})}),\omega_{J,S_w}^{\mathcal{D''}_J(\textbf{w})}(\textbf{w}_{-\mathcal{D''}_J(\textbf{w)}}))\} \\
    \leq & \max_{S_v \subseteq \mathcal{D''}_J(\textbf{v})} \min \{v_i(J) : i \in S_v\} \cup \{\max(\omega_{J,S_v}^{\mathcal{D''}_J(\textbf{v})}(\textbf{v}_{-\mathcal{D''}_J(\textbf{v})}),\omega_{J,S_w}^{\mathcal{D''}_J(\textbf{w})}(\textbf{w}_{-\mathcal{D''}_J(\textbf{w)}}))\} \\
    \varphi(\textbf{t})(J) \leq & \max_{S_w \subseteq \mathcal{D''}_J(\textbf{w})} \min \{w_i(J) : i \in S_w\} \cup \{\max(\omega_{J,S_v}^{\mathcal{D''}_J(\textbf{v})}(\textbf{v}_{-\mathcal{D''}_J(\textbf{v})}),\omega_{J,S_w}^{\mathcal{D''}_J(\textbf{w})}(\textbf{w}_{-\mathcal{D''}_J(\textbf{w)}}))\}
\end{align*}

Therefore $\varphi(\textbf{t})(J)=\varphi(\textbf{v})(J)$.
\end{proof}

\subsection{Proof for OC proposition \ref{OC prop}}\label{OC prop proof}

\begin{proof}
$\Rightarrow:$
Let us consider a partition $\mathcal{N}_1 \cup \mathcal{N}_2 =\mathcal{N}$ of the set of voters. For any voting profile $\textbf{t}$, let $\textbf{v}=\textbf{t}_{-(\mathcal{N}_2)}$ and $\textbf{w}=\textbf{t}_{-(\mathcal{N}_1)}$. We define $k_1=\#(\textbf{v}(J) \cup \mathcal{F}(\textbf{v}))$ and $k_2=\#(\textbf{w}(J) \cup \mathcal{F}(\textbf{w}))$.

We know that the $g_J(k_1)$-th and $g_J(k_2)$-th smallest members the voting pools $(\textbf{v}(J) \cup \mathcal{F}(\textbf{v}))$ and $(\textbf{w}(J) \cup \mathcal{F}(\textbf{w}))$ respectively have the same value. $g_J$ has no way on ensuring that any other element of these voting pools have the same value. As such when we merge the voting pools we must select one of those 2 elements. It follows that $g_J(k_1+k_2) \in \{g_J(k_1) + g_J(k_2)-1,g_J(k_1) + g_J(k_2)\}$.

$\Leftarrow:$

Suppose that for all $p_1, p_2$ we have $g_J(p_1 + p_2) \in \{g_J(p_1) +g_J(p_2)-1,g_J(p_1) +g_J(p_2)\}$. Then it is immediate that when we merge two voter pools that provided the same outcome, we obtain said outcome.

 \end{proof}

\subsection{Proof for IC theorem \ref{IC theo}}\label{IC theo proof}

\begin{proof}
$\Rightarrow:$ 

\begin{enumerate}
\item If we have $\omega_{J,S}^T(\textbf{v}_{-T})= \omega_{J,S}^{T'}(\textbf{w}_{-T'})$ then for an appropriate choice of $\textbf{z}(J)$ values we get $\alpha = \varphi(\textbf{v})(J)=\omega_{J,S}^T(\textbf{v}_{-T})= \omega_{J,S}^{T'}(\textbf{w}_{-T'})=\varphi(\textbf{w})(J)$ that does not belong to $\textbf{z}(J)$ unless $\alpha \in \{\inf\mathcal{A},\sup\mathcal{A}\}$. Therefore by IC we get $\varphi(\textbf{z})(J)=\alpha$, it follows that $\omega_{J,S\cup S'}^{T\cup T'}(\textbf{z}_{-(T\cup T')}) =\alpha$.
    \item Suppose that we have $\max(\omega_{J,\emptyset}^T(\textbf{v}_{-T}), \omega_{J,\emptyset}^{T'}(\textbf{w}_{-T'})) \leq \alpha \leq \min(\omega_{J,S}^T(\textbf{v}_{-T}), \omega_{J,S}^{T'}(\textbf{w}_{-T'}))$. Any changes to the grades for $J$ does not affect the phantom-mappings. For all $i\in S \cup S'$ we take $t_i(J) = \alpha$ and for all others we take $t_i(J) = \beta \leq \alpha$. We have $\varphi(\textbf{v})(J)=\varphi(\textbf{w})(J)=\alpha$, therefore by (IC) we have that $\varphi(\textbf{z})(J)=\alpha$ Suppose we could take $\beta < \alpha$.
For any $U \not \subseteq S \cup S'$ we have that $\min \{z_i(J) : i \in U\} = \beta$, as such:

\begin{align*}
    \alpha=\varphi(\textbf{z})(J)= & \max_{U \subseteq T \cup T'}\min \{z_i(J) : i \in U\} \cup \{\omega_{J,U}^{T \cup T'}(\textbf{z}_{-(T\cup T')}) \} \\
    = & \max\{\min \{\alpha,\omega_{S \cup S'}^{T \cup T'}(\textbf{z}_{-(T\cup T')}) \},\omega_{J,\emptyset}^{T \cup T'}(\textbf{z}_{-(T\cup T')})\} \\
\end{align*}

Since $\omega_{J,\emptyset}^{T \cup T'}$ was not selected we therefore have that $\omega_{J,\emptyset}^{T \cup T'}\leq \alpha$. As such:

\[\varphi(\textbf{v})(J) =\min \{\alpha,\omega_{S \cup S'}^{T \cup T'}(\textbf{z}_{-(T\cup T')}) \}\]

Therefore we have $\alpha \leq \omega_{S \cup S'}^{T \cup T'}(\textbf{z}_{-(T\cup T')})$. 

If it was impossible to take $\beta < \alpha$, then we have $\alpha=\inf\mathcal{A}$. Since we take the $\max$ is the characterization we have $\omega_{J,\emptyset}^{T\cup T'}(\textbf{z}_{-(T\cup T')}) \leq \alpha$. Then by the lemma \ref{useful lemma 1}, we have $\alpha \leq \omega_{S \cup S'}^{T \cup T'}(\textbf{z}_{-(T\cup T')})$.
\item Suppose that we have $\max(\omega_{J,S}^T(\textbf{v}_{-T}), \omega_{J,S'}^{T'}(\textbf{w}_{-T'})) \leq \alpha \leq \min(\omega_{J,T}^T(\textbf{v}_{-T}), \omega_{J,T}^{T'}(\textbf{w}_{-T'})) $. Any changes to the grades for $J$ does not affect the phantom-mappings. For all $i\not \in S \cup S'$ we take $z_i(J) = \alpha$ and for all $i\in S \cup S'$ we take $z_i(J) = \beta \geq \alpha$. We have $\varphi(\textbf{v})(J)=\varphi(\textbf{w})(J) = \alpha$. Therefore by IC we get that $\varphi(\textbf{z})(J) =\alpha$. Suppose that we can find $\beta > \alpha$.
If we had $\omega_{S \cup S'}^{T \cup T'}(\textbf{z}_{-(T \cup T')}) > \alpha$ then we would have $\min \{z_i(J) : i \in S \cup S'\} \cup \{\omega_{J,S \cup S'}^{T \cup T'}(\textbf{z}_{-(T \cup T')}) \} > \varphi(\textbf{z})(J)$. This is absurd, as such $\omega_{J,S \cup S'}^{T \cup T'}(\textbf{z}_{-(T \cup T')}) \leq \alpha$. If $\omega_{J,T \cup T'}^{T \cup T'}(\textbf{z}_{-(T \cup T')}) < \alpha$ then for all 
$U$ we have $\omega_{J,U}^{T \cup T'}(\textbf{z}_{-(T \cup T')}) < \alpha$ therefore we would have $\varphi(\textbf{v})(J) < \alpha$. This is absurd, as such $\alpha \leq \omega_{J,S \cup S'}^{T \cup T'}(\textbf{z}_{-(T \cup T')})$.

If we could not find $\beta > \alpha$ then we have that $\alpha=\sup\mathcal{A}$. According to the lemma \ref{useful lemma 1}, if $\omega_{J,S \cup S'}^{T \cup T'}(\textbf{z}_{-(T \cup T')}) > \alpha$ then $\omega_{J,\emptyset}^{T \cup T'}(\textbf{z}_{-(T \cup T')}) > \alpha$. We therefore contradict $\varphi(\textbf{z})(J)=\alpha$. Similarly $\omega_{J,T\cup T'}^{T \cup T'}(\textbf{z}_{-(T \cup T')}) < \alpha$ implies that $\varphi(\textbf{z})(J)<\alpha$. As such $\omega_{J,S \cup S'}^{T \cup T'}(\textbf{z}_{-(T \cup T')}) \leq \alpha \leq \omega_{J,T\cup T'}^{T \cup T'}(\textbf{z}_{-(T \cup T')})$.
\end{enumerate}

$\Leftarrow:$ Let us suppose that we have $\varphi$ that verifies the characterization. Suppose that $\varphi(\textbf{v})(J)=\varphi(\textbf{w})(J)$:

\begin{align*}
    \varphi(\textbf{z})(J) = & 
    \max_{S_v \subseteq \mathcal{D''}_J(\textbf{v}), S_w \subseteq \mathcal{D''}_J(\textbf{w})} \min \{z_i(J) : i \in S_v \cup S_w\} \cup \{\omega_{J,S_v \cup S_w}^{\mathcal{D''}_J(\textbf{z})}(\textbf{z}_{-\mathcal{D''}_J(\textbf{z)}}) \} \\
    = & \max_{S_v \subseteq \mathcal{D''}_J(\textbf{v}), S_w \subseteq \mathcal{D''}_J(\textbf{w})} \min \{v_i(J) : i \in S_v\} \cup \{w_i(J) : i \cup S_w\} \cup \{\omega_{J,S_v \cup S_w}^{\mathcal{D''}_J(\textbf{z})}(\textbf{z}_{-\mathcal{D''}_J(\textbf{z)}}) \} \\
    \geq & \min \{v_i(J) : i \in S\} \cup \{w_i(J) : i \cup S'\} \cup \{\omega_{J,S \cup S}^{\mathcal{D''}_J(\textbf{z})}(\textbf{z}_{-\mathcal{D''}_J(\textbf{z)}}) \} \\
    \geq & \min \{v_i(J) : i \in S\} \cup \{w_i(J) : i \cup S'\} \cup \{\min(\omega_{J,S}^{\mathcal{D''}_J(\textbf{v})}(\textbf{v}_{-\mathcal{D''}_J(\textbf{v})}),\omega_{J,S'}^{\mathcal{D''}_J(\textbf{w})}(\textbf{w}_{-\mathcal{D''}_J(\textbf{w)}}))\} \\
    \geq & \varphi(\textbf{v}))(J)
\end{align*}

\begin{align*}
    \varphi(\textbf{z})(J) = & 
    \max_{S_v \subseteq \mathcal{D''}_J(\textbf{v}), S_w \subseteq \mathcal{D''}_J(\textbf{w})} \min \{z_i(J) : i \in S_v \cup S_w\} \cup \{\omega_{J,S_v \cup S_w}^{\mathcal{D''}_J(\textbf{z})}(\textbf{z}_{-\mathcal{D''}_J(\textbf{z)}}) \} \\
    \leq & \max_{S_v \subseteq \mathcal{D''}_J(\textbf{v}), S_w \subseteq \mathcal{D''}_J(\textbf{w})} \min \{z_i(J) : i \in S_v \cup S_w\} \cup \{\max(\omega_{J,S_v}^{\mathcal{D''}_J(\textbf{v})}(\textbf{v}_{-\mathcal{D''}_J(\textbf{v})}),\omega_{J,S_w}^{\mathcal{D''}_J(\textbf{w})}(\textbf{w}_{-\mathcal{D''}_J(\textbf{w)}}))\} \\
    \leq & \max_{S_v \subseteq \mathcal{D''}_J(\textbf{v})} \min \{v_i(J) : i \in S_v\} \cup \{\max(\omega_{J,S_v}^{\mathcal{D''}_J(\textbf{v})}(\textbf{v}_{-\mathcal{D''}_J(\textbf{v})}),\omega_{J,S_w}^{\mathcal{D''}_J(\textbf{w})}(\textbf{w}_{-\mathcal{D''}_J(\textbf{w)}}))\} \\
    \varphi(\textbf{z})(J) \leq & \max_{S_w \subseteq \mathcal{D''}_J(\textbf{w})} \min \{w_i(J) : i \in S_w\} \cup \{\max(\omega_{J,S_v}^{\mathcal{D''}_J(\textbf{v})}(\textbf{v}_{-\mathcal{D''}_J(\textbf{v})}),\omega_{J,S_w}^{\mathcal{D''}_J(\textbf{w})}(\textbf{w}_{-\mathcal{D''}_J(\textbf{w)}}))\}
\end{align*}

Therefore $\varphi(\textbf{z})(J)=\varphi(\textbf{v})(J)$.
\end{proof}

\section{Section Ranking}
\subsection{Proof for determinism property \ref{R_J deter}}

\begin{proof}\label{R_J deter proof}
Let $P_J^\psi:\mathcal{B}^n \rightarrow \mathcal{B}^n$ be function that turns a multi-set $\mathcal{V}$ in an ordered-set defined as:

 \begin{enumerate}
     \item $P_J^{\psi} := []$; $S = \mathcal{V}$;
     \item If $S = \emptyset$; Return $P_J^{\psi}$, END \label{P_G start loop}
     \item Else $\alpha :=\mu_{g(\#S)}(S)$ $P_J^{\psi} := (R_J^{\psi},\alpha)$; $S = S-\alpha$;
     \item Go back to (\ref{P_G start loop})
\end{enumerate}

By definition of a multi-set we have that $P_J^{\psi}$ is well defined.

Let us now show that for we have:
\[\forall \textbf{v}, R_J^{\psi}(\textbf{v})=P_J^{\psi}(\textbf{v}(J) \cup \mathcal{F}_J(\textbf{v})).\]

Recall that $\psi$ is defined as $\mu_{g(\# \textbf{v}(J) \cup \mathcal{F}_J(\textbf{v}))}(\textbf{v}(J) \cup \mathcal{F}_J(\textbf{v}))$.

It is therefore sufficient to show that for any $i$ we have that if $\textbf{w}=\textbf{v}[\forall J, v_i(J) = \emptyset]$ then if $v_i(J) \in \mathcal{A}$ then $\textbf{w}(J) \cup \mathcal{F}_J(\textbf{w}) = (\textbf{v}(J) \cup \mathcal{F}_J(\textbf{v})) - \{v_i(J)\}$ and if $v_i(J) \not \in \mathcal{A}$ we have $\textbf{w}(J) \cup \mathcal{F}_J(\textbf{w}) = (\textbf{v}(J) \cup \mathcal{F}_J(\textbf{v})) - \{f_{i,J}(J)\}$.

\begin{itemize}
    \item If $v_i(J) \in \mathcal{A}$ then since $f_{i,J}(v_i)[\forall J, v_i(J) = \emptyset] =\emptyset)$ (by definition of the proxy functions) and since all the other proxy functions do not depend on $v_i$ we have $\mathcal{F}_J(\textbf{v}) = \mathcal{F}_J(\textbf{w})$. $\textbf{v}(J) = \textbf{w}(J) \cup \{v_i(J)\}$. Therefore we have the equality.
    \item If $f_{i,J}(v_i(J)) =\emptyset$ then $f_{i,J}(v_i(J)) = f_{i,J}(v_i(J)[\forall J, v_i(J) = \emptyset])$. Since none of the other proxy functions depend on $v_i$ we have $\mathcal{F}_J(\textbf{v}) = \mathcal{F}_J(\textbf{w}) - \{f_{i,J}(v_i)\}$. We also have $\textbf{v}(J) = \textbf{w}(J)$.
    \item If $f_{i,J}(v_i(J)) \in \mathcal{B}$ then since none of the other proxy function depend on $v_i$ we have $\mathcal{F}_J(\textbf{v}) = \mathcal{F}_J(\textbf{w}) - \{f_{i,J}(v_i)\}$. We also have $\textbf{v}(J) = \textbf{w}(J)$.
\end{itemize}

We therefore have the desired equality. This ends our proof.
\end{proof}

\subsection{Proof for total ranking \ref{ranking corol}}\label{ranking corol proof}

\begin{proof}
    Because $\psi$ is fair (F) we have that any 2 candidates that provide the same voting pool obtain the same ranking of their elements. If 2 voting pools then an ordering of their elements must also be different. The fact that the lexicographic order is a total order provides the rest. 
\end{proof}

\subsection{Proof for SP in the new space \ref{newSP theo}}\label{newSP theo proof}

\begin{proof}
    Suppose that we have $(v_i(J)) <_\psi R_J^\psi(\textbf{v})$ (with $v_i(J) \in \mathcal{A}$). Let $w_i$ be such $w_i(J) \in \mathcal{A}$.
    Let $\textbf{w}=\textbf{v}[v_i:=w_i]$.

    We have $\textbf{w}(J) = (\textbf{v}(J) \cup \{w_i(J)\}) - \{v_i(J))$ and $\mathcal{F}_J(\textbf{v}) = \mathcal{F}_J(\textbf{w})$.

    Let $k$ be the first value such that $R_J^(\textbf{v})(k) \neq R_J(\textbf{w})(J)$. 

    This is therefore the first time the loop could not choose the same element in the multi-set. As such since for all other voters the vote that represents them in the two voting pools is the same this is also the first time we cannot remove the same voter in both instances.

    Since $R_J^\psi$ is defined iterative we can therefore assume $k=1$ without loss of generality. It follows that we are comparing $\psi(\textbf{v})$ and $\psi(\textbf{w})$. Since $\psi$ is SP $\psi(\textbf{v}) \leq  \psi(\textbf{w})$ therefore we have $R_J^\psi(\textbf{v}) < R_J^\psi(\textbf{v})$. 
    
    This concludes the proof. (The case $v_i(J) > R_J^\psi(\textbf{v})$ is symmetrical).

\end{proof}

\subsection{Proof for the duplicate proposition \ref{duplicate prop}}\label{duplicate proof}

\begin{proof}
    Let $\psi$ be a phantom-proxy mechanism that verifies (OC). For any $\textbf{v}$ and $\textbf{w}$ such that $\textbf{w}$ is obtained by duplicating $\textbf{v}$ $k$ times ($k \geq 1$). Suppose that our proposition is false. That is to say that there are two voters $i$ and $j$ such that for $\textbf{v}$ the vote (or proxy-vote) of $i$ was selected by $R$ before the vote of $j$ and for $\textbf{w}$ the first time we selected a proxy for $j$ was before the first time we selected a proxy for $i$. Wlog we will assume that $j$ is the first voter whose duplicate got selected early and $i$ the first voter whose first duplicate got selected late. 
    
    Wlog we will assume that all votes are different. (If not we are careful in our order of selection of duplicate when we have the choice). 
    
    Let $\tilde{v_i(J)}$ (resp $\tilde{v_j(J)})$) be the vote that represents $i$ (resp $j$) in $\textbf{v}$. Let $n$ be the number of elements left in the voting pool when $g$ selects $\tilde{v_i(J)}$. Let $g(n) = r$.
    \begin{itemize}
        \item If $\tilde{v_i(J)}$ was selected as an element that is lower than $\tilde{v_j(J)}$. Let $T$ be the remaining voting pool when $\tilde{v_i(J)}$ was selected by $\textbf{v}$. Let $S \subset T$ be the remaining votes in the voting pool for $\textbf{v}$ when $\tilde{v_i(J)}$ was selected that were considered lower than $\tilde{v_i(J)}$. Let us now consider the moment a duplicate of $v_j(J)$ was selected. Any duplicates of an element of $S$ was never selected. There are $rk$ such duplicates. All of these elements are considered smaller than $\tilde{v_j(J)}$. There are also $k'\geq 0$ elements that are not a duplicate of an element of $T$ remaining in the voting pool of $\textbf{w}$. These elements are also considered smaller than $\tilde{v_j(J)}$. We therefore have $g(nk + k') = rk + k' +1$.
        By (OC) we have $g(nk + k') \leq kg(n) + k' =kr +k'$. We have reached a contradiction.
        \item If $\tilde{v_i(J)}$ was selected as an element that is greater than the $\tilde{v_j(J)}$. We have that $g(n) = g(n-1) + 1 =r$. Let $T$ be the remaining voting pool when $\tilde{v_i(J)}$ was selected by $\textbf{v}$. Let $S \subset T$ be the remaining votes in the voting pool for $\textbf{v}$ when $\tilde{v_i(J)}$ was selected that were considered greater than $\tilde{v_i(J)}$. Let us now consider the moment a duplicate of $v_j(J)$ was selected. Any duplicates of an element of $S$ was never selected. There are $(n-r+1)k$ such duplicates. All of these elements are considered greater than $\tilde{v_j(J)}$. There are also $k'\geq 0$ elements that are not a duplicate of an element of $T$ remaining in the voting pool of $\textbf{w}$. These elements are also considered greater than $\tilde{v_j(J)}$. We therefore have $g(nk + k') = k(r-1)$.
        By (OC) we have $g(nk + k') \geq g(kn) + g(k') \geq kg(n) -(k-1) +g(k') \geq rk -(k-1) = r(k-1) +1 $. We have reached a contradiction.
    \end{itemize}
    
    This concludes our proof.
\end{proof}






\subsection{Proof for the corollary \ref{duplicate corol}} \label{duplicate corol proof}

\begin{proof}
    Since $\psi$ is fair we have the same $g$ for $J$ and $I$. As such is we order our votes so the $k$-th smallest vote for $I$ correspond to the same voter as the $k$-th smallest vote for $J$. For all voters $i$, we have that each time a duplicate of $i$ is chosen in the voting pool for $I$, a duplicate for $i$ is selected for $J$. It follows that the order in which we discover a duplicate belonging to a new voter for the first time determines the ranking of $I$ compared to $J$. The property \ref{duplicate prop} for (OC) phantom-proxy mechanisms therefore concludes the proof.
\end{proof}






\section{Pareto optimality}\label{Pareto}

\begin{definition}[Pareto Optimal :OP]
    A voting method $\varphi: \mathcal{O}^{\mathcal{M} \times \mathcal{N}} \rightarrow {(\mathcal{B} \cup \{\emptyset\})}^\mathcal{M}$ is Pareto Efficient if there does not exist an outcome $\alpha \in \mathcal{B}$ such that for at least one voter $i$ that gave a grade $v_i(J)$ we have $\alpha$ is closer to $v_i(J)$ than $\varphi(\textbf{v})(J)$ and for no voters $j\in \mathcal{D''}_J$, $\alpha$ is further away from $v_j(J)$ than $\varphi(\textbf{v})(J)$.
\end{definition}

\begin{theorem}[Pareto Optimal characterization]
    A voting method $\varphi: \mathcal{O}^{\mathcal{M} \times \mathcal{N}} \rightarrow {(\mathcal{B} \cup \{\emptyset\})}^\mathcal{M}$ is Pareto Optimal in regards to candidates iff it is unanimous. 
\end{theorem}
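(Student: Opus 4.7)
The plan is to prove the two directions separately. The forward direction (PO $\Rightarrow$ U) is essentially immediate: if every voter in $\mathcal{D''}_J$ submitted the same grade $\alpha$ and at least one such voter exists, then any output $\beta \neq \alpha$ would be strictly Pareto-dominated by $\alpha$ itself, since every grader has distance $0$ to $\alpha$ and positive distance $|\alpha - \beta|$ to $\beta$. This contradicts PO, forcing $\varphi(\textbf{v})(J) = \alpha$, which is exactly axiom (U).

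For the reverse direction (U $\Rightarrow$ PO), I would first reduce PO to a geometric inclusion. In the one-dimensional outcome space $\mathcal{B}$ with single-peaked preferences given by $\lvert \cdot - v_i(J) \rvert$, the Pareto-optimal set is exactly the interval $[m, M]$, where $m = \min_{i \in \mathcal{D''}_J} v_i(J)$ and $M = \max_{i \in \mathcal{D''}_J} v_i(J)$: any $\beta \notin [m, M]$ is strictly dominated by the nearer endpoint, and any $\beta \in [m, M]$ cannot be strictly improved since moving in either direction worsens the distance for either the voter with grade $m$ or the voter with grade $M$. Hence it suffices to show that a unanimous $\varphi$ always outputs a grade in $[m, M]$ whenever $\mathcal{D''}_J \neq \emptyset$.

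The main obstacle, and the reason this direction must be read in the (SP) context that surrounds the appendix, is that axiom (U) in isolation does not pin down the output on non-unanimous profiles (a rule that outputs a fixed constant unless all graders agree would be unanimous yet easily produce values outside $[m, M]$). I would therefore invoke the ambient strategy-proofness: by Theorem \ref{SP theo}, one may write
\[
\varphi(\textbf{v})(J) \;=\; \max_{S \subseteq \mathcal{D''}_J} \min\bigl(\{v_i(J) : i \in S\} \cup \{\omega_{J,S}^{\mathcal{D''}_J}(\textbf{v}_{-\mathcal{D''}_J})\}\bigr),
\]
and by Theorem \ref{U theo} the extreme phantom-mappings satisfy $\omega_{J,\emptyset}^{\mathcal{D''}_J} \leq \inf \mathcal{A} \leq m$ and $\omega_{J,\mathcal{D''}_J}^{\mathcal{D''}_J} \geq \sup \mathcal{A} \geq M$. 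Plugging in $S = \mathcal{D''}_J$ shows the outer max is at least $m$, while for every $S$ the inner min is at most $M$ (either because $S \neq \emptyset$ contains some $v_i(J) \leq M$, or because $S = \emptyset$ and the sole term $\omega_{J,\emptyset}^{\mathcal{D''}_J} \leq \inf \mathcal{A} \leq M$). Hence $\varphi(\textbf{v})(J) \in [m, M]$, which by the reduction above gives PO, completing the argument.
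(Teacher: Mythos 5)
Your proof is correct and follows essentially the same route as the paper's: the forward direction is the same immediate domination argument, and the reverse direction likewise reduces PO to showing $\varphi(\textbf{v})(J)\in[m,M]$ and then bounds the maxmin formula using $\omega_{J,\mathcal{D''}_J}^{\mathcal{D''}_J}\geq\sup\mathcal{A}$ (for the lower bound via $S=\mathcal{D''}_J$) and $\omega_{J,\emptyset}^{\mathcal{D''}_J}\leq\inf\mathcal{A}$ (for the upper bound). Your explicit remark that the reverse direction genuinely requires the ambient (SP) hypothesis is a useful clarification of what the paper leaves implicit in the theorem statement, but it does not change the argument.
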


\begin{proof}
    $\Rightarrow:$Suppose that we are Pareto Optimal in regards to candidates than if all voters gave the same grade $\alpha$ by Pareto Optimality the outcome must be $\alpha$. The method is therefore unanimous.

    $\Leftarrow:$Suppose that we are unanimous then let $i$ be the voter that gave $J$ the smallest grade $v_i(J) = \alpha$ and $j$ the voter that gave $J$ the largest grade $v_j(J)=\beta$. 
    Let $S$ be such that $\omega_{J,S}^{\mathcal{D''}_J}(\textbf{v}_{-\mathcal{D''}_J})$ is greater or equal to $v_i(J)$. 
    \[\min \{ v_k(J) : k \in S\} \cup \{\omega_{J,S}^{\mathcal{D''}_J}(\textbf{v}_{-\mathcal{D''}_J}) \} \geq v_i(J)\]
    Therefore we have that $\varphi(\textbf{v})(J) \geq v_i(J)$.
    When $ S \neq \emptyset$, we have that 
    \[\min \{ v_k(J) : k \in S\} \leq v_j(J)\]
    and $\omega_{J,\emptyset}^{\mathcal{D''}_J} \leq v_j(J)$ therefore $\varphi(\textbf{v})(J) \leq v_i(J)$.
    As such any alternative outcome will result in $i$ or $j$ getting further away from the outcome. We are Pareto optimal
\end{proof}

\section{Reinforcing absentees}
Let us now suggest a small modification that reinforces our concept of what absentee votes represent.

Ranking by using $\psi$ is considered fair because all voters are represented at most once in the voting pool and $\mu_{g(\#\mathcal{V}(J))}$ does not distinguish between elements of the voting pool. As such the mechanism is still considered fair if an element is added to the voting pool that represents a voter that was not yet represented in the voting pool.

The spirit of the SC rule is that if voter $i\in\mathcal{N}$ abstained then he might as well have voted for the outcome $\alpha$. When considering SI, phantom-proxy methods $\psi$ functions, if a voter abstained then that voter does not have a proxy vote. As such we remain fair in our model if we add an element worth $\alpha$ representing $i$ to the voting pool. This late addition to the voting pool represents voter $i$ consenting with the outcome. We call these votes the \textbf{absentee votes}. When $\mathcal{A}=\mathcal{B}$, we can represent this when ordering our candidates by using the S-grading range $P_J^\psi$ instead of the grading range to $R_J^\psi$ where $P_J^\psi$ is defined as:

 \begin{enumerate}
     \item $P_J^{\psi} := []$; $S =\emptyset$;
     \item $\textbf{w} := \textbf{v}[v_i(J):= \psi(\textbf{v}) : \forall i, v_i(J) =\circ]$;
     \item If $\psi(\textbf{w}[\forall J,\forall i \in S, w_i(J) = \emptyset])(J) = \emptyset$; Return $P_J^{\psi}$, END \label{T_G start loop}
     \item Else $\alpha :=\psi(\textbf{w}[\forall J,\forall i \in S, w_i(J) = \emptyset])(J)$ $P_J^{\psi} := (P_J^{\psi},\alpha)$;
     \item Find $i \in \mathcal{N}-S$ such that $w_i(J)=\alpha$ or $f_{i,J}(w_i) = \alpha$; $S: = S \cup \{i\}.$
     \item Go back to (\ref{T_G start loop})
\end{enumerate}

\begin{proposition}
    If $\forall i,v_i(J) \neq \circ$ then $P_J^\psi(\textbf{v}) = R_J^\psi(\textbf{v})$.
\end{proposition}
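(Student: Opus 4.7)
The plan is to observe that the only difference between the two algorithms defining $R_J^\psi$ and $P_J^\psi$ lies in the preliminary substitution step (step 2 of $P_J^\psi$), which replaces every absentee vote $v_i(J)=\circ$ by the outcome $\psi(\textbf{v})(J)$, producing the modified profile $\textbf{w}$. Once this substitution is performed, both algorithms run the identical iterative loop: they compute $\psi$ on the current profile restricted by the already-selected set $S$, append the resulting grade to the output sequence, identify a voter whose (possibly proxy) vote equals that grade, add it to $S$, and repeat until $\psi$ returns $\emptyset$.

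First I would argue that under the hypothesis $\forall i, v_i(J)\neq\circ$, the set of voters to be modified in step 2 is empty. Hence the substitution $\textbf{w}:=\textbf{v}[v_i(J):=\psi(\textbf{v})(J):\forall i,v_i(J)=\circ]$ yields $\textbf{w}=\textbf{v}$. Consequently, the loop of $P_J^\psi$ starting from step 3 operates on exactly the same input profile as the loop of $R_J^\psi$ starting from step 2.

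Next I would argue by induction on the loop iteration that at every step, the accumulated set $S$ and the sequence built so far coincide in the two algorithms. At iteration $0$ both sets are $\emptyset$ and both sequences are empty. Assuming they agree after iteration $k$, the quantities $\psi(\textbf{v}[\forall J,\forall i\in S,v_i(J)=\emptyset])(J)$ and $\psi(\textbf{w}[\forall J,\forall i\in S,w_i(J)=\emptyset])(J)$ are equal because $\textbf{v}=\textbf{w}$, so either both algorithms terminate and return identical ranges, or they append the same grade $\alpha$ and, since the same voter can be chosen in both (using Proposition~\ref{R_J deter} to see that $R_J^\psi$ is well defined regardless of the tie-breaking choice of $i$, and making the same choice in $P_J^\psi$), they update $S$ identically.

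There is essentially no obstacle; the only subtlety is ensuring that the freedom in choosing a voter $i\in\mathcal{N}-S$ with $v_i(J)=\alpha$ or $f_{i,J}(v_i)=\alpha$ does not cause the two ranges to diverge, but this is precisely what Proposition~\ref{R_J deter} (determinism of the voting range) guarantees, so the two algorithms produce the same output and $P_J^\psi(\textbf{v})=R_J^\psi(\textbf{v})$.
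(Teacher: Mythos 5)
Your proof is correct: the paper states this proposition without proof, treating it as immediate, and your argument supplies exactly the obvious justification — when no voter abstains for $J$ the substitution step yields $\textbf{w}=\textbf{v}$, so the two algorithms coincide, with Proposition~\ref{R_J deter} covering the non-determinism in the choice of $i$. Nothing further is needed.
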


As such the S-grading range $P_J^\psi$ is an extension of the grading range $R_J^\psi$ that strengthens the meaning of an absentee vote.

\begin{proposition}
    If $R_J^\psi(\textbf{v}) < \psi(\textbf{v})(J)$ then we have $R_J^\psi(\textbf{v}) < P_J^\psi(\textbf{v}) < \psi(\textbf{v})(J)$ and if $R_J^\psi(\textbf{v}) > \psi(\textbf{v})(J)$ then we have $R_J^\psi(\textbf{v}) > P_J^\psi(\textbf{v}) > \psi(\textbf{v})(J)$ 
\end{proposition}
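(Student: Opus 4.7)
Write $\alpha := \psi(\textbf{v})(J)$ and $k := \#\{i\in\mathcal{N} : v_i(J) = \circ\}$, and let $\textbf{w}$ be as in step 2 of the $P_J^\psi$ algorithm, so that $P_J^\psi(\textbf{v}) = R_J^\psi(\textbf{w})$. I will work in the natural setting where $\psi$ also satisfies (SI) and (SC); then the voting pool for $J$ under $\textbf{w}$ equals $\mathcal{V}(J)$ augmented with exactly $k$ additional copies of $\alpha$, since absentees now contribute $\alpha$ while their proxies vanish by (SI). The heart of the plan is the structural identity
\[
P_J^\psi(\textbf{v}) \;=\; \bigl(\underbrace{\alpha,\dots,\alpha}_{k+1 \text{ copies}},\,R_J^\psi(\textbf{v})[2],\,R_J^\psi(\textbf{v})[3],\dots\bigr),
\]
expressing that replacing absentees by their consent votes simply inserts $k$ extra copies of the outcome $\alpha$ right after the first entry of the grading range. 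Here I implicitly assume $k\geq 1$, since the case $k=0$ is the preceding proposition.

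To prove the identity I would run the $P_J^\psi$ recursion on $\textbf{w}$, using Proposition \ref{R_J deter} to freely choose which voter to retire at each step. By induction on $j=0,\ldots,k-1$, the invariant is that at step $j+1$ the remaining pool is $\mathcal{V}(J)$ together with $k-j$ extra copies of $\alpha$ and its $g_J$-median equals $\alpha$; one may then retire an absentee-consent voter. The median claim reduces to pool bookkeeping: by Proposition \ref{SC prop}, Silent Consent gives $g_J(p+1)\in\{g_J(p),g_J(p)+1\}$ for every $p$, so inserting or removing a single $\alpha$ shifts the pool size and the target median index in lockstep, and the target index always lies inside the contiguous $\alpha$-block of the sorted pool. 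After $k$ steps the absentees are exhausted, the pool equals $\mathcal{V}(J)$, and from step $k+1$ onward the recursion coincides (with a consistent choice of retiree) with the full $R_J^\psi(\textbf{v})$ run, producing the tail $R_J^\psi(\textbf{v})[1],R_J^\psi(\textbf{v})[2],\dots$; combined with the initial $k$ $\alpha$'s this yields the displayed identity.

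Given the identity, the two chained inequalities follow lexicographically. Reading $\psi(\textbf{v})(J)$ as the constant sequence $(\alpha,\alpha,\dots)$, the hypothesis $R_J^\psi(\textbf{v})<\psi(\textbf{v})(J)$ means that the least index $m$ at which $R_J^\psi(\textbf{v})$ differs from $\alpha$ satisfies $m\geq 2$ and $R_J^\psi(\textbf{v})[m]<\alpha$. By the identity, $P_J^\psi(\textbf{v})[i]=\alpha$ for every $i\leq m+k-1$, while $P_J^\psi(\textbf{v})[m+k]=R_J^\psi(\textbf{v})[m]<\alpha$, yielding $P_J^\psi(\textbf{v})<\psi(\textbf{v})(J)$. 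At the earlier index $m$, one has $R_J^\psi(\textbf{v})[m]<\alpha=P_J^\psi(\textbf{v})[m]$ while all earlier entries agree, giving $R_J^\psi(\textbf{v})<P_J^\psi(\textbf{v})$. The case $R_J^\psi(\textbf{v})>\psi(\textbf{v})(J)$ is entirely symmetric, exchanging every strict inequality direction.

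The principal difficulty is the inductive pool bookkeeping: one must track how the $\alpha$-block in the sorted pool and the $g_J$-index evolve after each insertion and each retirement of an absentee-consent vote, and verify that the target index never leaves the block. Once this is done, the rest of the argument is a straightforward lex comparison, and everything reduces to the scalar one-step inequality $g_J(p+1)-g_J(p)\in\{0,1\}$ supplied by Silent Consent.
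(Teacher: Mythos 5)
The paper states this proposition with no proof at all, so there is nothing to compare your argument against; what you have written would in fact fill that gap, and it is essentially correct. Your structural identity $P_J^\psi(\textbf{v}) = (\alpha,\dots,\alpha,R_J^\psi(\textbf{v})[2],R_J^\psi(\textbf{v})[3],\dots)$ with $k+1$ leading copies of $\alpha=\psi(\textbf{v})(J)$ is the right skeleton, and the pool bookkeeping goes through: the $g_J(n)$-th smallest element of $\mathcal{V}(J)$ is $\alpha$, inserting $k-j$ copies of $\alpha$ extends the $\alpha$-block to cover positions $g_J(n)$ through $g_J(n)+(k-j)$ of the sorted pool, and iterating $g_J(p+1)\in\{g_J(p),g_J(p)+1\}$ keeps the target index $g_J(n+k-j)$ inside that block; the well-definedness of the range then lets you retire the consent voters first and splice in the untouched $R_J^\psi(\textbf{v})$ recursion, after which the lexicographic comparison is exactly as you compute it. Two things you do deserve emphasis because the paper's bare statement hides them. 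First, the hypotheses (SI), (SC) and $k\geq 1$ are not cosmetic: without the one-step condition on $g_J$ the claim is simply false (take $\mathcal{V}(J)=\{2,3\}$ with $g_J(2)=1$, so $\alpha=2$ and $R_J^\psi=(2,3)>\alpha$, and one absentee with $g_J(3)=3$; then $P_J^\psi$ begins with $3$ and exceeds $R_J^\psi$ at the first coordinate, violating $R_J^\psi>P_J^\psi$), and with $k=0$ the strict inequalities degenerate since $P_J^\psi=R_J^\psi$. In the ambient ranking context (OC, F, $\mathcal{A}=\mathcal{B}$) these conditions are available, since (OC) implies (P) and hence (SC) for phantom-proxy mechanisms, but a careful statement should record them. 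Second, your identity actually proves slightly more than asked: it locates exactly where $P_J^\psi$ first departs from the constant sequence $\alpha$, namely $k$ positions later than $R_J^\psi$ does, which is the precise sense in which the absentee consent votes ``strengthen the outcome.''
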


As such we have that the absentee voter is strengthening the outcome. The previous method has be summarized for a finite number of candidates in the algorithm \ref{algo ranking}.

\begin{algorithm}
\caption{The Ranking algorithm}\label{algo ranking}
\KwData{$\textbf{v}$ and $\psi$}
\KwResult{A ranking of the candidates}
\For{$J \in \mathcal{M}$}{$\mathcal{V}(J) \gets \textbf{v}(J) \cup \mathcal{F}_J(\textbf{v})$\;
\If{We wish to verify the reinforced absentee property}{\For{$i \in \{i : v_i(J) = \circ\}$}{$\mathcal{V}(J) \gets \mathcal{V}(J) \cup \psi(\textbf{v})(J)$\;}}
\If{$\mathcal{V}(J)=\emptyset$}{REMOVE $J$ from the ranking process.}
}
$k \gets \Pi_{J \in \mathcal{M}}\#\mathcal{V}(J)$\;
\For{$J \in \mathcal{M}$}{$\mathcal{V}(J) \gets$ $\dfrac{k}{\#\mathcal{V}(J)} \mathcal{V}(J)$\;
$R_J \gets []$\;
\While{$\mathcal{V}(J) \neq \emptyset$}{$R_J \gets (R_J,\mu_{g(\#\mathcal{V}(J))}(\mathcal{V}(J))$\;
$\mathcal{V}(J) \gets \mathcal{V}(J) - \mu_{g(\#\mathcal{V}(J))(\mathcal{V}(J))}$\;
}
}
Rank the candidates according to the $R_J$ ranking.

\end{algorithm}

\section{end}
\end{document}